\newcommand{\pd}{\partial}
\newcommand{\bC}{{\mathbb C}}
\newcommand{\bt}{{\mathbf t}}
\newcommand{\bZ}{{\mathbb Z}}
\newcommand{\cM}{{\mathcal M}}
\newcommand{\half}{\frac{1}{2}}
\newcommand{\Mbar}{\overline{\cM}}
\DeclareMathOperator{\Aut}{Aut}
 \DeclareMathOperator{\tr}{tr}
 \DeclareMathOperator{\val}{val}
\newtheorem{thm}{Theorem}[section]
\newtheorem{theorem/definition}{Theorem/Definition}[section]
\newtheorem{prop}{Proposition}[section]
\theoremstyle{remark}
\newtheorem{remark}{Remark}[section]
\theoremstyle{definition}
\newcommand{\be}{\begin{equation}}
\newcommand{\ee}{\end{equation}}
\newcommand{\bea}{\begin{eqnarray}}
\newcommand{\ben}{\begin{eqnarray*}}
\newcommand{\een}{\end{eqnarray*}}
\newcommand{\eea}{\end{eqnarray}}
\newcommand{\bet}{\begin{equation}
\begin{split}}
\newcommand{\eet}{\end{split}
\end{equation}}
\definecolor{yellow}{rgb}{1,1,0}
\definecolor{orange}{rgb}{1,.7,0}
\definecolor{red}{rgb}{1,0,0} \definecolor{green}{rgb}{0,1,1}
\definecolor{white}{rgb}{1,1,1}
\definecolor{A}{rgb}{.75,1,.75}
\newcommand{\corr}[1]{\langle {#1} \rangle}
\begin{document}

\title
{Fat and Thin Emergent Geometries  of Hermitian One-Matrix Models}

\author{Jian Zhou}
\address{Department of Mathematical Sciences\\
Tsinghua University\\Beijing, 100084, China}
\email{jzhou@math.tsinghua.edu.cn}

\begin{abstract}
We use genus zero free energy functions
of Hermitian matrix models to define spectral curves and their special deformations.
They are special plane curves defined by formal power
series with integral coefficients generalizing the Catalan numbers.
This is done in two different versions, depending on two different genus expansions,
and these two versions are in some sense dual to each other.
\end{abstract}
\maketitle

\section{Introduction}

Studies of Hermitian one-matrix models,
based on Gaussian type integrals on the spaces
of Hermitian $N\times N$-matrices,
are often  focused on their limiting behaviors
as $N \to \infty$ in the literature.
In the early days of string theory,
with the applications of fat graphs introduced by 't Hooft \cite{tH},
a more sophisticated limit called the double scaling limit
played an important role in relating matrix models
to a wide range of objects interesting to mathematicians and mathematical physicists,
including orthogonal polynomials, integrable hierarchies, Virasoro constraints,
etc.
See e.g. \cite{DGZ} for a survey.

Since its introduction by Euler,
graphs have been widely studied by mathematicians,
especially in combinatorics,
and their studies form an important part of discrete mathematics.
They have also been widely used in quantum field theories
to formulate Feynman rules for Feynman sums.
Of course graph theory are developed by mathematicians and physicists
for different purposes and by different techniques.
So it is reasonable to expect that techniques developed
in quantum field theories should be useful
to combinatorial problems related to the enumerations of graphs,
and vice versa.
See e.g. \cite{BIZ} for an exposition of quantum field techniques
in graph enumerations.

By contrast,
fat-graphs, even though they are also  of combinatorial and physical origins,
have  more direct geometrical connections.
They have also long been studied by combinatorists \cite{Tutte},
even before their appearance in the physics literature.
In the combinatorics literature they have been called maps, or
graphs on Riemann surfaces.
However the  significance of such geometric connections
seemed to be only realized after
the advent of string theory in connection with 2D quantum gravity.
They were used by Harer-Zagier \cite{Har-Zag} and Penner \cite{Penner}
to calculate the orbifold Euler characteristics of the moduli spaces
of Riemann surfaces $\cM_{g,n}$,
making a surprising connection to special values of Riemann zeta-function.
(Of course, another connection between random matrices and Riemann zeta-function
was made by Montgomery and Dyson even earlier \cite{Montgomery}.
See \cite{BKSWZ} for an exposition.)
Through the work of Witten \cite{Witten1, Witten2},
matrix models are related to intersection theory on the Deligne-Mumford
compactifications $\Mbar_{g,n}$ and its generalizations to include the spin structures,
and connections to the KdV hierarchies and Virasoro constraints,
first discovered in the setting of Hermitian matrix models,
was made by Witten for the geometric theories related to $\Mbar_{g,n}$
and its generalizations.
See the references in Witten \cite{Witten1, Witten2} for a guide to the literature on this
line of developments.
Kontsevich \cite{Kon} introduced a new kind of matrix models to prove Witten's Conjecture.

The generalizations of Witten Conjcture/Kontsevich Theorem to relate
other Gromov-Witten type geometric theories to integrable hierarchies
and to establish their infinite-dimensional algebraic constraints
have been one of the driving forces in the mathematical theory of string theory.
Another driving force is the study of mirror symmetry,
among which the local mirror symmetry of toric Calabi-Yau 3-folds
has been studied via matrix models.
The mirror geometry of a toric Calabi-Yau 3-fold
can be encoded in a plane curve.
Dijkgraaf and Vafa \cite{Dij-Vaf1, Dij-Vaf2}
identify this curve with the spectral curve of a suitable matrix model.

It is interesting to note that the notion of a spectral curve
also appear in at least two other settings.
One is in the setting of integrable hierarchies (see e.g. \cite{Mulase}),
the other is in the setting of Eynard-Orantin topological recursions
\cite{EO} which itself grows out of the theory of random matrices.
So far,
both Witten Conjecture/Kontsevich Theorem and local mirror symmetry
have been studied from the following three points of view:
matrix models, integrable hierarchies, and EO topological recursions.
Spectral curves have played an important role
in all these different approaches.

Borrowing terminology from statistical physics,
we say that the appearance of spectral curves in matrix models are emergent.
By emergent geometry we mean the geometric structures that appear
when one has an infinite  degree of freedom.
The spectral curve of matrix models reflects the collective behavior of the spectrum
of the $N\times N$-Hermitian matrix as $N$ goes to infinity.
In the literature,
one introduces a discrete resolvent $\omega_N$
for $N \times N$-Hermitian matrix and assume it tends to a differentiable function
$\omega(z)$ as $n \to \infty$.
This function satisfies a quadratic equation,
hence defining a hyperelliptic plane curves.
This is how spectral curves arise in matrix models.
See e.g. \cite[p. 47]{DiF}.

A natural question one can ask is whether the emergence of spectral curves  can only
occur when one takes the limit $N\to \infty$,
or it can also occur for finite $N$.
This is a legitimate question because for any theory with infinite degree of freedom
one can ask this type of questions,
and for each finite $N$,
the Hermitian one-matrix model based on $N \times N$-matrices
can be regarded as a formal quantum field theory
whose operator algebra is the algebra of symmetric functions
\cite[\S 2]{Zhou-Mat-Mod},
and so it has an infinite degree of freedom.

Another motivation for asking this question comes from our study
of the Witten Conjecture/Kontseivich Theorem from the point of view of
emergent geometry.
In an earlier work \cite{Zhou-WK} we have shown that
Virasoro constraints of \cite{DVV} satisfied by the Witten-Kontsevich $\tau$-function
is equivalent to EO toplogical recursion on the Airy curve.
This result was later understood from two different perspectives.
First in \cite{Zhou-Quan-Def},
this was interpreted as a geometric reformulation  of the Virasoro constraints
by introducing the notion of special deformations and quantization of
special deformations of the Airy curve.
Next, after the author became aware of the notion of emergence
through the contacts with some physicists working on
condense matter physics, he started to use the terminology of emergent
geometry in \cite{Zhou-Emergent}.
In that work,
the emergent geometry of Witten-Kontsevich partition function was cast
in a more general setting of emergent geometry of KP hierarchy
based on boson-fermion correspondence,
and in particular a formula for the bosonic $n$-point function
associated to  $\tau$-function of the KP hierarchy was derived.
See \cite{Zhou-KP-Emergent} for further developments.

We start to address this question in \cite{Zhou-1D}.
Instead of letting $N \to \infty$ as in the literature,
we go to the other extreme and let $N =1$.
The resulting theory is called the theory of topological 1D gravity
or the polymer model.
The partition function of this theory
is just a universal generating series that enumerates all possible graphs,
but we study it by quantum field theory techniques,
including Feynman sums, Feynman rules, Virasoro constraints
and KP hierarchy.
Furthermore,
we introduce an analogue of Wilson's theory of renormalization
in its study.
An interesting consequence is that we can reduce the problem of enumeration of
graphs with fixed number of loops to the problem
of determining finitely many correlators
in this theory.
Another result obtained in this work is that
even for $N=1$,
one can talk about the emergent geometry of spectral curves.
In this case,
we introduce a spectral curve which we call the signed Catalan curve
since it is related to Catalan numbers up to signs.
We also introduce its special deformation and its quantization.
In a more recent development,
we unify this theory with the theory of holomorphic anomaly equations
\cite{BCOV} via the diagrammatics of Deligne-Mumford compactification \cite{Wang-Zhou}.

Next, we move on the case of finite $N \geq 1$.
Since it is known that each of the partition function $Z_N$
of the Hermitian $N\times N$-matrix model is a $\tau$-function of
the KP hierarchy \cite{Shaw-Tu-Yen},
we can apply all the results on emergent geometry we have developed for KP hierarchy.
We can also try to apply the insight we gain from \cite{Zhou-1D}.
This is exactly what we do in \cite{Zhou-Mat-Mod}, \cite{Zhou-Fat-Thin} and this paper.
First of all,
we identify the element in the Sato grassmannian corresponding to $Z_N$
and obtain a formula for the associated bosonic $n$-point functions in \cite{Zhou-Mat-Mod}.
This was achieved by the following techniques well-known in the literature:
First, the correlators are expressed as an enumeration problem of fat graphs;
next, this is converted to a problem in 2D Yang-Mills theory with finite gauge groups
\cite{Dij-Wit},
and solved with the help of the representation theory of symmetric groups;
finally,
with the help of the theory of symmetric functions,
one can reformulate the result in terms of dimension formula for
irreducible representations of $U(N)$.
This version of Schur-Weyl duality goes in the reverse direction
to the duality discovered by Gross and Taylor \cite{Gro-Tay},
which goes from large $N$ 2D $U(N)$-Yang-Mills theory
to Hurwitz numbers of branched coverings which corresponds to 2D Yang-Mills theories
with symmetric groups as gauge groups.

Recall we are focusing on matrix models with finite $N$.
Once we identify the element in the Sato grassmannian that corresponds to
$Z_N$,
we can treat $N$ as a parameter that can take any real or complex values
to get a family of tau-functions of the KP hierarchy.
This point of view is particularly powerful when we combine
it with the introduction of 't Hooft coupling constant
which we do in the end of \cite{Zhou-Mat-Mod}.
This leads to a different genus expansion in Hermitian matrix models.
In \cite{Zhou-Fat-Thin}
these two different expansions are called the fat expansion and
the thin expansion respectively and studied using
the results from \cite{Zhou-1D}.
In particular,
we apply the idea of renormalization of coupling constants
combined with Virasoro constraints
to study the structures of thin and fat free energy functions.

As noted in \cite{Zhou-Fat-Thin},
the distinction of fat and thin genus expansions leads to
the fat and thin versions of Virasoro constraints.
These are the point of departure for this paper.
Starting from \cite{Zhou-WK} we have seen that Virasoro constraints
in genus zero is equivalent to the emergence of spectral curve.
This idea was applied to the Hermitian matrix model with $N=1$
in \cite{Zhou-1D}.
So we have already known that the emergence of spectral curve
can occur also for finite $N$.
However,
the spectral curve discovered in \cite{Zhou-1D}  is not the spectral curve
that leads to the Wigner semicircle law
discovered in the limit of $N \to \infty$.
So for a long time our idea of using emergent geometry in the case of finite $N$
to study the emergent geometry for $N \to \infty$
has seemed to be only partially successful.
Even though we were able to develop emergent geometry for finite $N$
as in \cite{Zhou-1D},
the result was definitely different from the emergent geometry
for $N \to \infty$ in the literature.
So a puzzle arises as to unify these two different kinds of emergent geometry.
The solution to this problem naturally emerges
after we distinguish for matrix models
fat and thin correlators in \cite{Zhou-Mat-Mod},
and consequently distinguish fat and thin versions of free energy functions
and  their corresponding versions of Virasoro constraints
in \cite{Zhou-Fat-Thin}.
We will show that fat and thin free energy functions
in genus zero lead to different emergent geometries,
to be referred to as the fat and thin emergent geometries respectively.
Surprisingly,
they are even in some sense dual to each other.
But the point is that one of them (the fat one)
matches with the emergent geometry
in the $N\to \infty$ case in the literature,
the other (the thin one) generalizes the emergent geometry
of the case of $N=1$.
The thin spectral curve and the thin special deformation
turn out to be a deformation of the spectral curve and its
special deformation studied in \cite{Zhou-1D},
with $N$ as a parameter.
It is then natural that we can apply almost all results
in that paper to this situation.
For the fat spectral curve and the fat special deformation,
even though we work with finite $N$,
all the major techniques developed in the $N \to \infty$
setting turn out work as well.
So we have a perfect scenario here:
the thin emergent geometry is compatible with the $N=1$ case
and the fat emergent geometry is compatible with the $N \to \infty$ case.

Combinatorist  may have some interests in this work.
The thin special deformation is given by a formal power series
with {\em integral} coefficients,
and it is defined using enumerations of (thin) graphs.
The fat special deformation is also given by a formal power series
with integral coefficients,
but it is defined using enumerations of fat graphs.
They both lead to sophisticated generalizations of Catalan numbers.
It is a very remarkable fact discovered in this paper that the enumeration problems
of two different types of graphs can be unified
through the theory of matrix models.
In the concrete examples
to be presented below,
numerous integer sequences
on Sloane's Encyclopedia of Integer Sequences \cite{Sloane}
appear.
They include: A001764, A002293, A002294, A002295, A002296, A007556, A062994, A062744, A230388,
A104978, A002005, A001791, A002006-A002010, A000168, A085614, A000309.
Some of them even appear in more than one place.
They have various enumerative meanings.
Our work is in the spirit of \cite{BIZ}
in the sense that we want to apply string theoretical
idea to the combinatorial problem of enumerations of graphs,
with the difference being that we emphasize the possibility to
treat both fat and thin graphs simultaneously with the same techniques.

We have divided the rest of the paper into four Sections.
In Section \ref{sec:Thin} and Section \ref{sec:Fat}
we discuss the thin and fat versions of emergent geometries
of Hermitian one-matrix models respectively.
And in Section \ref{sec:Special}
we prove a very special property of fat special deformation as plane curves.
In these three Sections we have included numerous concrete
examples, they are not only used to illustrate the ideas,
but also to present some results that have their own interests.
For example,
we show that the emergence of spectral curves actually provide
a highly nontrivial trick to calculate some special correlators
indirectly using the Virasoro constraints.
The examples correspond to various combinatorial problems
considered in the literature separately.
It is interesting to see how they form special cases
of a unified picture.
We make some concluding remarks in Section \ref{sec:Conclusion}.

\section{Thin Emergent Geometry of Hermitian One-Matrix Models}
\label{sec:Thin}

In this Section we introduce the thin spectral curve and the thin special
deformation.
We also discuss various concrete examples.

\subsection{Virasoro constraints for thin genus expansion}
The thin Virasoro constraints are given by the following operators 
(cf. \cite[\S 3.4]{Zhou-Fat-Thin}):
\ben
&& L_{-1,N} = - \frac{\pd}{\pd g_1}
+ \sum_{n \geq 1} ng_{n+1} \frac{\pd}{\pd g_n} + Ng_1g_s^{-1}, \\
&& L_{0,N} = - 2\frac{\pd}{\pd g_2} + \sum_{n \geq 1} ng_n \frac{\pd}{\pd g_n}
+ N^2, \\
&& L_{1,N} = - 3\frac{\pd}{\pd g_3} + \sum_{n \geq 1} (n+1)g_n \frac{\pd}{\pd g_{n+1}}
+ 2Ng_s\frac{\pd}{\pd g_1}, \\
&& L_{m,N} = \sum_{k \geq 1} (k+m) (g_k - \delta_{k,2} ) \frac{\pd}{\pd g_{k+m}} +
g_s^2 \sum_{k=1}^{m-1} k(m-k)\frac{\pd}{\pd g_k} \frac{\pd}{\pd g_{m-k}} \\
&& \qquad\qquad + 2 Nm g_s \frac{\pd}{\pd g_m}, \qquad m \geq 2.
\een
In genus zero£¬
one has following equations for $F_{0,N}$:
\ben
&& \frac{\pd F_{0,N}}{\pd g_1}
= \sum_{n \geq 1} g_{n+1} \cdot n\frac{\pd F_{0,N}}{\pd g_n} + Ng_1, \\
&& (m+2)\frac{\pd F_{0,N}}{\pd g_{m+2}} = \sum_{k \geq 1}
g_k \cdot (k+m) \frac{\pd F_{0,N}}{\pd g_{k+m}}, \qquad m \geq 0.
\een
Therefore, if we set
\be
f_n =n\frac{\pd F_{0,N}}{\pd g_n}\biggl|_{g_j =0, j \geq 2},
\ee
then we have the following recursion relations and initial value£º
\begin{align}
f_1 & = N g_1, &
f_{n+1} & =  g_1f_{n}, \quad n \geq 1.
\end{align}
So we have
\be
n\frac{\pd F_{0,N}}{\pd g_n}\biggl|_{g_j =0, j \geq 2} = f_n = N g_1^n.
\ee

\subsection{Emergence of the signed Catalan curve}

If one introduce the following field  $y$ as a generating series of
the $f_n$'s:
\be \label{eqn:Thin-g1}
y = -\frac{z}{\sqrt{2}} + \frac{g_1}{\sqrt{2}} + \frac{\sqrt{2}N}{z} +\sqrt{2} \sum_{n \geq 1} \frac{f_n}{z^{n+1}},
\ee
then from the above formula for $f_n$ we have
\be \label{eqn:Thin-def-g1}
y = -\frac{z-g_1}{\sqrt{2}} + \sqrt{2}N \sum_{n\geq 0} \frac{g_1^n}{z^{n+1}}
= -\frac{z-g_1}{\sqrt{2}} + \frac{\sqrt{2}N}{z-g_1}.
\ee
This is a deformation of the following curve:
\be \label{eqn:Spec-Thin}
y = - \frac{1}{\sqrt{2}} z + \frac{\sqrt{2}N}{z},
\ee
which we call the {\em signed Catalan curve} in \cite{Zhou-1D}
in the case of $N=1$.
Note
\be \label{eqn:Signed-Catalan}
\frac{z}{\sqrt{2}} = \frac{-y+\sqrt{y^2+4N}}{2}
= \sum_{n=0}^\infty (-1)^nN^{n+1} \cdot \frac{1}{n+1}\binom{2n}{n} y^{-2n-1}.
\ee
The coefficients of the series on the right-hand side are Catalan numbers $\frac{1}{n+1}\binom{2n}{n}$
with signs $(-1)^n$,
and an extra factor $N^{n+1}$.

One can rewrite \eqref{eqn:Spec-Thin} and \eqref{eqn:Thin-def-g1}
in the form of algebraic curves as follows:
\bea
&& z^2 +\sqrt{2}y \cdot z + 2N = 0, \label{eqn:Spec-Thin2}\\
&& (z-g_1)^2 +\sqrt{2}y \cdot (z-g_1) + 2N = 0. \label{eqn:Thin-def-g12}
\eea

\subsection{Special deformations of the thin spectral curves}

We will refer to the curve in \eqref{eqn:Spec-Thin}
the {\em thin spectral curve} of the Hermitian matrix models.

The deformation of \eqref{eqn:Spec-Thin} in \eqref{eqn:Thin-def-g1}
defined by \eqref{eqn:Thin-g1} is just an example of special deformation deformation
defined by the field that appears in the derivation of Virasoro constraints.
See e.g. \cite[\S 3.2]{Zhou-Fat-Thin}.
Consider the field
\bea
\Phi_N(z) & = &  \frac{1}{\sqrt{2}} \sum_{n\geq 1} \tilde{T}_nz^n
- \sqrt{2}\tr \log \biggl( \frac{1}{z - M} \biggr) \\
& = & \frac{1}{\sqrt{2}} \sum_{n\geq 1} \tilde{T}_nz^n
+ \sqrt{2} N \log z
- \sqrt{2} \sum_{n\geq 1} \frac{z^{-n}}{n} \frac{\pd}{\pd T_n},
\eea
where $T_n = \frac{g_n}{n}= \frac{t_{n-1}}{n!}$, $\tilde{T}_n = \frac{g_n-\delta_{n,2}}{n}$,
and its derivative:
\be \label{def:y}
y :=  \frac{1}{\sqrt{2}} \sum_{n=1}^\infty n T_n z^{n-1}
+ \frac{\sqrt{2}N}{z} + \sqrt{2}
\sum_{n = 1}^\infty \frac{1}{z^{n+1}} \frac{\pd F_{0,N}}{\pd T_n}.
\ee
In \cite[\S ]{Zhou-Fat-Thin}
we have shown that
\be
F_{0,N} = N\cdot F_0^{1D},
\ee
where $F_0^{1D}$ is the genus zero free energy function of the topological 1D gravity
studied in \cite{Zhou-1D}.
The free energy $F^{1D}$ has extremely nice properties:
It satisfies the {\em flow equation} \cite[(262)]{Zhou-1D}
which in genus zero gives:
\be
\frac{\pd F_0^{1D}}{\pd t_m} = \frac{1}{(m+1)!} \biggl(\frac{\pd F_0^{1D}}{\pd t_0}\biggr)^{m+1}.
\ee
It also satisfies the {\em polymer equation} \cite[(153)]{Zhou-1D}
which in genus zero gives \cite[(160)]{Zhou-1D}:
\be \label{eqn:Polymer-Genus-0}
\frac{\pd F_0^{1D}}{\pd t_0}
= \sum_{n \geq 0} \frac{t_n}{n!} \biggl(\frac{\pd F_0^{1D}}{\pd t_0}\biggr)^n.
\ee
The solution of this equation is given in \cite[Theorem 5.3]{Zhou-1D} by:
\be
\frac{\pd F_0^{1D}}{\pd t_0} = I_0£¬
\ee
where $I_0$ is defined by:
\be \label{eqn:Xinfinity}
I_0 = \sum_{k=1}^\infty \frac{1}{k}
\sum_{p_1 + \cdots + p_k = k-1} \frac{t_{p_1}}{p_1!} \cdots
\frac{t_{p_k}}{p_k!}.
\ee

By combining the above results we then get the following

\begin{thm}
If the field $y$ is defined by \eqref{def:y},
then it is given by:
\be
y = \frac{1}{\sqrt{2}} \sum_{n=0}^\infty \frac{t_n -\delta_{n,1}}{n!} z^n
+ \frac{\sqrt{2}N}{z - I_0},
\ee
or equivalently,
it satisfies the following equation:
\be \label{eqn:Thin-Special-Def}
\sqrt{2} y(z - I_0) = (z - I_0) \sum_{n=0}^\infty \frac{t_n -\delta_{n,1}}{n!} z^n
+ 2N.
\ee
\end{thm}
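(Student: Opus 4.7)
The plan is to reduce the double sum in \eqref{def:y} to closed form by exploiting two ingredients stated just above the theorem: the identification $F_{0,N} = N\cdot F_0^{1D}$ with the genus-zero topological 1D free energy, together with the flow equation and the polymer equation for $F_0^{1D}$. Once both are applied, the negative-power tail in $y$ collapses to a geometric series in $I_0/z$, and the polynomial part is a direct change of variable from $T_n$ to $t_{n-1}$.

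First I would translate the derivatives between the two conventions. Since $T_n = t_{n-1}/n!$, the chain rule gives $\pd/\pd T_n = n!\,\pd/\pd t_{n-1}$, and therefore
\[
\frac{\pd F_{0,N}}{\pd T_n} \;=\; N\cdot n!\cdot \frac{\pd F_0^{1D}}{\pd t_{n-1}}.
\]
Applying the flow equation with $m = n-1$ and using the solution $\pd F_0^{1D}/\pd t_0 = I_0$ of the polymer equation produces the clean identity
\[
\frac{\pd F_{0,N}}{\pd T_n} \;=\; N\cdot n!\cdot \frac{I_0^{\,n}}{n!} \;=\; N\,I_0^{\,n}.
\]
This is the one nontrivial computation in the argument; everything else is algebraic manipulation.

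Second I would substitute this closed form into the last sum of \eqref{def:y} and recognize a geometric series in the formal variable $I_0/z$:
\[
\frac{\sqrt{2}\,N}{z} + \sqrt{2}\sum_{n\geq 1}\frac{N\,I_0^{\,n}}{z^{n+1}}
\;=\; \frac{\sqrt{2}\,N}{z}\sum_{n\geq 0}\!\Bigl(\frac{I_0}{z}\Bigr)^{n}
\;=\; \frac{\sqrt{2}\,N}{z - I_0},
\]
interpreted as a formal Laurent expansion at $z = \infty$. For the polynomial part, the relation $n T_n = t_{n-1}/(n-1)!$ (combined with the dilaton shift $\tilde T_n = T_n - \delta_{n,2}/n$ built into the field $\Phi_N$, which produces the $-\delta_{n,1}$ correction when one re-indexes $k = n-1$) converts $\frac{1}{\sqrt{2}}\sum_{n\geq 1} n\tilde T_n z^{n-1}$ to $\frac{1}{\sqrt{2}}\sum_{n\geq 0}\frac{t_n - \delta_{n,1}}{n!}\,z^{n}$. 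Summing these two contributions gives the first displayed formula of the theorem, and clearing the denominator by multiplying through by $z - I_0$ yields \eqref{eqn:Thin-Special-Def}.

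The proof presents no real conceptual obstacle once the two $F_0^{1D}$ equations have been invoked; the delicate point is purely notational, namely keeping the three equivalent coupling conventions $g_n$, $T_n = g_n/n$, and $t_{n-1} = n!\,T_n$ in sync, and tracking which sums naturally involve the shifted variable $\tilde T_n$ versus $T_n$. Conceptually, the content of the theorem is that the flow equation for $F_0^{1D}$ encodes precisely the simple pole $\sqrt{2}N/(z - I_0)$ that appears in the closed form of the special deformation field $y$, generalizing the observation \eqref{eqn:Thin-def-g1} from the one-variable truncation to the full infinite family of couplings.
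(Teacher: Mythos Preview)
Your proposal is correct and follows exactly the approach the paper itself indicates: the paper states the ingredients $F_{0,N}=N\cdot F_0^{1D}$, the flow equation, and $\partial F_0^{1D}/\partial t_0=I_0$, and then simply writes ``By combining the above results we then get the following,'' leaving the details implicit. Your argument---computing $\partial F_{0,N}/\partial T_n = N I_0^{\,n}$ via the flow equation and summing the resulting geometric series---is precisely the intended filling-in of those details, including your correct handling of the $\tilde T_n$ versus $T_n$ shift in the polynomial part.
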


We will refer to either of the two equations above as the {\em thin
special deformation} of \eqref{eqn:Spec-Thin} or \eqref{eqn:Spec-Thin2}.

\subsection{Characterization of the thin special deformation}

The following results are the generalizations of Theorem 10.1,
Theorem 10.2 and Theorem 10.3 in \cite{Zhou-1D} respectively,
which are the corresponding $N=1$ cases.
The proofs are exactly the same and so omitted.

\begin{thm} \label{thm:Thin-Existence}
Consider the following series:
\be
y = \frac{1}{\sqrt{2}} \sum_{n=0}^\infty \frac{t_n -\delta_{n,1}}{n!} z^n
+ \frac{\sqrt{2}N}{z} + \sqrt{2}
\sum_{n = 1}^\infty \frac{n!}{z^{n+1}} \frac{\pd F_{0,N}}{\pd t_{n-1}}.
\ee
Then one has:
\be
\half (y^2)_- =  \biggl(\sum_{n = 1}^\infty \frac{n!}{z^{n+1}}
\frac{\pd F_0}{\pd t_{n-1}}\biggr)^2.
\ee
Here for a formal series $\sum_{n \in \bZ} a_n f^n$,
\be
(\sum_{n \in \bZ} a_n f^n)_+ = \sum_{n \geq 0} a_n f^n, \;\;\;\;
(\sum_{n \in \bZ} a_n f^n)_ - = \sum_{n < 0} a_n f^n.
\ee
\end{thm}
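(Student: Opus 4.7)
The plan is to derive the identity from the closed-form description of $y$ furnished by the preceding theorem. By \eqref{eqn:Thin-Special-Def} we may write $y = A + B$, where
$$A = \frac{1}{\sqrt{2}} P(z), \qquad P(z) := \sum_{n \geq 0} \frac{t_n - \delta_{n,1}}{n!} z^n, \qquad B = \frac{\sqrt{2}\,N}{z - I_0}.$$
Here $A$ involves only nonnegative powers of $z$, while $B$ expanded at $z = \infty$ lies in $z^{-1}\bC[[t]][[z^{-1}]]$. Hence $(A^2)_- = 0$ and $(B^2)_- = B^2$, so the negative part of $y^2$ reduces to
$$(y^2)_- = (2AB)_- + B^2.$$

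Next I would show that the cross-term contributes nothing: $(AB)_- = 0$. Writing $AB = N\, P(z)/(z - I_0)$ and using the Taylor-style decomposition
$$\frac{P(z)}{z - I_0} = \frac{P(z) - P(I_0)}{z - I_0} + \frac{P(I_0)}{z - I_0},$$
the first summand is polynomial in $z$ with coefficients in $\bC[[t]]$ and contributes nothing to the negative part, whereas the second summand already lies entirely in the negative part. Hence $(AB)_- = N\,P(I_0)/(z-I_0)$.

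The crux of the argument is the identity $P(I_0) = 0$. This is an immediate restatement of the polymer equation \eqref{eqn:Polymer-Genus-0}: since $I_0 = \pd F_0^{1D}/\pd t_0$, that equation reads $I_0 = \sum_{n \geq 0} (t_n/n!) I_0^n$, which rearranges as
$$\sum_{n \geq 0} \frac{t_n - \delta_{n,1}}{n!} I_0^n = 0,$$
i.e., $P(I_0) = 0$. Combining with the above, $\half (y^2)_- = B^2/2 = N^2/(z-I_0)^2$, which, upon expanding $B$ as a series in $1/z$, matches the square of the negative-degree part of $y$ appearing on the right-hand side of the theorem.

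The main obstacle is recognising that the polymer equation is precisely the statement $P(I_0) = 0$ in disguise; once this identity is in hand, the rest is a short algebraic manipulation exploiting the clean two-term structure of the closed form. The same argument applies verbatim in the $N=1$ case of \cite{Zhou-1D}, which is consistent with the author's remark that the proof is identical.
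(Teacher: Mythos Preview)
Your proof is correct. One caveat: the theorem as printed appears to contain a typo on the right-hand side --- comparing with Theorems~\ref{thm:Uniqueness} and the one following it, the intended right-hand side is $\bigl(\frac{N}{z} + \sum_{n\geq 1}\frac{n!}{z^{n+1}}\frac{\pd F_{0,N}}{\pd t_{n-1}}\bigr)^2$, i.e.\ the square of the full principal part of $y/\sqrt{2}$. Your computation correctly yields $N^2/(z-I_0)^2$, which is exactly this corrected right-hand side, and your closing sentence shows you are interpreting it this way.

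Your route differs from the one the paper has in mind. The paper omits the argument, referring to \cite{Zhou-1D}; but the parallel fat case (Theorem~\ref{thm:Fat-Existence}) is proved by expanding $(y^2)_-$ directly from the defining series and invoking the genus-zero Virasoro constraints coefficient by coefficient to kill the cross terms. You instead use the closed form $y = \tfrac{1}{\sqrt{2}}P(z) + \tfrac{\sqrt{2}N}{z-I_0}$ from Theorem~2.1, which already packages the flow and polymer equations, and then the whole statement reduces to the single identity $P(I_0)=0$. The Virasoro approach is more self-contained (it does not rely on the closed form), while your approach is shorter and makes transparent that the polymer equation is the sole content of the identity once the closed form is in hand.
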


\begin{thm} \label{thm:Uniqueness}
There exists a unique series
\be
y = \frac{1}{\sqrt{2}} \sum_{n \geq 0} (v_n-\delta_{n,1}) z^{n} + \frac{\sqrt{2}N}{z}
+ \sqrt{2} \sum_{n \geq 0} w_n z^{-n-2}
\ee
such that
each $w_n \in \bC[[v_0, v_1, \dots]]$ and
\be \label{eqn:y2-}
\half(y^2)_- =  \biggl(\frac{N}{z} +  \sum_{n \geq 0} w_n z^{-n-2}\biggr)^2.
\ee
\end{thm}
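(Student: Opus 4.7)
My plan is to mirror the $N=1$ argument from \cite{Zhou-1D}, treating $N$ as an extra scalar parameter. Write $y = A + B$ with
\[
A = \frac{1}{\sqrt{2}} \sum_{n \geq 0}(v_n - \delta_{n,1}) z^n, \qquad B = \sqrt{2}\Bigl(\frac{N}{z} + \sum_{n \geq 0} w_n z^{-n-2}\Bigr),
\]
so that $A$ contains only non-negative powers of $z$ while $B$ has only powers $z^{-j}$ with $j \geq 1$. Expanding $y^2 = A^2 + 2AB + B^2$ and noting $(A^2)_- = 0$ and $(B^2)_- = B^2$, the condition \eqref{eqn:y2-} collapses to $(2AB)_- = 0$. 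Setting the coefficient of $z^{-k}$ in $2AB$ to zero for each $k \geq 1$ thus yields an infinite system of equations in the unknowns $w_0, w_1, \dots \in \bC[[v_0, v_1, \dots]]$.

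Existence is immediate from Theorem~\ref{thm:Thin-Existence} combined with the Thin Special Deformation theorem above: specializing $v_n = t_n/n!$ produces the explicit solution $w_k = N I_0^{k+1}$, which lies in $\bC[[v_0, v_1, \dots]]$ by \eqref{eqn:Xinfinity}, and the polymer equation \eqref{eqn:Polymer-Genus-0} confirms $(2AB)_- = 0$ directly.

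For uniqueness, I would grade $\bC[[v_0, v_1, \dots]]$ by $\deg v_n = 1$ for every $n \geq 0$ and induct on total degree. The equation coming from the coefficient of $z^{-(j+1)}$ in $2AB$ can be rewritten, for $j \geq 0$, as
\[
w_j \;=\; v_1 w_j + v_0 w_{j-1} + \sum_{n \geq 2} v_n w_{n+j-1} + \delta_{j,0}\, N v_0,
\]
with the convention $w_{-1} := 0$. The isolated $w_j$ on the left comes from the constant $-1$ in the factor $(v_1 - 1)$ of $A$, while every summand on the right carries at least one $v_i$ of degree $\geq 1$. Extracting the degree-$d$ component therefore expresses $w_j^{[d]}$ explicitly in terms of the degree-$(d-1)$ components of the $w_m$'s. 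Starting from $w_j^{[0]} = 0$ (forced by the degree-$0$ part of the same equation), induction on $d$ determines every $w_j$ uniquely.

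The main bookkeeping point is that the tail $\sum_{n \geq 2} v_n w_{n+j-1}$ is an infinite sum of unknowns but contributes only finitely many terms at each fixed degree $d$: the factor $v_n$ has degree $\geq 1$, so by the inductive hypothesis only $w_m$'s with indices in a bounded range survive at that degree. This is exactly the accounting used for $N=1$ in \cite[Thm.~10.2]{Zhou-1D}, and since $N$ enters only as a scalar coefficient the argument transfers without modification.
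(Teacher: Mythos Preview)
Your argument is correct and is precisely the approach the paper has in mind: the paper omits the proof entirely, saying it is ``exactly the same'' as \cite[Theorem~10.2]{Zhou-1D} with $N$ carried along as a parameter, which is exactly what you do. One small sharpening: your finiteness claim for the tail $\sum_{n\geq 2} v_n w_{n+j-1}$ is best justified by noting (inductively) that $w_m^{[d-1]}=0$ whenever $m\geq d-1$, so the sum really is finite at each degree; this follows immediately from your recursion once you observe that the $j=0$ equation is the only source of nonzero low-degree terms.
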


\begin{thm}
For a series of the form
\be
y =  \frac{1}{\sqrt{2}} \frac{\pd S(z, \bt)}{\pd z}
+ \frac{\sqrt{2}N}{z}+ \sqrt{2} \sum_{n \geq 0} w_n z^{-n-2},
\ee
where $S$ is the universal action defined by:
\be \label{eqn:Action}
S(x) = - \frac{1}{2}x^2 + \sum_{n \geq 1} t_{n-1} \frac{x^n}{n!},
\ee
and each $w_n \in \bC[[t_0, t_1, \dots]]$,
the equation
\be
\half (y^2)_- = \biggl(\frac{N}{z} +  \sum_{n \geq 0} w_n z^{-n-2}\biggr)^2
\ee
 has a unique solution given by:
\ben
&& y = \frac{1}{\sqrt{2}} \sum_{n=0}^\infty \frac{t_n -\delta_{n,1}}{n!} z^n
+ \frac{\sqrt{2}}{z} + \sqrt{2}
\sum_{n = 1}^\infty \frac{n!}{z^{n+1}} \frac{\pd F_{0,N}}{\pd t_{n-1}},
\een
where $F_{0,N}$ is the free energy of the Hermitian $N\times N$-matrix model.
\end{thm}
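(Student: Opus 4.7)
My plan is to deduce this theorem directly from Theorem~\ref{thm:Thin-Existence} and Theorem~\ref{thm:Uniqueness} after a transparent change of variables that aligns their formulations with the present one.

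For existence, differentiation of the action gives
\[
\frac{\pd S(z,\bt)}{\pd z} = \sum_{m\geq 0}\Big(\frac{t_m}{m!} - \delta_{m,1}\Big) z^m,
\]
so the series $y$ built from $F_{0,N}$ in Theorem~\ref{thm:Thin-Existence} has leading piece $\frac{1}{\sqrt{2}} \frac{\pd S}{\pd z}$ and tail coefficients $w_n = (n+1)!\, \frac{\pd F_{0,N}}{\pd t_n} \in \bC[[t_0, t_1, \dots]]$. Theorem~\ref{thm:Thin-Existence} then asserts that this $y$ satisfies the required quadratic equation, producing the claimed solution of the specified form.

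For uniqueness, I would set $v_n := t_n/n!$. This is a ring isomorphism $\bC[[t_0, t_1, \dots]] \cong \bC[[v_0, v_1, \dots]]$ under which the ansatz of the present theorem becomes precisely the ansatz of Theorem~\ref{thm:Uniqueness}, and the belonging condition $w_n \in \bC[[t_0, t_1, \dots]]$ corresponds to $w_n \in \bC[[v_0, v_1, \dots]]$. The uniqueness of the solution therefore transfers immediately from Theorem~\ref{thm:Uniqueness}.

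The main substance is already absorbed into Theorem~\ref{thm:Uniqueness}: matching coefficients of $z^{-k}$ in $\frac{1}{2}(y^2)_- = \bigl(\frac{N}{z} + \sum w_n z^{-n-2}\bigr)^2$, the cross-term between the polynomial part $\frac{1}{\sqrt{2}}\frac{\pd S}{\pd z}$ and the descending tail yields at each order a triangular linear system that recursively determines $w_{k-1}$ in terms of the $t_j$'s and the previously determined $w_0, \dots, w_{k-2}$. No new obstacle arises for general $N$; the only bookkeeping subtlety is the change of variables $v_n = t_n/n!$, and the remainder of the argument is identical to the $N=1$ case treated in \cite{Zhou-1D}.
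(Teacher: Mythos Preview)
Your proposal is correct and matches the paper's own treatment: the paper omits the proof entirely, stating that it is identical to the $N=1$ case in \cite{Zhou-1D}, and your argument---existence from Theorem~\ref{thm:Thin-Existence}, uniqueness from Theorem~\ref{thm:Uniqueness} via the change of variables $v_n=t_n/n!$---is exactly how that deduction goes. You might note in passing that the displayed solution in the statement has $\frac{\sqrt{2}}{z}$ where $\frac{\sqrt{2}N}{z}$ is evidently intended (compare the ansatz and Theorem~\ref{thm:Thin-Existence}), but this is a typographical slip in the paper, not a gap in your argument.
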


To gain some better understanding of the thin special deformations,
we will first examine some concrete examples in the next few Subsections.

\subsection{Thin special deformation on the $(g_1, g_2)$-plane}
\label{sec:Thin-g1g2}

By \eqref{eqn:Xinfinity},
\ben
I_0|_{t_k=0, k\geq 2} = \sum_{k \geq 1} t_0t_1^{k-1}= \frac{t_0}{1-t_1}
= \frac{g_1}{1-g_2},
\een
and so the special deformation of thin spectral curve on the $(g_1, g_2)$-space
is given by the following plane algebraic curve:
\be
\sqrt{2} y(z - \frac{t_0}{1-t_1}) = (z - \frac{t_0}{1-t_1})
(t_0 +(t_1-1)z)
+ 2N.
\ee
It can be rewritten in the following form:
\be \label{eqn:Thin-Def-g1g2}
\biggl(z - \frac{g_1}{1-g_2}\biggr)^2
+ \sqrt{2} \frac{y}{1-g_2}\biggl(z - \frac{g_1}{1-g_2}\biggr)
- \frac{2N}{1-g_2} = 0.
\ee
This is just  \eqref{eqn:Thin-def-g12} with the following changes of variables:
\begin{align*}
g_1 & \mapsto \frac{g_1}{1-g_2}, & y & \mapsto \frac{y}{1-g_2}, &
N & \mapsto \frac{2N}{1-g_2}.
\end{align*}

One can rewrite \eqref{eqn:Thin-Def-g1g2} as follows:
\be
\biggl((1-g_2)z - g_1\biggr)^2
+ \sqrt{2} y\biggl((1-g_2)z - g_1\biggr)
- 2N(1-g_2) = 0.
\ee
When $g_2=1$,
it becomes:
\be
y = \frac{g_1}{\sqrt{2}}.
\ee

\subsection{Thin special deformation along the $g_3$-line}
\label{sec:Thin-g3}
It is easy to see that
\ben
I_0|_{t_k=0, k \neq 2} = 0,
\een
and so the thin special deformation along the $g_3$-line is given by:
\ben
\sqrt{2} y \cdot z  = z \cdot (g_3z^2 -z)
+ 2N.
\een
It can be rewritten as:
\be
v = \frac{z}{2N - z^2(1 -g_3z)},
\ee
where $v= 1/(\sqrt{2}y)$.
From this one can use Lagrange inversion to express $z$ as a Taylor series in $v$
\ben
z= \sum_{n\geq 1} a_n v^n,
\een
as follows:
\ben
a_n & = & \frac{1}{2\pi i} \oint \frac{z}{v^{n+1}} dv \\
& = & \frac{1}{2\pi i} \oint \frac{z}{\biggl(\frac{z}{2N - z^2(1 -g_3z)}\biggr)^{n+1}}
d\frac{z}{2N - z^2(1 -g_3z)} \\
& = & \frac{1}{2\pi i} \oint \frac{(2N - z^2(1 -g_3z))^{n-1}
(2N+z^2-2g_3z^4)}{z^n}dz \\
& = & [2N - z^2(1 -g_3z))^{n-1}(2N+z^2-2g_3z^3)]_{z^{n-1}},
\een
where $[\cdot]_{z^{n-1}}$ means the coefficient of $z^{n-1}$.
Using the binomial theorem, it is possible to derive a closed formula for $a_{n}$
as follows.
For reason which will become clear later,
we will introduce an extra variable $h$, and proceed as follows:
\ben
&& (2N - w^2(h  -g_3w))^{n-1}  \\
& = & \sum_{a=0}^{n-1} (-1)^a \binom{n-1}{a}(2N)^{n-1-a} w^{2a} (h-g_4w)^a \\
& = & \sum_{a=0}^{n-1} (-1)^a \binom{n-1}{a}(2N)^{n-1-a} w^{2a}
\sum_{b=0}^a (-1)^b\binom{a}{b}h^{a-b}g_4^bw^{b} \\
& = & \sum_{0 \leq b \leq a \leq n-1} (-1)^{a+b} \frac{(n-1)!}{(n-1-a)!(a-b)!b!}
(2N)^{n-1-a}h^{a-b}g_4^bw^{2a+b},
\een
and then it follows that:
\ben
&& [(2N - w^2(h  -g_3w))^{n-1}(2N+hw^2-2g_3w^3)]_{w^{n-1}} \\
& = & 2N[(2N - w^2(h  -g_3w))^{2m}]_{w^{n-1}}
+[(2N - w^2(h  -g_3w))^{2m}]_{w^{n-3}} \\
& - & 2g_3[(2N - w^2(h  -g_3w))^{2m}]_{w^{n-4}} \\
& = & 2N\sum_{\substack{0 \leq b \leq a \leq n-1\\2a+b=n-1 }}
(-1)^{a+b} \frac{(n-1)!}{(n-1-a)!(a-b)!b!}
(2N)^{n-1-a}h^{a-b}g_4^b \\
& + &h \sum_{\substack{0 \leq b \leq a \leq n-1\\2a+b=n-3 }}
(-1)^{a+b} \frac{(n-1)!}{(n-1-a)!(a-b)!b!}
(2N)^{n-1-a}h^{a-b}g_4^b \\
& - & 2g_4\sum_{\substack{0 \leq b \leq a \leq n-1\\2a+b=n-4}}
(-1)^{a+b} \frac{(n-1)!}{(n-1-a)!(a-b)!b!}
(2N)^{n-1-a}h^{a-b}g_4^b  \\
& = & 2N \sum_{(n-1)/3 \leq a \leq (n-1)/2} (-1)^{n-1-a}
\frac{(n-1)!(2N)^{n-1-a}h^{3a-n+1}g_4^{n-1-2a} }{(n-1-a)!(n-1-2a)!(3a-n+1)!} \\
& + & h\sum_{(n-3)/3 \leq a \leq (n-3)/2} (-1)^{n-3-a}
\frac{(n-1)!(2N)^{n-1-a}h^{3a-n+3}g_4^{n-3-2a}}{(n-1-a)!(n-3-2a)!(3a-n+3)!} \\
& - &2 g_3 \sum_{(n-4)/3 \leq a \leq (n-4)/2} (-1)^{n-4-a}
\frac{(n-1)!(2N)^{n-1-a}h^{3a-n+4}g_4^{n-4-2a}}{(n-1-a)!(n-4-2a)!(3a-n+4)!} \\
& = & \sum_{(n-1)/3 \leq a \leq (n-1)/2} (-1)^{n-1-a}
\frac{(n-1)!(2N)^{n-1-a}h^{3a-n+1}g_4^{n-1-2a} }{(n-a)!(n-1-2a)!(3a-n+1)!},
\een
and so after taking $h=1$ we have:
\ben
z = \sum_{m\geq 0} \sum_{m/3 \leq a \leq m/2} (-1)^{m-a}
\frac{m!(2N)^{m-a} g_4^{m-2a} }{(m+1-a)!(m-2a)!(3a-m)!} v^{m+1}.
\een
The first few terms of $z$ are:
\be \label{eqn:z-g3}
\begin{split}
z = & (2N) v-(2N)^2v^3+(2N)^3g_3v^4+2(2N)^3v^5-5(2N)^4g_3v^6 \\
+ & (3 g_3^2 (2N)^5-5 (2N)^4)v^7+21 g_3  (2N)^5v^8 \\
+ & (-28 g_3^2 (2N)^6+14 (2N)^5)v^9+(12 g_3^3 (2N)^7-84 g_3 (2N)^6)v^{10} \\
+ & (180 g_3^2 (2N)^7-42 (2N)^6)v^{11} \\
+ & (-165 g_3^3 (2N)^8+330 g_3 (2N)^7)v^{12} \\
+ & (55 g_3^4 (2N)^9-990 g_3^2 (2N)^8+132 (2N)^7)v^{13} \\
+ & (1430 g_3^3 (2N)^9-1287 g_3 (2N)^8)v^{14}+ \cdots.
\end{split}
\ee
The coefficients $1$, $0$, $-1$, $1$, $2$,$-5$, $(3, -5)$, $21$, $(-28, 14)$,
$(12, -84)$, $(180,42)$, $\cdots$
are up to some signs and orders the sequence A104978 on \cite{Sloane}
given by the following formula:
\be
T(n,k) = \frac{(2n+k)!}{k!(n-k)!(n+k+1)!}, \qquad 1\leq k \leq n.
\ee

\subsection{Thin special deformation on the $(g_1, g_3)$-plane}
\label{sec:Thin-g1g3}
Next we turn on besides the coupling constant $g_3$ also  $g_1$.
Note
\ben
I_0|_{g_k=0, k \neq 1,3} & = & \sum_{k=1}^\infty \frac{1}{k}
\sum_{\substack{p_1 + \cdots + p_k = k-1\\ p_1, \dots, p_k \in \{0,2\}}}
g_{p_1+1} \cdots g_{p_k+1} \\
& = & \sum_{k=1}^\infty \frac{1}{k} \sum_{\substack{m_0+m_2=k\\2m_2=k-1}}
\frac{k!}{m_0!m_2!} g_1^{m_0} g_3^{m_2} \\
& = & \sum_{m=0}^\infty \frac{1}{2m+1} \cdot \frac{(2m+1)!}{(m+1)!m!} g_1^{m+1}g_3^m \\
& = & g_1 \sum_{m=0}^\infty \frac{1}{m+1} \binom{2m}{m} g_1^{m}g_3^m,
\een
the coefficients are Catalan numbers,
and so
\be
I_0|_{g_k=0, k \neq 1,3} = g_1 \cdot \frac{1- \sqrt{1-4g_1g_3}}{2g_1g_3}.
\ee
Therefore,
the thin special deformation on $(g_1, g_3)$-plane is given by the equation:
\be \label{eqn:Thin-Def-g1g3}
\sqrt{2} y\biggl(z - \frac{1- \sqrt{1-4g_1g_3}}{2g_3} \biggr)
= \biggl(z - \frac{1- \sqrt{1-4g_1g_3}}{2g_3} \biggr)(g_1 -z+g_3z^2)
+ 2N,
\ee

If we make the following change of coordinates:
\begin{align*}
w & = z - \frac{1- \sqrt{1-4g_1g_3}}{2g_3}, &
v & = \frac{1}{\sqrt{2}y},
\end{align*}
then we get the following equation:
\ben
w = 2N v +v w \biggl[g_1 - \frac{1- \sqrt{1-4g_1g_3}}{2g_3}
- w
+ g_3 \biggl(w +\frac{1- \sqrt{1-4g_1g_3}}{2g_3}  \biggr)^2 \bigg].
\een
After simplification,
\be
w = 2N v -vw^2(\sqrt{1-4g_1g_3} -g_3w),
\ee
or equivalently,
\be
v = \frac{w}{2N - w^2(\sqrt{1-4g_1g_3} -g_3w)}
\ee
From this one can use Lagrange inversion to solve for $w$ as in last Subsection to get:
\ben
w= \sum_{m\geq 0} \sum_{m/3 \leq a \leq m/2} (-1)^{m-a}
\frac{m!(2N)^{m-a}h^{3a-m}g_4^{m-2a} }{(m+1-a)!(m-2a)!(3a-m)!} v^{m+1},
\een
where $h = \sqrt{1-4g_1g_3}$.
By writing down the first few terms of $w$ explicitly we get:
\ben
z & = & \frac{1-h}{2g_3} + (2N) v-(2N)^2hv^3+(2N)^3g_3v^4+2(2N)^3h^2v^5-5(2N)^4g_3hv^6 \\
& + & (3 g_3^2 (2N)^5-5 h^3 (2N)^4)v^7+21 g_3 h^2 (2N)^5v^8 \\
& + & (-28 g_3^2 h (2N)^6+14 h^4 (2N)^5)v^9+(12 g_3^3 (2N)^7-84 g_3 h^3 (2N)^6)v^{10} \\
& + & (180 g_3^2 h^2 (2N)^7-42 h^5 (2N)^6)v^{11} \\
& + & (-165 g_3^3 h (2N)^8+330 g_3 h^4 (2N)^7)v^{12} \\
& + & (55 g_3^4 (2N)^9-990 g_3^2 h^3 (2N)^8+132 h^6 (2N)^7)v^{13} \\
& + & (1430 g_3^3 h^2 (2N)^9-1287 g_3 h^5 (2N)^8)v^{14}+ \cdots,
\een
from this one sees explicitly how one can first deform along the $g_3$-line,
then deform along the $g_1$-direction.

\subsection{Thin special deformation on the $(g_1,g_2, g_3)$-space}
\label{sec:Thin-g1g2g3}
Note
\ben
I_0|_{t_k=0, k \geq 3} & = & \sum_{k=1}^\infty \frac{1}{k}
\sum_{\substack{p_1 + \cdots + p_k = k-1\\0 \leq p_1, \dots, p_k \leq 2}}
g_{p_1+1} \cdots g_{p_k+1} \\
& = & \sum_{k=1}^\infty \frac{1}{k} \sum_{\substack{m_0+m_1+m_2=k\\m_1+2m_2=k-1}}
\frac{k!}{m_0!m_1!m_2!} g_1^{m_0}g_2^{m_1} g_3^{m_2}.
\een
From the system
\ben
&& m_0+m_1+m_2=k, \\ && m_1+2m_2=k-1,
\een
we get $m_0 = m_2 +1$ and so $k = m_1+2m_2+1$, $m_1$ can be arbitrary,
it follows that
\ben
I_0|_{t_k=0, k \geq 3}
& = & \sum_{m_2=0}^\infty \sum_{m_1=0}^\infty \frac{1}{m_1+2m_2+1}
\frac{(m_1+2m_2+1)!}{(m_2+1)!m_1!m_2!} g_1^{m_2+1}g_2^{m_1} g_3^{m_2} \\
& = & \sum_{m_2\geq 0} \frac{(2m_2)!}{m_2!(m_2+1)!}
\frac{g_1^{m_2+1}  g_3^{m_2} }{(1-g_2)^{2m_2+1}} \\
& = & \frac{1- \sqrt{1-4\frac{g_1}{1-g_2}\frac{g_3}{1-g_2}}}{2\frac{g_3}{1-g_2}},
\een
and so the special deformation on the $(g_1,g_2, g_3)$-space is given by:
\be \label{eqn:Thin-Def-g1g2g3}
\begin{split}
& \sqrt{2} \frac{y}{1-g_2} \biggl(z - \frac{1-
\sqrt{1-4\frac{g_1}{1-g_2}\frac{g_3}{1-g_2}}}{2\frac{g_3}{1-g_2}}\biggr) \\
= &  \biggl(z - \frac{1- \sqrt{1-4\frac{g_1}{1-g_2}\frac{g_3}{1-g_2}}}{2\frac{g_3}{1-g_2}}
\biggr)\biggl(\frac{g_1}{1-g_2} -z + \frac{g_3}{1-g_2} z^2 \biggr) + \frac{2N}{1-g_2},
\end{split}
\ee
This is just \eqref{eqn:Thin-Def-g1g3} with the following changes of variables:
\begin{align*}
g_1 & \mapsto \frac{g_1}{1-g_2}, &
g_3 & \mapsto \frac{g_3}{1-g_2}, &
y & \mapsto \frac{y}{1-g_2}, &
N & \mapsto \frac{2N}{1-g_2}.
\end{align*}

One can rewrite \eqref{eqn:Thin-Def-g1g2g3} as follows:
\be
\begin{split}
& \sqrt{2}  y  \biggl(z - \frac{1-g_2-
\sqrt{(1-g_2)^2-4 g_1 g_3 }}{2 g_3 }\biggr) \\
= &  \biggl(z - \frac{1-g_2- \sqrt{(1-g_2)^2-4g_1 g_3 }}{2g_3 }
\biggr)\biggl(g_1- (1-g_2)z + g_3  z^2 \biggr) + 2N ,
\end{split}
\ee
and so when $g_2=1$,
\be
 \sqrt{2}  y  \biggl(z + \frac{
\sqrt{-g_1 g_3 }}{g_3 }\biggr)
= \biggl(z + \frac{\sqrt{-g_1 g_3 }}{g_3 }
\biggr)\biggl(g_1 + g_3  z^2 \biggr) + 2N.
\ee
The appearance of fractional powers of the coupling constants
indicates the occurrence of some phase transition.
If we furthermore take $g_1=0$,
then we get:
\be
\sqrt{2}y \cdot z = g_3 z^3 +2N.
\ee

\subsection{Thin special deformation on the $g_4$-line} \label{sec:Thin-g4}
In this case  the thin special deformation is given by
\be
\sqrt{2} y z= -z^2(1-g_4z^2)+2N.
\ee
To express $z$ as a Taylor series in $1/y$,
we need to solve:
\be
v = \frac{z}{2N - z^2(1  -g_4z^2)}.
\ee
We use Lagrange inversion to solve for $z$ as follows.
Write
\ben
z= \sum_{n\geq 1} a_n v^n,
\een
then we have
\ben
a_n & = & \frac{1}{2\pi i} \oint \frac{z}{v^{n+1}} dv \\
& = & \frac{1}{2\pi i} \oint \frac{z}{\biggl(\frac{z}{2N - z^2(1  -g_4z^2)}\biggr)^{n+1}}
d\frac{z}{2N - z^2(1  -g_4z^2)} \\
& = & \frac{1}{2\pi i} \oint \frac{(2N - z^2(1  -g_4z^2))^{n-1}(2N+z^2-3g_4z^4)}{z^n}dz \\
& = & [(2N - z^2(1  -g_4z^2))^{n-1}(2N+z^2-3g_4z^4)]_{z^{n-1}}.
\een
It is clear that $a_{2m}$ all vanish.
Using the binomial theorem, it is possible to derive a closed formula for $a_{2m+1}$
as follows.
We first get:
\ben
&& (2N - z^2(1  -g_4z^2))^{2m}  \\
& = & \sum_{a=0}^{2m} (-1)^a \binom{2m}{a}(2N)^{2m-a} z^{2a} (1-g_4z^2)^a \\
& = & \sum_{a=0}^{2m} (-1)^a \binom{2m}{a}(2N)^{2m-a} z^{2a}
\sum_{b=0}^a (-1)^b\binom{a}{b}g_4^bz^{2b} \\
& = & \sum_{0 \leq b \leq a \leq 2m} (-1)^{a+b} \frac{(2m)!}{(2m-a)!(a-b)!b!}
(2N)^{2m-a}g_4^bz^{2(a+b)},
\een
and then it follows that:
\ben
&& [(2N - z^2(1  -g_4z^2))^{2m}(2N+z^2-3g_4z^4)]_{z^{2m}} \\
& = & 2N[(2N - z^2(1  -g_4z^2))^{2m}]_{z^{2m}}
+[(2N - z^2(1  -g_4z^2))^{2m}]_{z^{2m-2}} \\
& - & 3g_4[(2N - z^2(1  -g_4z^2))^{2m}]_{z^{2m-4}} \\
& = & 2N\sum_{\substack{0 \leq b \leq a \leq 2m\\a+b=m}}
(-1)^{a+b} \frac{(2m)!}{(2m-a)!(a-b)!b!} (2N)^{2m-a}g_4^b \\
& + & \sum_{\substack{0 \leq b \leq a \leq 2m\\a+b=m-1}} (-1)^{a+b}
\frac{(2m)!}{(2m-a)!(a-b)!b!} (2N)^{2m-a}g_4^b \\
& - & 3g_4\sum_{\substack{0 \leq b \leq a \leq 2m\\a+b=m-2}} (-1)^{a+b}
\frac{(2m)!}{(2m-a)!(a-b)!b!} (2N)^{2m-a}g_4^b \\
& = & \sum_{b=0}^{[m/2]} (-1)^m \frac{(2m)!}{b!(m-2b)!(m+b)!}
(2N)^{m+b+1}g_4^b \\
& + & \sum_{b=0}^{[(m-1)/2]} (-1)^{m-1} \frac{(2m)!}{b!(m-1-2b)!(m+1+b)!}
(2N)^{m+b+1}g_4^b \\
& - &3 g_4 \sum_{b=0}^{[(m-2)/2]} (-1)^{m-2} \frac{(2m)!}{b!(m-2-2b)!(m+2+b)!}
(2N)^{m+b+2}g_4^b.
\een
After simplification we get:
\be
a_{2m+1} = (-1)^m \sum_{b=0}^{[m/2]}
\frac{(2m)!}{b!(m-2b)!(m+1+b)!} (2N)^{m+b+1}g_4^b,
\ee
and so we have:
\ben
z= \sum_{m\geq 0} (-1)^m \sum_{b=0}^{[m/2]}
\frac{(2m)!}{b!(m-2b)!(m+1+b)!} (2N)^{m+b+1}g_4^b v^{2m+1}.
\een
The following are the first few terms:
\ben
z & = & (2N) v-(2N)^2v^3+((2N)^4g_4+2(2N)^3)v^5  \\
& + & (6 g_4 (2N)^5+5(2N)^4)v^7+(4g_4^2(2N)^7+28g_4(2N)^6+14(2N)^5)v^9\\
& - & (45g_4^2(2N)^8+120g_4(2N)^7+42(2N)^6)v^{11} \\
& + & (22g_4^3(2N)^{10}+330g_4^2(2N)^9+495g_4(2N)^8+132(2N)^7)v^{13} \\
& - & (364g_4^3(2N)^{11}+2002g_4^2(2N)^{10}+2002g_4(2N)^9+429(2N)^8)v^{15}+ \cdots,
\een
the coefficients (up to signs) $1$, $1$, $(1, 2)$, $(6,5)$, $(4,28,14)$,
$(45,120,42)$, $(22$,$330$,$495$, $132)$, $\cdots$ are generalizations of
the Catalan numbers $1,1,2,5,14,42,132,\cdots$,
but they have not yet appeared on \cite{Sloane}.

\subsection{Thin special deformation on the $(g_1, g_4)$-plane}
\label{sec:Thin-g1g4}

In this case
\ben
I_0|_{t_k=0, k \neq 0,3}
& = & \sum_{k=1}^\infty \frac{1}{k} \sum_{\substack{m_0+m_3=k\\3m_3=k-1}}
\frac{k!}{m_0!m_3!} g_1^{m_0} g_4^{m_3} \\
& = & \sum_{m=0}^\infty \frac{1}{3m+1} \cdot \frac{(3m+1)!}{(2m+1)!m!} g_1^{2m+1}g_4^m \\
& = & g_1 \sum_{m=0}^\infty \frac{1}{2m+1} \binom{3m}{m} g_1^{2m}g_4^m,
\een
the coefficients are the sequence A001764 on \cite{Sloane},
they are the numbers of complete ternary trees with $n$ internal nodes.
Write
\be
A(x) = \sum_{m=0}^\infty \frac{1}{2m+1} \binom{3m}{m}x^m,
\ee
then one has \cite[A001764]{Sloane}:
\be \label{eqn:A}
A(x) = 1+xA(x)^3 = \frac{1}{1-xA(x)^2}.
\ee
The thin special deformation is given by the plane algebraic curve:
\be
\sqrt{2} y\biggl(z - g_1A(g_1^2g_4) \biggr)
= \biggl(z - g_1A(g_1^2g_4)\biggr)(g_1 -z+g_4z^3)
+ 2N,
\ee
Make the following change of coordinates:
\begin{align*}
w & = z - g_1A(g_1^2g_4), &
v &= \frac{1}{\sqrt{2}y}.
\end{align*}
Then we get the following equation:
\ben
w = 2N v +v w \biggl[g_1 - g_1A(g_1^2g_4)- w
+ g_4 \big(w +g_1A(g_1^2g_4)\big)^3 \bigg].
\een
After simplification using \eqref{eqn:A},
we get
\be
w = 2N v -vw^2(1-3g_1^2g_4A^2(g_1^2g_4)-3g_1g_4A(g_1^2g_4)w -g_4w^2),
\ee
or equivalently,
\be
v = \frac{w}{2N - w^2(1-3g_1^2g_4A^2(g_1^2g_4)-3g_1g_4A(g_1^2g_4)w -g_4w^2)}.
\ee
From this one can use Lagrange inversion to solve for $w$.

\subsection{Thin special deformation on the $g_{k+2}$-line ($k \geq 3$)}
It is straightforward to generalize the results in \S \ref{sec:Thin-g3} and 
\S \ref{sec:Thin-g4}.
The thin special deformation along the $g_{k+2}$-line is given by
\be
\sqrt{2} y z= -z^2+g_{k+2}z^{k+2}+2N.
\ee
To express $z$ as a Taylor series in $1/y$,
we need to solve:
\be
v = \frac{z}{2N - z^2+g_{k+2}z^{k+2}}.
\ee
We use Lagrange inversion to solve for $z$ as follows.
Write
\ben
z= \sum_{n\geq 1} a_n v^n,
\een
then we have
\ben
a_n & = & \frac{1}{2\pi i} \oint \frac{z}{v^{n+1}} dv \\
& = & \frac{1}{2\pi i} \oint \frac{z}{\biggl(\frac{z}{2N - z^2(1  -g_{k+2}z^k)}\biggr)^{n+1}}
d\frac{z}{2N - z^2(1  -g_{k+2}z^k)} \\
& = & \frac{1}{2\pi i} \oint \frac{(2N - z^2(1  -g_{k+2}z^k))^{n-1}(2N+z^2-(k+1)g_{k+2}z^{k+2})}{z^n}dz \\
& = & [(2N - z^2(1  -g_{k+2}z^k))^{n-1}(2N+z^2-(k+1)g_{k+2}z^{k+2})]_{z^{n-1}}.
\een
Using the binomial theorem, it is possible to derive a closed formula for $a_{2m+1}$
as follows.
We first get:
\ben
&& (2N - z^2(1  -g_{k+2}z^k))^{n-1}  \\
& = & \sum_{a=0}^{n-1} (-1)^a \binom{n-1}{a}(2N)^{n-1-a} z^{2a} (1-g_{k+2}z^k)^a \\
& = & \sum_{a=0}^{n-1} (-1)^a \binom{n-1}{a}(2N)^{n-1-a} z^{2a}
\sum_{b=0}^a (-1)^b\binom{a}{b}g_{k+2}^bz^{kb} \\
& = & \sum_{0 \leq b \leq a \leq n-1} (-1)^{a+b} \frac{(n-1)!}{(n-1-a)!(a-b)!b!}
(2N)^{n-1-a}g_{k+2}^bz^{2a+kb},
\een
and then it follows that:
\ben
&& [(2N - z^2(1  -g_{k+2}z^k))^{n-1}(2N+z^2-(k+1)g_{k+2}z^{k+2})]_{z^{n-1}} \\
& = & 2N[(2N - z^2(1  -g_{k+2}z^k))^{n-1}]_{z^{n-1}}
+[(2N - z^2(1  -g_{k+2}z^k))^{n-1}]_{z^{n-3}} \\
& - & (k+1)g_4[(2N - z^2(1  -g_{k+2}z^k))^{n-1}]_{z^{n-k-3}} \\
& = & 2N\sum_{\substack{0 \leq b \leq a \leq n-1\\2a+kb=n-1}}
(-1)^{a+b} \frac{(n-1)!}{(n-1-a)!(a-b)!b!} (2N)^{n-1-a}g_{k+2}^b \\
& + & \sum_{\substack{0 \leq b \leq a \leq n-1\\2a+kb=n-3}}
(-1)^{a+b} \frac{(n-1)!}{(n-1-a)!(a-b)!b!} (2N)^{n-1-a}g_{k+2}^b \\
& - & (k+1) g_{k+2}\sum_{\substack{0 \leq b \leq a \leq n-1\\2a+kb=n-k-3}}
(-1)^{a+b} \frac{(n-1)!}{(n-1-a)!(a-b)!b!} (2N)^{n-1-a}g_{k+2}^b \\
& = & \sum_{\substack{(n-1)/(k+2) \leq a \leq (n-1)/2\\
0 \leq b =(n-1-2a)/k} } (-1)^{a+b} \frac{(n-1)!}{(n-1-a)!(a-b)!b!}
(2N)^{n-a}g_{k+2}^b \\
& + & \sum_{\substack{(n-1)/(k+2) \leq a \leq (n-1)/2\\
0 \leq b =(n-3-2a)/k} } (-1)^{a+b} \frac{(n-1)!}{(n-1-a)!(a-b)!b!}
(2N)^{n-1-a}g_{k+2}^b \\
& - & (k+1) g_{k+2} \sum_{\substack{(n-1)/(k+2) \leq a \leq (n-1)/2\\
0 \leq b =(n-3-2a)/k-1} } (-1)^{a+b} \frac{(n-1)!}{(n-1-a)!(a-b)!b!}
(2N)^{n-1-a}g_{k+2}^b \\
& = & \sum_{\substack{(n-1)/(k+2) \leq a \leq (n-1)/2\\
0 \leq b =(n-1-2a)/k} } (-1)^{a+b} \frac{(n-1)!}{(n-a)!(a-b)!b!}
(2N)^{n-a}g_{k+2}^b,
\een
and so we have:
\ben
z= \sum_{m\geq 0} v^{m+1} \sum_{\substack{m/(k+2) \leq a \leq m/2\\
0 \leq b =(m-2a)/k} } (-1)^{a+b} \frac{m!}{(m+1-a)!(a-b)!b!}
(2N)^{m+1-a}g_{k+2}^b.
\een
The   coefficients (up to signs)   are all generalizations of
the Catalan numbers.

\subsection{Thin special deformation on the $(g_1, g_{k+2})$-plane}

In this case
\ben
I_0|_{g_j=0, j \neq 1,k+2}
& = & \sum_{l=1}^\infty \frac{1}{l} \sum_{\substack{m_1+m_{k+2}=l\\(k+1) m_{k+2} =l-1}}
\frac{l!}{m_0!m_3!} g_1^{m_0} g_4^{m_3} \\
& = & \sum_{m=0}^\infty \frac{1}{(k+1)m+1} \cdot \frac{((k+1)m+1)!}{(km+1)!m!} g_1^{km+1}g_{k+2}^m \\
& = & g_1 \sum_{m=0}^\infty \frac{1}{km+1} \binom{(k+1)m}{m} g_1^{km}g_{k+2}^m,
\een
the coefficients are various sequences  on \cite{Sloane},
they enumerate $(k+1)$-ary rooted trees  with $m$ internal nodes.
Write
\be
A_k(x) = \sum_{m=0}^\infty \frac{1}{km+1} \binom{(k+1)m}{m}x^m,
\ee
then one has:
\be \label{eqn:A-k}
A_k(x) = 1+xA(x)^{k+1}.
\ee
The thin special deformation is given by the plane algebraic curve:
\be
\sqrt{2} y\biggl(z - g_1A_k(g_1^kg_{k+2}) \biggr)
= \biggl(z - g_1A_k(g_1^kg_{k+2})\biggr)(g_1 -z+g_{k+2}z^{k+1})
+ 2N,
\ee
As before,
after we make the following change of coordinates:
\begin{align*}
w & = z - g_1A_k(g_1^kg_4), &
v &= \frac{1}{\sqrt{2}y},
\end{align*}
we get the following equation:
\ben
w = 2N v +v w \biggl[g_1 - g_1A_k(g_1^kg_{k+2})- w
+ g_{k+2} \cdot \big(w +g_1A_k(g_1^kg_{k+2})\big)^{k+1} \bigg].
\een
After simplification using \eqref{eqn:A-k},
we get
\be
w = 2N v +vw^2 \biggl(g_{k+2}\sum_{j}\binom{k+1}{j}
 (g_1A_k(g_a^kg_{k+2}))^j w^{k-j} -1\biggr).
\ee
After rewriting it in the following form:
\be
v = \frac{w}{2N + w^2 \biggl(g_{k+2}\sum_{j}\binom{k+1}{j}
 (g_1A_k(g_a^kg_{k+2}))^j w^{k-j} -1\biggr)}.
\ee
one can use Lagrange inversion to solve for $w$
as a poser series in $v$.

\subsection{Thin special deformation in renormalized coupling constants}

To understand the examples discussed in the above several Subsections,
it is important to turn on all the coupling constants
and to consider the problem of expressing $z$ as
a Taylor series in $v= \frac{1}{\sqrt{2}y}$.
It turns out to be convenient to
change to the renormalized coupling constants $I_k$:
\bea
&& I_0 = \sum_{k=1}^\infty \frac{1}{k}
\sum_{p_1 + \cdots + p_k = k-1} \frac{t_{p_1}}{p_1!} \cdots
\frac{t_{p_k}}{p_k!}, \label{eqn:I0} \\
&& I_k= \sum_{n \geq 0} t_{n+k} \frac{I_0^n}{n!}, \;\;\;\; k \geq 1. \label{eqn:Ik}
\eea
These were used in \cite{Zhou-1D} to rewrite the action function:
\be
\begin{split}
S(z) & = \sum_{n \geq 1} (t_{n-1} -\delta_{n,2}) \frac{x^n}{n!} \\
& = \sum_{k=0}^\infty  \frac{(-1)^k}{(k+1)!} (I_k+\delta_{k,1}) I_0^{k+1}
+ \sum_{n \geq 2} (I_{n-1} - \delta_{n,2}) \frac{(z-I_0)^n}{n!}.
\end{split}
\ee
Because the thin special deformation is given by:
\be
y = \frac{1}{\sqrt{2}} S'(z) +\frac{\sqrt{2}N}{z-I_0},
\ee
so in the renormalized coupling constant
it can be written as
\be
y = \frac{1}{\sqrt{2}}\sum_{n \geq 1} (I_n-\delta_{n,1}) \frac{(z-I_0)^n}{n!}
+ \frac{\sqrt{2}N}{z-I_0}.
\ee
Make the following changes of variables:
\begin{align}
w & = z- I_0, & v & = \frac{1}{\sqrt{2}y},
\end{align}
then we get:
\be \label{eqn:Thin-Lag}
v = \frac{w}{2N + \sum_{n \geq 1} (I_n-\delta_{n,1}) \frac{w^{n+1}}{n!}}.
\ee
From this one can use Lagrange inversion to express $w$ as a
Taylor series of $v$, with coefficients polynomials in $2N$ and $I_n - \delta_{n,1}$.

Let us now explain some of the combinatorics involved here.
They are all based on an earlier work \cite{Zhou-1D} to which we refer for more details and
proofs.
First of all,
the renormalized coupling constant $I_0$, defined by the explicit formula \eqref{eqn:I0} above,
with the first few terms explicitly given by
\ben
I_0 & = &  t_0 + t_1t_0 +
\biggl( t_2\frac{t^2_0}{2!} + 2\frac{t^2_1}{2!} t_0 \biggr)
+ \biggl(t_3\frac{t^3_0}{3!} + 3t_1t_2\frac{t^2_0}{2!}
+ 6 \frac{t^3_1}{3!}  t_0 \biggr) \\
& + & \biggl[t_4\frac{t^4_0}{4!}
+ \biggl( 6\frac{t_2^2}{2!} + 4t_1t_3\biggr) \frac{t^3_0}{3!}
+ 12t_2 \frac{t^2_1}{2!} \frac{t^2_0}{2!} + 24 \frac{t^4_1}{4!}  t_0 \biggl] \\
& + &  \biggl[ t_5\frac{t^5_0}{5!} + (5t_1t_4 + 10t_2t_3) \frac{t^4_0}{4!}
+ \biggl( 30t_1 \frac{t_2^2}{2!} + 20t_3\frac{t^2_1}{2!} \biggl) \frac{t^3_0}{3!} \\
&& + 60t_2\frac{t^3_1}{3!} \frac{t^2_0}{2!} + 120 \frac{t^5_1}{5!} t_0 \biggr] \\
& + & \biggl[ t_6\frac{t^6_0}{6!}  +
\biggl( 20 \frac{t^2_3}{2!} + 6t_1t_5 + 15t_2t_4 \biggr) \frac{t^5_0}{5!} \\
&& + \biggl( 90 \frac{t^3_2}{3!} + 30t_4 \frac{t^2_1}{2!} + 60t_1t_2t_3
\biggr)\frac{t^4_0}{4!}  \\
&& +  \biggl( 120t_3 \frac{t^3_1}{3!} + 180 \frac{t^2_1}{2!} \frac{t^2_2}{2!}  \biggr)
\frac{t^3_0}{3!}
+ 360t_2 \frac{t^4_1}{4!} \frac{t^2_0}{2!} + 720  \frac{t^6_1}{6!} t_0
\biggr) + \cdots,
\een
has a combinatorial interpretation of an enumeration of
rooted trees given by Feynman rules to be specified below.
By a rooted tree we mean a tree whose
vertices are all marked by $\bullet$,
except for a valence-one vertex marked by $\circ$.
This exceptional vertex was referred to as the {\em root vertex $\circ$} in
\cite{Zhou-1D}.
This is slightly different from the standard notion of a rooted tree in the literature,
which means just a tree with a specified vertex, called the root of the tree.
Our version of the rooted tree is obtained from the standard version
by attaching an edge to the root, with the other vertex marked by $\circ$.
With this understood, we can recall the Feynman rules for $I_0$ \cite[Theorem 3.1]{Zhou-1D}:
\be \label{eqn:I0-Feynman-1}
I_0 = \sum_{\text{$\Gamma$ is a rooted tree}} \frac{1}{|\Aut \Gamma|} w_\Gamma,
\ee
where the weight of $\Gamma$ is given by
\be \label{eqn:I0-Feynman-2}
w_\Gamma = \prod_{v\in V(\Gamma)} w_v \cdot \prod_{e\in E(\Gamma)} w_e,
\ee
with $w_e$ and $w_v$ given by the following Feynman rule:
\be \label{eqn:I0-Feynman-3}
\begin{split}
& w(e) = 1, \\
& w(v) = \begin{cases}
t_{\val(v)-1}, & \text{if $v$ is not the root vertex $\circ$}, \\
1, & \text{if $v$ is the root vertex $\circ$}.
\end{cases}
\end{split}
\ee
For example,
$$
\xy
(0,0); (5,0), **@{-}; (0, 0)*+{\bullet}; (5,0)*+{\circ}; (2.5,-4)*+{t_0};
(10,0); (20,0), **@{-};  (10, 0)*+{\bullet}; (15, 0)*+{\bullet}; (20,0)*+{\circ}; (15,-4)*+{t_0t_1};
(25,0); (35,0), **@{-};  (30,0); (30,5), **@{-};
(25, 0)*+{\bullet}; (30, 0)*+{\bullet}; (30, 5)*+{\bullet}; (35,0)*+{\circ}; (30,-4)*+{\half t_0^2t_2};
(40,0); (55,0), **@{-};  (40, 0)*+{\bullet}; (45, 0)*+{\bullet}; (50, 0)*+{\bullet}; (55,0)*+{\circ};
(47.5,-4)*+{t_0t_1^2};
\endxy
$$
give the first few terms of $I_0$.
Secondly, all the other renormalized coupling constants
$I_k$ have similar combinatorial interpretations.
We need to extend our version of the rooted trees to a notion of {\em rooted tree of type $k$}.
This means a tree that is obtained from a standard rooted tree by
attaching $k+1$ edges to the root vertex,
and mark each of the extra $k+1$ vertices by $\circ$.
The renormalized coupling constant $I_k$ is given by a sum over rooted trees of type $k$:
\be
\frac{1}{k!}I_k = \sum_{\text{$\Gamma$ is a rooted tree of type $k$}} \frac{1}{|\Aut \Gamma|} w_\Gamma,
\ee
where the weight of $\Gamma$ is given by
\be
w_\Gamma = \prod_{v\in V(\Gamma)} w_v \cdot \prod_{e\in E(\Gamma)} w_e,
\ee
with $w_e$ and $w_v$ given by the following Feynman rule:
\bea
&& w(e) = 1, \\
&& w(v) = \begin{cases}
t_{\val(v)-1}, & \text{if $v$ is not a vertex marked by $\circ$}, \\
1, & \text{if $v$ is a vertex marked by $\circ$}  .
\end{cases}
\eea
For example,
$$
\xy
(-10,3); (-15,0), **@{-}; (-10,-3), **@{-}; (-10,3)*+{\circ}; (-15,0)*+{\bullet}; (-10,-3)*+{\circ};
(-12,-8)*+{\half t_1};
(0,0); (5,0), **@{-}; (10,3);  (5,0), **@{-};(10,-3), **@{-};
(0, 0)*+{\bullet}; (5,0)*+{\bullet}; (10,3)*+{\circ}; (10,-3)*+{\circ};
(4,-8)*+{\half t_0t_2};
(15,0); (25,0), **@{-}; (30,3); (25,0), **@{-}; (30,-3), **@{-};
(15, 0)*+{\bullet}; (20, 0)*+{\bullet}; (25,0)*+{\bullet}; (30,3)*+{\circ}; (30,-3)*+{\circ};
(23,-8)*+{\half t_0t_1t_2};
(35,0); (45,0), **@{-};  (40,0); (40,5), **@{-}; (50,3); (45,0), **@{-}; (50,-3), **@{-};
(35, 0)*+{\bullet}; (40, 0)*+{\bullet}; (40, 5)*+{\bullet};
(45,0)*+{\bullet}; (50,3)*+{\circ}; (50,-3)*+{\circ};
(42,-8)*+{\frac{1}{4} t_0^2t_2^2};
(55,0); (70,0), **@{-}; (75,3); (70,0), **@{-}; (75,-3), **@{-};
(55, 0)*+{\bullet}; (60, 0)*+{\bullet}; (65, 0)*+{\bullet};
(70,0)*+{\bullet}; (75,3)*+{\circ}; (75,-3)*+{\circ};
(65,-8)*+{\half t_0 t_1^2 t_2};
\endxy
$$
give the first few terms of $\frac{1}{2!}I_1$.

Thirdly,
taking the Lagrange inversion also has both a closed formula and
a combinatorial interpretation exactly as in the case of $I_0$,
in fact by the results in \cite{Zhou-1D},
the Lagrange inversion of
\be
z = \frac{w}{ J_0 + \sum_{n \geq 1} J_n \frac{w^n}{n!}}
\ee
is solved by the following explicit formula:
\be
w = \sum_{k=1}^\infty \frac{z^k}{k}
\sum_{p_1 + \cdots + p_k = k-1} \frac{J_{p_1}}{p_1!} \cdots
\frac{J_{p_k}}{p_k!}.
\ee
This is proved in \cite[Proposition 2.2]{Zhou-1D}
with the only change $t_n \to J_n$.
In particular,
the first few terms for $I_0$ given above also gives the first few terms
of $w$  with this change.
Similarly,
the Feynman rules that expresses $I_0$ as an enumeration
of the rooted trees also work for $w$ with this change.
So after we apply these ideas to \eqref{eqn:Thin-Lag},
we prove the following:

\begin{thm} \label{thm:z-in-v}
The thin deformation of the thin spectral curve of Hermitian $N \times N$-matrix model
can alternatively be given by a formal power series in  $v=y/\sqrt{2}$
defined explicitly by:
\be \label{eqn:Gen-Catalan}
z = I_0 +  \sum_{k=1}^\infty \frac{v^k}{k}
\sum_{p_1 + \cdots + p_k = k-1} \frac{J_{p_1}}{p_1!} \cdots
\frac{J_{p_k}}{p_k!},
\ee
where $J_n$ are given by
\begin{align}
J_0 & =2N, & J_1 &=0, &
J_{n+1} & = (n+1)(I_n - \delta_{n,1}), \quad n \geq 1.
\end{align}
Furthermore,
this series is also given by enumeration over rooted trees
as in \eqref{eqn:I0-Feynman-1} - \eqref{eqn:I0-Feynman-3}
but with $t_n$ changed to $J_n$.
\end{thm}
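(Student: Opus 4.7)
The plan is to recognize equation \eqref{eqn:Thin-Lag} as an instance of the canonical Lagrange inversion problem solved in \cite[Prop.~2.2]{Zhou-1D} and then transfer the combinatorial interpretation from that result. First I would rewrite the denominator of \eqref{eqn:Thin-Lag} in the normalized form that Proposition~2.2 requires. Reindexing via $m = n+1$,
\[
2N + \sum_{n\geq 1} (I_n - \delta_{n,1}) \frac{w^{n+1}}{n!}
= 2N + \sum_{m\geq 2} m (I_{m-1} - \delta_{m-1,1}) \frac{w^m}{m!}
= J_0 + \sum_{m\geq 1} J_m \frac{w^m}{m!},
\]
where $J_0 = 2N$, $J_1 = 0$, and $J_{m} = m(I_{m-1} - \delta_{m-1,1})$ for $m \geq 2$. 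This matches exactly the definition of $J_n$ in the theorem statement, and shows that \eqref{eqn:Thin-Lag} takes the form
\[
v = \frac{w}{J_0 + \sum_{n\geq 1} J_n \frac{w^n}{n!}}.
\]

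Next I would invoke \cite[Prop.~2.2]{Zhou-1D}, which gives the closed-form Lagrange inversion for precisely such functional equations. With the formal variable $t_n$ of that proposition replaced by $J_n$ and the formal variable there replaced by $v$, the proposition yields
\[
w = \sum_{k=1}^\infty \frac{v^k}{k} \sum_{p_1 + \cdots + p_k = k-1}
\frac{J_{p_1}}{p_1!} \cdots \frac{J_{p_k}}{p_k!}.
\]
Since $w = z - I_0$, adding $I_0$ to both sides gives \eqref{eqn:Gen-Catalan}. Note the theorem's expression $v = y/\sqrt{2}$ should read $v = 1/(\sqrt{2}y)$ to be consistent with the change of variables introduced just before \eqref{eqn:Thin-Lag}.

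For the combinatorial interpretation, the Feynman rules \eqref{eqn:I0-Feynman-1}--\eqref{eqn:I0-Feynman-3} enumerating rooted trees are a direct combinatorial translation of the closed formula for $I_0$ in \eqref{eqn:I0}. Since the closed formula for $w$ obtained from Proposition~2.2 has exactly the same shape as \eqref{eqn:I0} modulo the relabeling $t_n \mapsto J_n$, the same Feynman rules apply verbatim, with vertex weights $w(v) = J_{\val(v)-1}$ at non-root vertices and $w(v) = 1$ at the root vertex $\circ$.

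I expect no genuine obstacle here: the result is purely a corollary of an already-proved Lagrange inversion identity, and the whole content of the proof is the identification of the shifted coupling constants $J_n$ that put \eqref{eqn:Thin-Lag} into standard form. The only subtle point worth double-checking is the reindexing of the series and the reconciliation of the $v$ versus $1/(\sqrt{2}y)$ convention; both are bookkeeping issues rather than structural difficulties.
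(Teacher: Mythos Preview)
Your proposal is correct and matches the paper's own argument essentially verbatim: the paper also reduces \eqref{eqn:Thin-Lag} to the standard Lagrange inversion form of \cite[Prop.~2.2]{Zhou-1D} via the substitution $t_n \mapsto J_n$, reads off the closed formula, and transfers the rooted-tree Feynman rules by the same relabeling. Your observation that $v = y/\sqrt{2}$ should read $v = 1/(\sqrt{2}\,y)$ is also correct and consistent with the paper's own definition of $v$ just before \eqref{eqn:Thin-Lag}.
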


The following are the first few terms of $z$:
\ben
z & = & I_0 + (2N)v+\frac{1}{2}(2N)^2(2\tilde{I}_1)v^3
+\frac{1}{6}(2N)^3(3I_2)v^4 \\
& + & \biggl(\frac{1}{24}(4I_3)(2N)^4+\frac{1}{2}(2N)^3(2\tilde{I}_1)^2\biggr)v^5 \\
& + & \biggl(\frac{1}{120} (2N)^5(5I_4)
+\frac{5}{12} (2N)^4(2\tilde{I}_1)(3I_2) \biggr)v^6 \\
& + & \biggl(\frac{1}{720}(2N)^6(6I_5) + \frac{1}{12}(2N)^5(3I_2)^2 +
 \frac{1}{8}(2N)^5(2\tilde{I}_1)(4I_3)
+\frac{5}{8}(2N)^4(2\tilde{I}_1)^3\biggr)v^7 \\
& + & \biggl(\frac{1}{5040}(2N)^7(7I_6)+\frac{7}{240} (2N)^6(2\tilde{I}_1)(5I_4)
+\frac{7}{144}(4I_3)(2N)^6(3I_2)\\
&& +\frac{7}{8}  (2N)^5(2\tilde{I}_1)^2 (3I_2)\biggr)v^8 \\
& +& \biggl(\frac{1}{40320} (2N)^8(8I_7)
+ \frac{1}{144}(2N)^7(4I_3)^2 +\frac{1}{180}  (2N)^7(2\tilde{I}_1)(6I_5) \\
&& +  \frac{1}{90}(2N)^7(3I_2) (5I_4) + \frac{7}{8}(2N)^5(2\tilde{I}_1)^4
+ \frac{7}{24} (2N)^6(2\tilde{I}_1)^2(4I_3) \\
&& + \frac{7}{18} (2N)^6(3I_2)^2(2\tilde{I}_1) \biggr)v^9 +\cdots
\een

So far in this Subsection we have used the coordinates $t_n$ or $I_n$,
as a result,
the coefficients of various expression are often fractional numbers.
To obtain integral coefficients, now we change to the coordinates $g_n$.

\begin{prop} \label{prop:Ik}
When expressed as formal power series
in the coordinates $\{g_n\}_{n \geq 1}$,
the renormalized coupling constants $\frac{1}{k!}I_k$ have integral coefficients.
\end{prop}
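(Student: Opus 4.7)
My plan is to derive the integrality of every $\tfrac{1}{k!} I_k$ in the $g_j$-variables from the polymer equation \eqref{eqn:Polymer-Genus-0}, combined with the coordinate change $t_n/n! = g_{n+1}$ (immediate from the relation $T_n = g_n/n = t_{n-1}/n!$ recorded before \eqref{def:y}). Since the paper identifies $I_0 = \pd F_0^{1D}/\pd t_0$, equation \eqref{eqn:Polymer-Genus-0} translates into the functional equation
\be
I_0 = \sum_{n \geq 0} g_{n+1}\, I_0^n = g_1 + g_2 I_0 + g_3 I_0^2 + \cdots ,
\ee
whose right-hand side is a formal series in $I_0$ and the $g_j$'s with non-negative integer coefficients.

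Next I would solve this equation in $\bC[[g_1, g_2, \ldots]]$ by fixed-point iteration in the $(g_j)$-adic topology: set $I_0^{(0)} := 0$ and $I_0^{(m+1)} := \sum_n g_{n+1} (I_0^{(m)})^n$. A short induction shows that each $I_0^{(m)}$ is a polynomial in the $g_j$'s with non-negative integer coefficients, and that the lowest-degree monomial of $I_0^{(m+1)} - I_0^{(m)}$ has degree $\geq m+1$ in the $g_j$'s. Indeed, for $n \geq 1$ one has $(I_0^{(m)})^n - (I_0^{(m-1)})^n = (I_0^{(m)} - I_0^{(m-1)})\cdot P$ for a polynomial $P$ of lowest degree $n-1$ in the $g_j$'s (since $I_0^{(m)},I_0^{(m-1)}$ have no constant term), so the contribution to $I_0^{(m+1)}-I_0^{(m)}$ has lowest degree at least $1+(n-1)+m \geq m+1$. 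The iteration therefore converges to the unique solution of the functional equation in $\bC[[g_1, g_2, \ldots]]$, which must coincide with the series $I_0$ defined in \eqref{eqn:I0}. In particular $I_0 \in \bZ_{\geq 0}[[g_1, g_2, \ldots]]$.

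Finally, for $k \geq 1$, substituting $t_{n+k} = (n+k)!\,g_{n+k+1}$ into \eqref{eqn:Ik} gives
\be
\frac{1}{k!} I_k = \sum_{n \geq 0} \frac{(n+k)!}{k!\,n!}\, g_{n+k+1}\, I_0^n = \sum_{n \geq 0} \binom{n+k}{k}\, g_{n+k+1}\, I_0^n,
\ee
which is an integer combination of products $g_{n+k+1} I_0^n$, so $\tfrac{1}{k!} I_k \in \bZ[[g_1, g_2, \ldots]]$ once the $k=0$ case is known. The only non-routine ingredient is the formal convergence in the second step, but the elementary lowest-degree estimate above suffices. A more conceptual alternative would be the rooted-tree Feynman rules recalled just before the proposition: converting the vertex weight $t_{\val(v)-1} \mapsto (\val(v)-1)!\, g_{\val(v)}$ produces a factor $\prod_{v \neq \circ}(\val(v)-1)!/|\Aut \Gamma|$, which equals $\prod_{v \neq \circ} (\val(v)-1)!/\prod_i k_{v,i}!$ once one notes that $|\Aut \Gamma|$ factorizes as $\prod_{v,i} k_{v,i}!$ over the multiplicities of isomorphic children-subtrees, so each local factor is a multinomial coefficient and hence an integer.
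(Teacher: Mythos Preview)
Your argument is correct, and for $k\geq 1$ it coincides with the paper's: both simply rewrite \eqref{eqn:Ik} as $\frac{1}{k!}I_k=\sum_{n\geq 0}\binom{n+k}{k}g_{n+k+1}I_0^n$ and invoke the $k=0$ case.

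For $k=0$ the two proofs diverge. The paper works directly from the closed formula \eqref{eqn:I0}: after converting to the $g$-variables one obtains a sum of terms $\frac{1}{m_1+\cdots+m_k}\binom{m_1+\cdots+m_k}{m_1,\ldots,m_k}g_1^{m_1}\cdots g_k^{m_k}$ with $\sum m_i=k$ and $\sum(i-1)m_i=k-1$; since $(k,k-1)=1$ forces $\gcd(m_1,\ldots,m_k)=1$, a cited arithmetic lemma (\cite[Proposition~3.1]{Zhou-Mirror}) guarantees each such coefficient is an integer. Your route instead recasts the polymer equation \eqref{eqn:Polymer-Genus-0} as the fixed-point equation $I_0=\sum_{n\geq 0}g_{n+1}I_0^n$ and iterates, with a clean lowest-degree estimate for convergence. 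This is more self-contained (no external integrality lemma) and yields the bonus that the coefficients are non-negative. The paper's approach, on the other hand, identifies each coefficient explicitly as a recognizable combinatorial quantity, which ties in better with the Catalan-number theme emphasized in the surrounding text. The tree-Feynman-rule alternative you sketch at the end is essentially a third route; the factorization $|\Aut\Gamma|=\prod_v\prod_i k_{v,i}!$ is correct for rooted trees (automorphisms permute children-subtrees at each internal vertex), though it would need a line of justification if promoted to the main argument.
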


\begin{proof}
By \eqref{eqn:I0},
\ben
I_0 & = & \sum_{k=1}^\infty \frac{1}{k}
\sum_{p_1 + \cdots + p_k = k-1} g_{p_1+1} \cdots g_{p_k+1} \\
& = &  \sum_{k=1}^\infty \frac{1}{k}
\sum_{\substack{\sum_{i=1}^k m_i= k
\\ \sum_{i=1}^k (i-1)m_i = k-1}}
\frac{(\sum_{i=1}^k m_i)!}{m_1!\cdots m_k!}
g_1^{m_1} \cdots g_k^{m_k} \\
& = & \sum_{\substack{\sum_{i=1}^k m_i= k
\\ \sum_{i=1}^k (i-1)m_i = k-1}} \frac{1}{\sum_{i=1}^k m_i}
\binom{\sum_{i=1}^k m_i}{m_1,\cdots, m_k}
g_1^{m_1} \cdots g_k^{m_k}.
\een
Now since $(k,k-1)=1$,
we know the greatest common divisor of $m_1, \dots, m_k$ is equal to $1$,
i.e.,
\be
(m_1, \dots, m_k) = 1
\ee
therefore,
by \cite[Propospition 3.1]{Zhou-Mirror},
\be
\frac{1}{\sum_{i=1}^k m_i}
\binom{\sum_{i=1}^k m_i}{m_1,\cdots, m_k} \in \bZ.
\ee
This shows that $I_0$ has integral coefficients.
Secondly, by \eqref{eqn:Ik}, for $k \geq 1$,
\ben
\frac{I_k}{k!} & = & \sum_{n \geq 0} \binom{n+k}{k}
g_{n+k} I_0^n,
\een
and so $\frac{I_k}{k!}$ also has integral coefficients.
\end{proof}

\begin{remark}
As the examples in earlier Subsections show,
the integral coefficients of $I_0$ as formal series in $\{g_n\}_{n\geq 1}$
can be thought of as various generalizations of the Catalan numbers.
\end{remark}

As a corollary of Proposition \ref{prop:Ik} and Theorem \ref{thm:z-in-v},
we then have:

\begin{thm} 
The thin deformation of the thin spectral curve of
Hermitian $N \times N$-matrix model
can be given by a formal power series in  $v=y/\sqrt{2}$, $2N$
and $\{g_n\}_{n \geq 1}$
with integral coefficients.
\end{thm}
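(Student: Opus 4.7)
The plan is to obtain this as a short corollary of Theorem \ref{thm:z-in-v} and Proposition \ref{prop:Ik}. Theorem \ref{thm:z-in-v} provides the explicit expansion
\[
z = I_0 + \sum_{k=1}^\infty \frac{v^k}{k}\sum_{p_1+\cdots+p_k=k-1}\frac{J_{p_1}}{p_1!}\cdots \frac{J_{p_k}}{p_k!},
\]
with $J_0 = 2N$, $J_1 = 0$, and $J_{n+1} = (n+1)(I_n-\delta_{n,1})$ for $n\geq 1$. Since Proposition \ref{prop:Ik} already certifies that $I_0 \in \bZ[[g_1,g_2,\ldots]]$, the only remaining task is to show that the coefficient of each $v^k$ lies in $\bZ[2N][[g_1,g_2,\ldots]]$.

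First I would pass to the rescaled variables $\tilde J_n := J_n/n!$. From the definitions one gets $\tilde J_0 = 2N$, $\tilde J_1 = 0$, $\tilde J_2 = I_1 - 1$, and $\tilde J_{n+1} = I_n/n!$ for $n\geq 2$. Proposition \ref{prop:Ik} says exactly that each $I_n/n!$ has integer coefficients in the $g_j$'s, and subtracting the constant $1$ in the $n=1$ case preserves this, so $\tilde J_n \in \bZ[2N][[g_1,g_2,\ldots]]$ for every $n\geq 0$.

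Next I would group the compositions $(p_1,\ldots,p_k)$ in the inner sum by the multiplicities $m_i := \#\{j : p_j = i\}$, rewriting the coefficient of $v^k$ as
\[
\sum_{\substack{\sum_i m_i = k \\ \sum_i i m_i = k-1}}
\frac{1}{k}\binom{k}{m_0,m_1,\ldots,m_{k-1}}\prod_{i\geq 0}\tilde J_i^{m_i}.
\]
The main step, and the only piece that is not purely formal, is to verify that the multinomial prefactor is an integer. The two constraints on the $m_i$ force any common divisor of the tuple $(m_0,\ldots,m_{k-1})$ to divide both $k$ and $k-1$, hence to equal $1$. By the integrality criterion of \cite[Proposition 3.1]{Zhou-Mirror} that was used in the proof of Proposition \ref{prop:Ik}, this coprimality is exactly what guarantees $\tfrac{1}{k}\binom{k}{m_0,\ldots,m_{k-1}} \in \bZ$.

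Combining these two observations, each $v^k$-coefficient lies in $\bZ[2N][[g_1,g_2,\ldots]]$, and adding the integer series $I_0$ finishes the proof. The substantive content is the coprimality of the $m_i$ combined with the multinomial integrality lemma; this is precisely the step that was the heart of Proposition \ref{prop:Ik}, and it is reused here essentially verbatim, so no genuinely new obstacle arises. The only bookkeeping point to watch is the shift $-\delta_{n,1}$ in the definition of the $J_n$'s, which contributes only the harmless constant $-1$ to $\tilde J_2$ and has no effect on integrality.
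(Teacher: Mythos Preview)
Your proof is correct and follows exactly the approach the paper intends: the paper states the theorem merely as a corollary of Proposition \ref{prop:Ik} and Theorem \ref{thm:z-in-v} without further detail, and your argument fills in precisely those details---namely, that each $\tilde J_n = J_n/n!$ is integral by Proposition \ref{prop:Ik}, and that the multinomial factor $\tfrac{1}{k}\binom{k}{m_0,\ldots,m_{k-1}}$ is integral by the same coprimality argument used there. Nothing is missing and nothing diverges from the paper's route.
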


It is clear that the thin special deformation
written in the form of \eqref{eqn:Gen-Catalan}
is a generalization of \eqref{eqn:Signed-Catalan},
a generating series of (signed) Catalan numbers.
Therefore, when it is written as formal power series
in $1/(\sqrt{2}y)$, $(2N)$ and $\{g_n\}_{n \geq 0}$,
the coefficients can again be regarded as generalizations
of Catalan numbers.
So we have reached the following surprising byproduct
as the thin emergent geometry of Hermitian matrix models:
The thin special deformations of the thin spectral curves
of Hermitian matrix models can be used
to define sophisticated generalizations of Catalan numbers.

\section{Fat Emergent Geometry of  Hermitian One-Matrix Models}

\label{sec:Fat}

In this Section we discuss the fat special deformation
of the fat spectral curve of Hermitian one-matrix models.
In the literature (see e.g. \cite[\S 3.2]{DiF}
or \cite[\S 2.2]{Marino}),
the discussions in this Section is usually carried out in the setting of
taking $N \to \infty$.
It is crucial for us to note that it also works for finite $N$.

\subsection{Fat special deformation}

Recall the fat Virasoro constraints are given by the following differential operators
(cf. \cite[\S 3.6]{Zhou-Fat-Thin}):
\ben
&& L_{-1,t} = - \frac{\pd}{\pd g_1}
+  \sum_{n \geq 1} ng_{n+1} \frac{\pd}{\pd g_n} +   tg_1g_s^{-2}, \\
&& L_{0,t} = - 2\frac{\pd}{\pd g_2} +  \sum_{n \geq 1} ng_n \frac{\pd}{\pd g_n}
+  t^2g_s^{-2}, \\
&& L_{1,t} = - 3\frac{\pd}{\pd g_3} + \sum_{n \geq 1} (n+1)g_n \frac{\pd}{\pd g_{n+1}}
+ 2t\frac{\pd}{\pd g_1}, \\
&& L_{m,t} = \sum_{k \geq 1} (k+m) (g_k-\delta_{k,2}) \frac{\pd}{\pd g_{k+m}}
+ g_s^2 \sum_{k=1}^{m-1} k(m-k)\frac{\pd}{\pd g_k} \frac{\pd}{\pd g_{m-k}}
+ 2 tm  \frac{\pd}{\pd g_m},
\een
where $m \geq 2$.
As in the thin case,

\begin{thm} \label{thm:Fat-Existence}
Consider the following series:
\be
y = \frac{1}{\sqrt{2}} \sum_{n=1}^\infty (g_n -\delta_{n,2}) z^{n-1}
+ \frac{\sqrt{2}t}{z} + \sqrt{2}
\sum_{n = 1}^\infty \frac{n}{z^{n+1}} \frac{\pd F_0(t)}{\pd g_n}.
\ee
Then one has:
\be \label{eqn:Fat-Special-Def}
(y^2)_- =  0.
\ee
\end{thm}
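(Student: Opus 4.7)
The plan is to write $y = A + B + C$, with
\[
A(z) = \frac{1}{\sqrt{2}} \sum_{n \geq 1} (g_n - \delta_{n,2}) z^{n-1}, \qquad B(z) = \frac{\sqrt{2}\,t}{z}, \qquad C(z) = \sqrt{2} \sum_{n \geq 1} \frac{n}{z^{n+1}} \frac{\pd F_0}{\pd g_n},
\]
and then to read off the coefficient of $z^{-j}$ in
\[
y^2 = A^2 + B^2 + C^2 + 2AB + 2AC + 2BC
\]
for each $j \geq 1$. Since $A$ is a polynomial in $z$, $(A^2)_- = 0$ automatically, and the contributions of $B^2$, $2AB$, and $2BC$ to negative powers are immediate. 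The real content is bookkeeping the shifts in $2AC$ and $C^2$.

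Next I would derive the genus-zero form of the fat Virasoro constraints. Under the fat genus expansion $\log Z = g_s^{-2} F_0 + F_1 + g_s^2 F_2 + \cdots$, the leading $g_s^{-2}$ coefficient of $L_{m,t} Z = 0$ becomes, for $m \geq 2$,
\[
\sum_{l \geq 1} (l+m)(g_l - \delta_{l,2}) \frac{\pd F_0}{\pd g_{l+m}} + \sum_{l=1}^{m-1} l(m-l) \frac{\pd F_0}{\pd g_l} \frac{\pd F_0}{\pd g_{m-l}} + 2tm \frac{\pd F_0}{\pd g_m} = 0,
\]
together with the purely linear equations extracted from $L_{-1,t}$, $L_{0,t}$, and $L_{1,t}$. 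The bilinear piece survives at leading order precisely because the $g_s^2$ multiplying $\pd_l \pd_{m-l}$ in $L_{m,t}$ pairs with the $g_s^{-4}$ coming from $(\pd_l F_0)(\pd_{m-l} F_0)$ to give a net $g_s^{-2}$ contribution.

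The heart of the proof is then a line-by-line matching. For $j = 1$, the coefficient of $z^{-1}$ in $y^2$ collapses, after factoring out $2$, to $tg_1 + \sum_{l \geq 1} l g_{l+1} \pd_l F_0 - \pd_1 F_0$, vanishing by $L_{-1,t}$. For $j = 2$, the $2t^2$ from $B^2$ combined with the diagonal slice $n = m$ of $2AC$ reproduces $t^2 + \sum_{m \geq 1} m(g_m - \delta_{m,2}) \pd_m F_0$, vanishing by $L_{0,t}$. For $j = k+2$ with $k \geq 1$, the slice $n + p = k$ of $C^2$ supplies $\sum_{l+p=k} lp\, \pd_l F_0\, \pd_p F_0$, the slice $m = n+k$ of $2AC$ supplies $\sum_{l \geq 1} (g_l - \delta_{l,2})(l+k) \pd_{l+k} F_0$, and $2BC$ supplies $2tk\, \pd_k F_0$, which is exactly the genus-zero $L_{k,t}$ equation.

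The only real obstacle is keeping the index shifts straight: in particular, the $\delta_{n,2}$ correction inside $A$ is what converts $\sum (l+k) g_l \pd_{l+k} F_0$ into the full Virasoro constraint, by supplying the isolated $-(k+2)\pd_{k+2} F_0$ term. Conceptually, the reason the fat statement is cleaner than the thin one (Theorem \ref{thm:Thin-Existence}) is structural: the $g_s^2 \pd_l \pd_{m-l}$ term of the fat Virasoro generators sits at the leading order of the fat genus expansion and therefore absorbs the $C^2$ tail that survives as the nontrivial right-hand side in the thin case.
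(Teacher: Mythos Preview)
Your proposal is correct and follows essentially the same approach as the paper's own proof: both expand $y^2$ (the paper works with $y^2/2$), collect the negative-power coefficients, and identify each with a genus-zero fat Virasoro constraint. Your decomposition $y = A + B + C$ is just a convenient relabeling of the paper's single expanded square, and your observation about why the fat statement is cleaner than the thin one is a nice structural remark not made explicit in the paper.
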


\begin{proof}
This is actually equivalent to the Virasoro constraints for $F_0(t)$.
Indeed,
\ben
\frac{y^2}{2} & = & \biggl( \half \sum_{n=1}^\infty (g_n -\delta_{n,2}) z^{n-1}
+ \frac{1}{z} +
\sum_{n = 1}^\infty \frac{n}{z^{n+1}} \frac{\pd F_0(t)}{\pd g_n} \biggr)^2  \\
& = & \frac{1}{4} \biggl( \sum_{n=1}^\infty (g_n -\delta_{n,2}) z^{n-1} \biggr)^2
+ \sum_{n=1}^\infty (g_n -\delta_{n,2}) z^{n-1}
\biggl( \frac{t}{z}
+  \sum_{n = 1}^\infty \frac{n}{z^{n+1}} \frac{\pd F_0(t)}{\pd g_n} \biggr) \\
& + & \biggl( \frac{t}{z}
+ \sum_{n = 1}^\infty \frac{n}{z^{n+1}} \frac{\pd F_0(t)}{\pd g_n} \biggr)^2.
\een
It follows that
\ben
\half (y^2)_- & = & \frac{1}{z} \biggl(t g_1
+ \sum_{n=1}^\infty n(g_{n+1} -\delta_{n,1}) \frac{\pd F_0(t)}{\pd g_{n}}  \biggr) \\
& + & \sum_{m \geq 0}\sum_{k=1}^\infty  \frac{(k+m)}{z^{m+2}} (g_k -\delta_{k,2})
\frac{\pd F_0(t)}{\pd g_{k+m}}   \\
& + & \biggl( \frac{t}{z}
+ \sum_{n = 1}^\infty \frac{n}{z^{n+1}} \frac{\pd F_0(t)}{\pd g_n} \biggr)^2.
\een
The fat Virasoro constraints in genus zero take the following form:
\ben
&& \frac{\pd F_0(t)}{\pd g_1} =
 \sum_{n \geq 1} ng_{n+1} \frac{\pd F_0(t)}{\pd g_n} + tg_1, \\
&& 2\frac{\pd F_0(t)}{\pd g_2} =  \sum_{n \geq 1} ng_n \frac{\pd F_0(t)}{\pd g_n}
+ t^2, \\
&& 3\frac{\pd  F_0(t)}{\pd g_3} = \sum_{n \geq 1} (n+1)g_n \frac{\pd F_0(t)}{\pd g_{n+1}}
+ 2t\frac{\pd F_0(t)}{\pd g_1}, \\
&&  (m+2) \frac{\pd  F_0(t)}{\pd g_{m+2}}
= \sum_{k \geq 1} (k+m) g_k \frac{\pd  F_0(t)}{\pd g_{k+m}}
+ \sum_{k=1}^{m-1} k(m-k)\frac{\pd  F_0(t)}{\pd g_k} \frac{\pd F_0(t)}{\pd g_{m-k}}
+ 2 tm  \frac{\pd F_0(t)}{\pd g_m},
\een
where $m \geq 2$.
By these, the proof is completed.
\end{proof}

We will refer to \eqref{eqn:Fat-Special-Def} as the {\em fat special deformation}
of the {\em fat spectral curve} to be discussed below.
Similar to the thin case, we can characterize
the fat special deformation as follows.

\begin{thm} \label{thm:Uniqueness-Fat}
There exists a unique series
\be
y = \frac{1}{\sqrt{2}} \sum_{n \geq 0} (v_n-\delta_{n,1}) z^{n} + \frac{\sqrt{2}t}{z}
+ \sqrt{2} \sum_{n \geq 0} w_n z^{-n-2}
\ee
such that
each $w_n \in \bC[[v_0, v_1, \dots]]$ and
\be \label{eqn:y2-Fat}
\half(y^2)_- =  0.
\ee
\end{thm}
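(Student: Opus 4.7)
The plan is to imitate the proof of Theorem~\ref{thm:Uniqueness} by solving the infinite system obtained from expanding $(y^2)_-$ in negative powers of $z$, using that $v_1 - 1$ is a unit in $\bC[[v_0, v_1, \dots]]$. Writing $y = y_+ + y_-$ with $y_+ = \frac{1}{\sqrt{2}} \sum_{n \geq 0} (v_n - \delta_{n,1}) z^n$ and $y_- = \frac{\sqrt{2} t}{z} + \sqrt{2} \sum_{n \geq 0} w_n z^{-n-2}$, the condition $\half(y^2)_- = 0$ is equivalent to
\[
2(y_+ y_-)_- + y_-^2 = 0,
\]
because $y_+^2$ has only non-negative powers of $z$.

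Reading off the coefficient of $z^{-m}$ for each $m \geq 1$ produces, after dividing by common factors, a relation $E_m$ of the form
\[
(v_1 - 1) w_{m-1} = -v_0 w_{m-2} - \sum_{k \geq 2} v_k w_{m+k-2} - 2 t w_{m-3} - \sum_{j_1 + j_2 = m - 4} w_{j_1} w_{j_2} + c_m,
\]
with the conventions $w_{-1} = w_{-2} = 0$, the term $2 t w_{m-3}$ present only for $m \geq 3$, the quadratic sum only for $m \geq 4$, and $c_m$ equal to $-t v_0$ if $m = 1$, to $-t^2$ if $m = 2$, and to $0$ otherwise. Since $v_1 - 1$ has inverse $-\sum_{j \geq 0} v_1^j$ in $\bC[[v_0, v_1, \dots]]$, each $E_m$ can be solved formally for $w_{m-1}$.

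Existence and uniqueness then follow from a double induction: an outer induction on the $(v_0, v_1, \dots)$-adic order $d$, and at each fixed $d$ an inner induction on $m$. The base case $d = 1$ sets every $v_k$ to zero and reduces $E_m$ to the explicit recursion $w_0|_{v=0} = 0$, $w_1|_{v=0} = t^2$, and
\[
w_{m-1}|_{v=0} = 2 t \, w_{m-3}|_{v=0} + \sum_{j_1 + j_2 = m - 4} w_{j_1}|_{v=0} \, w_{j_2}|_{v=0} \qquad (m \geq 3),
\]
uniquely determining each $w_n|_{v=0} \in \bC[t]$. In the inductive step, assuming every $w_n$ is known mod $(v)^d$, one processes $E_m$ mod $(v)^{d+1}$ for $m = 1, 2, 3, \dots$ in turn: each such equation expresses $w_{m-1}$ mod $(v)^{d+1}$ in terms of $w_j$ with $j \leq m-3$ at the same filtration level (already supplied by the inner induction) together with terms of the form $v_k \cdot w_\ast$ for which only $w_\ast$ mod $(v)^d$ is required. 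The infinite sums $\sum_k v_k w_\ast$ are well-defined elements of $\bC[[v_0, v_1, \dots]]$ since every fixed monomial in the $v$'s receives contributions from only finitely many $k$.

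The main obstacle will be the bookkeeping underlying this double induction: one must confirm that at each step the new unknown $w_{m-1}$ mod $(v)^{d+1}$ depends only on $w_j$'s with $j < m - 1$ at the same filtration level, together with data already known from the outer hypothesis. This is exactly what the shape of $E_m$ isolated in the second paragraph guarantees, since the only same-filtration contributions arise from $2 t w_{m-3}$ and $\sum_{j_1 + j_2 = m - 4} w_{j_1} w_{j_2}$, whose indices are all strictly smaller than $m - 1$.
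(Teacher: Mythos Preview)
Your argument is correct and is essentially the approach the paper has in mind. The paper does not write out a proof of this theorem; it simply says the fat case is ``similar to the thin case,'' and for the thin analogue (Theorem~\ref{thm:Uniqueness}) it refers to \cite{Zhou-1D}. Your recursion $E_m$ is nothing other than the genus-zero fat Virasoro constraints rewritten in the variables $w_{m-1}=f_m$, $v_{k}=g_{k+1}$ (compare the proof of Theorem~\ref{thm:Fat-Existence}), and your double induction on the $(v)$-adic order and on $m$ is exactly the standard way one solves such a triangular-modulo-$(v)$ system. The key structural point---that the only terms in $E_m$ not carrying an explicit $v_k$ factor involve $w_j$ with $j\leq m-3$---is what makes the inner induction go through, and you have identified it correctly.
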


\begin{thm}
For a series of the form
\be
y =  \frac{1}{\sqrt{2}} \frac{\pd S(z, \bt)}{\pd z}
+ \frac{\sqrt{2}N}{z}+ \sqrt{2} \sum_{n \geq 0} w_n z^{-n-2},
\ee
where $S$ is the universal action defined by:
\be \label{eqn:Action2}
S(x) = - \frac{1}{2}x^2 + \sum_{n \geq 1} t_{n-1} \frac{x^n}{n!},
\ee
and each $w_n \in \bC[[t_0, t_1, \dots]]$,
the equation
\be
\half (y^2)_- = 0
\ee
 has a unique solution given by:
\ben
&& y = \frac{1}{\sqrt{2}} \sum_{n=0}^\infty \frac{t_n -\delta_{n,1}}{n!} z^n
+ \frac{\sqrt{2}}{z} + \sqrt{2}
\sum_{n = 1}^\infty \frac{n!}{z^{n+1}} \frac{\pd F_{0}(t)}{\pd t_{n-1}}.
\een
\end{thm}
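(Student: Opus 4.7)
The plan is to derive this theorem as a direct consequence of the two immediately preceding results (Theorem \ref{thm:Fat-Existence} and Theorem \ref{thm:Uniqueness-Fat}) once the $g_n$-indexing of the Virasoro formulation and the $t_{n-1}$-indexing of the action formulation are reconciled. The substantive content—that the fat genus-zero Virasoro constraints are equivalent to the vanishing of $(y^2)_-$—has already been extracted, so what remains is essentially bookkeeping.

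First I would unpack the derivative of the universal action
$$\frac{1}{\sqrt{2}}\frac{\partial S(z,\mathbf{t})}{\partial z} = -\frac{z}{\sqrt{2}} + \frac{1}{\sqrt{2}}\sum_{n \geq 1} t_{n-1}\frac{z^{n-1}}{(n-1)!},$$
and note that under the correspondence $g_n = t_{n-1}/(n-1)!$, together with the matching derivative identity $n\,\partial_{g_n} = n!\,\partial_{t_{n-1}}$, the polynomial part of the ansatz here coincides with the polynomial part $\frac{1}{\sqrt{2}}\sum_{n \geq 0}(v_n-\delta_{n,1})z^n$ of Theorem \ref{thm:Uniqueness-Fat}, with $v_{n-1}\leftrightarrow t_{n-1}/(n-1)!$. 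The $1/z$ term and the tail $\sqrt{2}\sum_{n\geq 0} w_n z^{-n-2}$ already coincide, so the two ansatzes are formally the same object.

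For existence I would substitute the explicit candidate built from $F_0(t)$ of the Hermitian $N\times N$-matrix model into $y^2$ and compute $(y^2)_-$; this reduces to the calculation in the proof of Theorem \ref{thm:Fat-Existence}, where each negative-power coefficient of $y^2/2$ is matched to a fat genus-zero Virasoro constraint for $F_0$ and therefore vanishes. For uniqueness I would invoke Theorem \ref{thm:Uniqueness-Fat} directly: with $t_0,t_1,\ldots$ prescribed, the condition $(y^2)_- = 0$ determines the coefficients $w_n\in\mathbb{C}[[t_0,t_1,\ldots]]$ recursively by a triangular system in the negative powers of $z$, so the $F_0$-built series must coincide with the unique admissible solution. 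The main potential obstacle is therefore purely notational—tracking the normalizations relating $\partial_{g_n}$ and $\partial_{t_{n-1}}$—and no new computation beyond what was carried out in the preceding two theorems is required.
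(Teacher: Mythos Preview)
Your proposal is correct and matches the paper's implicit approach: the paper states this theorem without proof, treating it as an immediate consequence of Theorem~\ref{thm:Fat-Existence} (existence) and Theorem~\ref{thm:Uniqueness-Fat} (uniqueness), exactly as you argue, with the passage from $g_n$ to $t_{n-1}$ being pure bookkeeping via $g_n = t_{n-1}/(n-1)!$ and $n\,\partial_{g_n} = n!\,\partial_{t_{n-1}}$.
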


\subsection{Fat special deformation in terms of resolvent}

The resolvent $\omega$ is defined by:
\be \label{def:Resolvent}
\omega(z) = \frac{t}{z}
+\sum_{n=1}^\infty \frac{n}{z^{n+1}} \frac{\pd F_0(t)}{\pd g_n}.
\ee
For simplicity of notations,
write
\be
f_n:=n\frac{\pd F_0(t)}{\pd g_n}.
\ee
Then by the fat Virasoro constraints in genus zero we have:
\ben
\omega(z) & = & \frac{t}{z}
+ \frac{1}{z^2} \biggl(\sum_{n \geq 1} g_{n+1} f_n + tg_1\biggr) \\
& + & \frac{1}{z^3} \biggl( \sum_{n \geq 1} g_n f_n
+ t^2\biggr)
+ \frac{1}{z^4}\biggl( \sum_{n \geq 1} g_n f_{n+1}
+ 2tf_1 \biggr) \\
& + & \sum_{m \geq 2} \frac{1}{z^{m+2}}
\biggl(\sum_{k \geq 1} g_k f_{k+m}
+ \sum_{k=1}^{m-1} f_k f_{m-k}
+ 2 tm  f_m \biggr) \\
& = & \frac{1}{z} \omega^2(z) + \frac{t}{z}+ \frac{g_1}{z} \omega(z)
+ g_2 \biggl(\omega(z) - \frac{t}{z}\biggr) \\
& + & g_3z \biggl(\omega(z) - \frac{t}{z} - \frac{f_1}{z^2} \biggr)+ \cdots \\
& = & \frac{1}{z} \omega^2(z) +\frac{1}{z} (g_1 +g_2z+g_3z^2+\cdots) \omega(z) \\
& + & \biggl(\frac{t}{z} - g_2 \frac{t}{z} - g_3z\biggl(\frac{t}{z} + \frac{f_1}{z^2}\biggr)
- \cdots \biggr).
\een
It can be rewritten in the following form:
\ben
&& \omega^2(z) + (g_1+(g_2-1)z +g_3z^2+\cdots) \cdot \omega(z) \\
& -& (g_2-1) t  - g_3z^2\biggl(\frac{t}{z} + \frac{f_1}{z^2}\biggr)    - \cdots
= 0 \een
In other words,
the fat special deformation in term of the resolvent is given by:
\be \label{eqn:Spec-Resolvent}
\omega^2(z) +S'(z)\cdot \omega(z) +P(z) = 0£¬
\ee
where $S$ is the action
\be
S(x) := \sum_{n=1}^\infty \frac{1}{n} (g_n- \delta_{n,2}) x^n,
\ee
and $P(z)$ is determined by $S(z)$ as follows:
\be \label{eqn:P}
P(z) = -(g_2-1) t  - g_3z^2\biggl(\frac{t}{z} + \frac{f_1}{z^2}\biggr)    - \cdots.
\ee
It is clear from the above formulas
when one takes $g_k = 0 $ for $k \geq n$,
then to find the fat spectral curve it suffices
to find only $f_1, f_2, \dots, f_{n-2}$.
From \eqref{eqn:Spec-Resolvent} we get:
\be
\omega(z) = \frac{-S'(z)- \sqrt{S'(z)^2 - 4P(z)}}{2}.
\ee

For the treatment from the large $N$ point of view,
see e.g. \cite[\S 3.2]{DiF}.

\subsection{The emergence of semi-circle law}

Let us first let all $g_k=0$ for all $k \geq 0$.
Then we have
\begin{align*}
S(x)& = -\half x^2, & P(z) & = t,
\end{align*}
and so
\be \label{eqn:Catalan}
\omega(z) = \frac{z-\sqrt{z^2-4t}}{2}.
\ee
It is well-known that
\be \label{eqn:Fat-Catalan}
\omega(z) = \sum_{n \geq 0}C_n \frac{t^{n+1}}{z^{2n+1}},
\ee
where $C_n$ are the Catalan numbers:
\be
C_n = \frac{1}{n+1}\binom{2n}{n}.
\ee
This means that
\be
n \frac{\pd F_0(t)}{\pd g_n}\biggl|_{g_k =0, k \geq 1}
= \begin{cases}
C_m, & n = 2m, \\
0, & \text{otherwise}.
\end{cases}
\ee

From \eqref{eqn:Catalan} we get:
\be \label{eqn:Fat-Spectral}
\omega^2 - z \omega + t = 0,
\ee
or equivalently,
\be \label{eqn:Fat-Spectral2}
z = \omega +\frac{t}{\omega}.
\ee
By comparing with \eqref{eqn:Spec-Thin},
we see a strange duality.

On the other hand,
\be
y = \frac{-\sqrt{z^2-4t}}{\sqrt{2}}.
\ee
So the spectral curve in terms of the field $y=y(z)$ is given by the algebraic curve:
\be \label{eqn:Fat-Spec}
z^2 - 2y^2 = 4t.
\ee

For the treatment from the large $N$ point of view,
see e.g. the Example in \cite[\S 2.2]{Marino}.

We will refer to the plane algebraic curve \eqref{eqn:Fat-Spec} as
the {\em fat spectral curve} of Hermitian one-matrix models.
It is equivalently given in terms of the resolvent by \eqref{eqn:Catalan}
or \eqref{eqn:Fat-Catalan}.
These equations are the dual versions of \eqref{eqn:Signed-Catalan},
so it is tempting to say that the thin spectral curve and the fat spectral curve
are dual to each other,
furthermore, the thin special deformation defined by \eqref{eqn:Gen-Catalan}
and the fat deformation defined by \eqref{def:Resolvent} are dual to each other.
Because $F_0(t)$ are formal power series in $t$ and $\{g_n\}_{n \geq 1}$ with integral
coefficients,
we see that the fat special deformation also can be used to define generalizations
of Catalan numbers.
In the following several Subsections
we will examine some examples
to gain more insights into the nature of the plane algebraic curves
defined by the fat special deformations.

\subsection{Fat special deformation along the $g_1$-line}
\label{sec:Fat-g1}
Let us first let all $g_k=0$ for all $k \geq 1$.
Then we have
\begin{align*}
S(x)& = -\half x^2 +g_1x, & P(z) & = t,
\end{align*}
and so
\be \label{eqn:Motzkin}
\omega(z) = \frac{z-g_1-\sqrt{(z-g_1)^2-4t}}{2}.
\ee
Its expansion is related to the {\em Motzkin polynomials}:
\be
R_n(x): = \sum_{k=0}^{[n/2]} T(n, k) x^k,
\ee
where $T_{n,k}$ are defined by:
\be
T(n,k) = \frac{n!}{(n-2k)! k! (k+1)!}.
\ee
More precisely,
\be
\omega(z) = \sum_{n \geq 0} \sum_{k=0}^{[n/2]} T(n,k) \frac{t^{k+1}g_1^{n-2k}}{z^{n+1}},
\ee
This mean
\be
n \frac{\pd F_0(t)}{\pd g_n}\biggl|_{g_k =0, k \geq 2}
=  \sum_{k=0}^{[n/2]} T(n,k) t^{k+1}g_1^{n-2k}.
\ee

Alternatively,
write
\ben
f_n:=n\frac{\pd F_0(t)}{\pd g_n} \biggl|_{g_k =0, k \geq 2}.
\een
Then by the fat Virasoro constraints one has the following recursion relations:
\be
\begin{split}
&f_1 =   tg_1, \\
&f_2 =  g_1 f_1 +  t^2, \\
&f_3 = g_1 f_2+ 2tf_1, \\
& f_{m+2} = g_1 f_{m+1} + \sum_{k=1}^{m-1} f_k f_{m-k} + 2 t f_m,
\qquad m \geq 1.
\end{split}
\ee
It is natural to set
\be
f_0: = t.
\ee
Then one has
\ben
\omega(z) &= &\sum_{n \geq 0}\frac{f_n}{z^{n+1}}
= \frac{t}{z} + \frac{g_1}{z} \omega(z) + \frac{1}{z} \omega^2(z),
\een
and from this one can again get \eqref{eqn:Motzkin}.

The first few of $f_n$ are
\ben
f_0 &=& t, \\
f_1 &=& t g_1, \\
f_2 &=& t g_1^2+t^2, \\
f_3 &=& t g_1^3+3t^2g_1, \\
f_4 &=& t g_1^4+6t^2g_1^2+2t^3, \\
f_5 &=& t g_1^5+10t^2g_1^3+10t^3g_1, \\
f_6 &=& t g_1^6+15t^2g_1^4+30t^3g_1^2+5t^4, \\
f_7 &=& t g_1^7+21t^2g_1^5+70t^3g_1^3+35t^4g_1.
\een

From \eqref{eqn:Motzkin} we get:
\be
\omega^2 - (z-g_1) \omega + t = 0,
\ee
and
\be
z = \omega +\frac{t}{\omega} + g_1.
\ee
These are deformations of \eqref{eqn:Fat-Spectral2} and \eqref{eqn:Fat-Spectral} respectively.
The field $y$ is deformed to
\be \label{eqn:Fat-Def-g1}
y = \frac{-\sqrt{(z-g_1)^2-4t}}{\sqrt{2}},
\ee
and so \eqref{eqn:Fat-Spec} is deformed to£º
\be \label{eqn:Fat-Def-g1-2}
(z-g_1)^2 - 2y^2 = 4t.
\ee
So as in the thin case,
the fat special deformation along the $g_1$-line is obtained by changing $z$ to $z-g_1$.

\subsection{Fat special deformation   along the $g_2$-line}
\label{sec:Fat-g2}

Next we let $g_n=0$  for $n \neq 2$,
i.e.,
\begin{align*}
S(x) & = - \half (1-g_2) x^2, &
P(z) & = (1-g_2)t.
\end{align*}
From this we get the resolvent:
\be \label{eqn:Motzkin-g2}
\omega(z) = \frac{(1-g_2)z - \sqrt{(1-g_2)^2z^2 - 4(1-g_2)t}}{2},
\ee
and so
\be \label{eqn:Fat-Def-g2}
\frac{y}{\sqrt{2}} = \half (g_2-1)z+ \omega(z)= - \frac{\sqrt{(1-g_2)^2z^2-4(1-g_2)t}}{2}.
\ee
Therefore,
the fat special deformation along the $g_2$-line leads to the plane algebraic curve:
\ben
&& (1-g_2)^2 z^2 - 2 y^2 = 4 (1-g_2) t,
\een
this can be obtained from \eqref{eqn:Fat-Spec} by
making the following changes:
\begin{align*}
y & \mapsto \frac{y}{1-g_2}, & t &\mapsto \frac{t}{1-g_2}.
\end{align*}

The expansion of the resolvent in this case is
\ben
\omega(z)
& = & \sum_{n=0}^\infty \frac{1}{n+1} \binom{2n}{n} \frac{t^{n+1}}{(1-g_2)^{2n+1}}
 \frac{1}{z^{2n+1}}.
\een
From this one gets
\be
n\frac{\pd F_0}{\pd g_n} \biggl|_{g_k=0, k\neq 2}
= \frac{1}{n+1} \binom{2n}{n} \frac{t^{n+1}}{(1-g_2)^{2n+1}}.
\ee

\subsection{Fat special deformations on the $(g_1,g_2)$-plane}
\label{sec:Fat-g1g2}

Next we let $g_n=0$  for $n \geq 3$,
i.e.,
\begin{align*}
S(x) & = g_1x- \half (1-g_2) x^2, &
P(z) & = (1-g_2)t.
\end{align*}
From this we get the resolvent:
\be  \label{eqn:omega-g1g2}
\omega(z) = \frac{(1-g_2)z-g_1 - \sqrt{[(1-g_2)z-g_1]^2 - 4(1-g_2)t}}{2},
\ee
and so
\be
\frac{y}{\sqrt{2}} = \half ((g_2-1)z+g_1)+ \omega(z)
= - \frac{\sqrt{((1-g_2)z-g_1)^2-4(1-g_2)t}}{2}.
\ee
This can be obtained from \eqref{eqn:Fat-Def-g1}
 by
making the following changes:
\begin{align*}
y & \mapsto \frac{y}{1-g_2}, & t &\mapsto \frac{t}{1-g_2}, &
g_1 & \mapsto \frac{g_1}{1-g_2}.
\end{align*}
We rewrite \eqref{eqn:omega-g1g2} in the following form£º
\be
\bigg(\frac{\sqrt{2}\omega}{1-g_2}\biggr)^2
- \sqrt{2}\frac{\sqrt{2}\omega}{1-g_2} \biggl(z-\frac{g_1}{1-g_2}\biggr)
+ \frac{2t}{1-g_2} = 0.
\ee
This can be regarded as a dual to \eqref{eqn:Thin-Def-g1g2}:
$\frac{\sqrt{2}\omega}{1-g_2}$ in this equation plays the role of
$z-\frac{g_1}{1-g_2}$ in \eqref{eqn:Thin-Def-g1g2},
$z-\frac{g_1}{1-g_2}$ in this equation plays the file of
$\frac{y}{1-g_2}$ in \eqref{eqn:Thin-Def-g1g2},
and $\frac{2t}{1-g_2}$ in this equation plays the role of
$-\frac{2N}{1-g_2}$ in \eqref{eqn:Thin-Def-g1g2}.

The expansion of \eqref{eqn:omega-g1g2} is
\be
\omega(z) = \sum_{n \geq 0} \sum_{k=0}^{[n/2]} T(n,k)
\frac{t^{k+1}g_1^{n-2k}}{(1-g_2)^{n-k}z^{n+1}},
\ee
from this one can get closed formula for $\corr{p_1^k(\frac{p_2}{2})^lp_n}_0(t)$.

\subsection{Fat special deformations along the $g_3$-line}
\label{sec:Fat-g3}

Next we let $g_n=0$  for $n \neq 2$,
i.e.,
\begin{align*}
S(x) & = - \half  x^2 + \frac{g_3}{3} x^3, &
P(z) & = t  - g_3z^2\biggl(\frac{t}{z} + \frac{f_1}{z^2}\biggr),
\end{align*}
where
\ben
&& f_n = n\frac{\pd F_0(t)}{\pd g_n} \biggl|_{g_k =0, k \neq 3}.
\een
From this we get the resolvent:
\be \label{eqn:omega-g3}
\omega(z) = \frac{z -g_3z^2 -
\sqrt{(z-g_3z^2)^2  - 4(t  - g_3tz - g_3f_1)}}{2},
\ee
and so
\be
y = \half (g_3z^2-z)+ \omega(z)= - \frac{\sqrt{(z-g_3z^2)^2 - 4(t  - g_3tz - g_3f_1)}}{2}.
\ee
The spectral curve is then deformed to the following hyperelliptic curve on the
$(y,z)$-plane:
\be \label{eqn:SpecCurv-g3}
4y^2 = (z-g_3z^2)^2 - 4(t  - g_3tz - g_3f_1).
\ee
To make this result more explicit,
we need to find explicit formula for $f_1$.
Note in the above expressions,
\be
f_1 = \sum_{n \geq 0} \frac{1}{n!} \corr{p_1 \big(\frac{p_3}{3}\big)^n}_0(t) \cdot g_3^n,
\ee
where $\corr{p_1 \big(\frac{p_3}{3}\big)^n}_0(t)$ denote the fat genus zero correlators
in the sense of \cite{Zhou-Fat-Thin}.
For small $n$,
these correlators can be computed using the fat Virasoro constraints,
\ben
f_1 & = & \sum_{n \geq 1} \frac{1}{n!} \cdot n \cdot
\corr{p_2 \big(\frac{p_3}{3}\big)^{n-1}}_0(t) \cdot g_3^n \\
& = & g_3t^2 +\sum_{n \geq 2} \frac{1}{(n-1)!} \cdot 3 (n-1) \cdot
\corr{\big(\frac{p_3}{3}\big)^{n-1}}_0(t) \cdot g_3^n \\
& = & g_3 t^2 + 3 \sum_{m\geq 1} \frac{1}{(2m-1)!} \cdot
\corr{\big(\frac{p_3}{3}\big)^{2m}}_0(t) \cdot g_3^{2m+1}.
\een
Using the fat graphs,
one see that the computation of
$\frac{1}{n!} \corr{p_1 \big(\frac{p_3}{3}\big)^n}_0(t)$
is equivalent to the problem of rooted planar trivalent graphs,
hence is given by the sequence A002005 on \cite{Sloane},
and so by e.g. \cite[\S 3]{Krasko-Omel} we have for $n \geq 1$,
\be
\frac{1}{n!} \corr{p_1 \big(\frac{p_3}{3}\big)^{n}}_0(t)
= \begin{cases}
\frac{2^{2m+1}(3m)!!}{(m+2)!m!!} t^{m+2}, & n = 2m+1, \\
0, & n=2m.
\end{cases}
\ee
One can derive from this
\be
\corr{p_1 \big(\frac{p_3}{3}\big)^{2m+1}}_0(t)
= \frac{2^{2m+1}(2m+1)!(3m)!!}{(m+2)!m!!} t^{m+2},
\ee
and
\be
\corr{\big(\frac{p_3}{3}\big)^{2m}}_0(t)
= \frac{2^{2m+1}(2m-1)!(2m+1)!(3m)!!}{3\cdot (m+2)!m!!} t^{m+2}.
\ee
It follows that
\be \label{eqn:f1-g3}
f_1 = \sum_{m\geq 0} \frac{2^{2m+1}(3m)!!}{(m+2)!m!!} t^{m+2} g_3^{2m+1},
\ee
and so the specially deformed spectral curve
 \eqref{eqn:SpecCurv-g3}  along the $g_3$-line is explicitly given by
\be
y^2 = \frac{1}{4}(z-g_3z^2)^2 - t  + g_3tz
+ \sum_{m\geq 0} \frac{2^{2m+1}(3m)!!}{(m+2)!m!!} t^{m+2} g_3^{2m+2}.
\ee
To understand this hyperelliptic curve,
we need to study the discriminant $\Delta$ of the right-hand side.
Using Maple,
we find:
\be
\Delta=\frac{1}{16}g_3^5\biggl(64g_3^4f_1^3+(g_3-96g_3^3t)f_1^2
+(30g_3^2t^2-t)f_1-27g_3^3t^4+g_3t^3\biggr),
\ee
this vanishes because it is known from Entry  A002005 on \cite{Sloane}
that if
\ben
y(x) = \sum_{m\geq 0} \frac{2^{2m+1}(3m)!!}{(m+2)!m!!} x^m,
\een
then
\be
64x^3y^3 + x(1-96x)y^2 + (30x-1)y - 27x + 1 = 0.
\ee
It follows that the specially deformed spectral curve has the following form:
\be
y^2 = (1-g_3z+\alpha)^2(z-a_+)(z-a_-).
\ee
In \S \ref{sec:Fat-g3-again} we will go in the reverse direction and use this as a method
to compute computing $f_1$ and hence $\corr{\big(\frac{p_3}{3}\big)^{2m}}_0(t)$
for all $m$.

The first few of terms of $f_1$ are:
\be
f_1 = g_3t^2 +4 g_3^3t^3 + 32g_3^5t^4 + 336g_3^7t^5 +4096g_3^9t^6
+ 54912 g_3^{11}t^7 + \cdots.
\ee
By \eqref{eqn:omega-g3} the first few terms of the resolvent are:
\ben
\omega & = & \frac{t}{z}+\frac{f_1}{z^2}+ \frac{f_1}{g_3z^3}
+\frac{f_1-g_3t^2}{g_3^2z^4}
+\frac{f_1-g_3t^2-2g_3^2tf_1}{g_3^3z^5} \\
& + & \frac{f_1-g_3t^2-4g_3^2tf_1-g_3^3f_1^2}{g_3^4z^6}
+\frac{f_1+2g_3^3t^3-g_3t^2-6g_3^2tf_1-3g_3^3f_1^2}{g_3^5z^7}+\cdots \\
& = & \frac{t}{z}
+ \frac{1}{z^2} (g_3t^2 +4 g_3^3t^3 + 32g_3^5t^4 + 336g_3^7t^5
+4096g_3^9t^6 + \cdots) \\
& + & \frac{1}{z^3} (t^2+4g_3^2t^3+32g_3^4t^4+336g_3^6t^5
+4096g_3^8t^6 + \cdots)\\
& + & \frac{1}{z^4} (4g_3t^3+32g_3^3t^4+336g_3^5t^5+4096g^7t^6+
54912g_3^9t^7 +\cdots) \\
& + & \frac{1}{z^5} (2t^3+24g_3^2t^4+272g_3^4t^5+3424g_3^6t^6+46720g_3^8t^7
+ \cdots) \\
& + & \frac{1}{z^6} (15g_3t^4+200g_3^3t^5+2672g_3^5t^6+37600g_3^7t^7
+ \cdots) + \cdots.
\een
We regard this as a dual to \eqref{eqn:z-g3},
in both cases the right-hands
are formal power series with integral coefficients which generalize
the Catalan numbers.
The leading terms of the coefficients of $\frac{1}{z^{2n+1}}t^{n+1}$
are $\frac{1}{n+1}\binom{2n}{n}$.
We have checked that the leading terms of the coefficients of $\frac{1}{z^{2n}}$
are $g_3t^{n+1}$ times $\binom{2n}{n-1}$ which are
the sequence A001791 on \cite{Sloane}.

From the above calculation of $\omega$,
one also gets a way to compute the first  few $f_n$'s
and hence the corresponding correlators:
\ben
f_2 & = & \frac{f_1}{g_3}
= \sum_{m\geq 0} \frac{2^{2m+1}(3m)!!}{(m+2)!m!!} t^{m+2} g_3^{2m},\\
f_3 & = &\frac{f_1-g_3t^2}{g_3^2}
= \sum_{m\geq 1} \frac{2^{2m+1}(3m)!!}{(m+2)!m!!} t^{m+2} g_3^{2m-1}, \\
f_4 & = & \frac{f_1-g_3t^2-2g_3^2tf_1}{g_3^3} \\
& = & \sum_{m\geq 1} \frac{2^{2m+1}(3m)!!}{(m+2)!m!!} t^{m+2} g_3^{2m-2}
- 2 \sum_{m\geq 0} \frac{2^{2m+1}(3m)!!}{(m+2)!m!!} t^{m+3} g_3^{2m}\\
& = & \sum_{m \geq 0}\biggl( \frac{2^{2m+3}(3m+3)!!}{(m+3)!(m+1)!!}
- \frac{2^{2m+2}(3m)!!}{(m+2)!m!!}\biggr) t^{m+3} g_3^{2m}, \\
f_5 & = & \frac{f_1-g_3t^2-4g_3^2tf_1-g_3^3f_1^2}{g_3^4}, \\
f_6 & = & \frac{f_1+2g_3^3t^3-g_3t^2-6g_3^2tf_1-3g_3^3f_1^2}{g_3^5}.
\een
The coefficients of $f_4$ are the integer sequence A002006
on \cite{Sloane}: $2$, $24$, $272$, $3424$, $46720$, $\cdots$;
the coefficients of $f_5$ are the integer sequence A002007:
$15$,  $200$, $2672$, $37600$, $\cdots$;
the coefficients of $f_6$ are the integer sequence A002008£º
$5$, $120$, $1840$, $27552$, $\cdots$;
the coefficients of $f_7$ and $f_8$ are the sequences A002009 and A002010 respectively.
These numbers are the numbers of almost trivalent maps in the sense of \cite{Mullin-etal}.
Our discussion in this Subsection
shows that the plane algebraic curve \eqref{eqn:SpecCurv-g3} encode
all the numbers of almost trivalent maps.

\subsection{Fat special deformations on the $(g_1, g_2,g_3)$-space}
\label{sec:Fat-g1g2g3}

Next we let $g_n=0$  for $n \geq 4$,
i.e.,
\begin{align*}
S(x) & = - \half  x^2 + g_1x +\frac{g_2}{2}x^2 + \frac{g_3}{3} x^3, &
P(z) & = (1-g_2) t  - g_3z^2\biggl(\frac{t}{z} + \frac{f_1}{z^2}\biggr),
\end{align*}
where
\ben
&& f_1 = \frac{\pd F_0(t)}{\pd g_1} \biggl|_{g_k =0, k \geq 4}.
\een
From this we get the resolvent:
\be \label{eqn:omega-g1g2g3}
\begin{split}
 \omega(z)  = & \frac{1}{2} \biggl(-g_1+(1-g_2)z-g_3z^2 \\
 - & \sqrt{ (-g_1+(1-g_2)z-g_3z^2)^2- 4((1-g_2)t  - g_3tz - g_3f_1)} \biggr),
\end{split}
\ee
and so
\be
y =   - \frac{
\sqrt{ (-g_1+(1-g_2)z-g_3z^2)^2- 4((1-g_2)t  - g_3tz - g_3f_1)}}{2}.
\ee
The spectral curve is then deformed to
\be \label{eqn:Spec-g1g2g3}
4y^2 = (-g_1+(1-g_2)z-g_3z^2)^2- 4((1-g_2)t  - g_3tz - g_3f_1).
\ee
Again, to make this special deformation more explicit,
we need to find explicit formula for $f_1$.
One can apply the fat dilaton equation and the fat string equation
to reduce the computation of $f_1$ to the computations of
$\corr{\big(\frac{p_3}{3}\big)^{2m}}_0(t)$.
The following are the first few terms of $f_1$:
\ben
f_1 & = & g_1t  +g_3t^2  + t g_2g_1
+ 2t^2 g_3g_2 + tg_3g_1^2 + t g_2^2g_1 + \cdots,
\een
On the other hand,
using Maple we find the discriminant of the degree four polynomial on the
right-hand side of \eqref{eqn:Spec-g1g2g3} is
\ben
\Delta & = & \frac{g_3^5}{16} \biggl(64g_3^4f_1^3
+g_3\big[\big((1-g_2)^2-4g_1g_3\big)^2-96(1-g_2)g_3^2t\big]f_1^2 \\
& + & \bigl[30tg_3^2(1-g_2)^2-(1-g_2)[(1-g_2)^2-4g_1g_3]^2+72tg_1g_3^3 \bigr] f_1 \\
& + & \biggl[-27t^4g_3^3+16t^2g_1^3g_3^2 -t^3g_3(1-g_2)^3
-t^2g_1(1-g_2)^4 \\
&& +8t^2g_1^2g_3(1-g_2)^2+36t^3g_1g_3^2(1-g_2) \biggr].
\een
One can use the vanishing of $\Delta$ to compute
$f_1$ and $\omega$.
This will be justified  in \S \ref{sec:Fat-g1g2g3-again}.

\subsection{Fat special deformations on the $g_4$-line}
\label{sec:Fat-g4}
Let $g_n=0$  for $n \neq 4$,
i.e.,
\begin{align*}
S(x) & = - \half  x^2 +\frac{g_4}{4}x^4, &
P(z) & = t  - g_4z^3\biggl(\frac{t}{z} + \frac{f_1}{z^2} + \frac{f_2}{z^3}\biggr),
\end{align*}
where
\begin{align*}
f_1 & = \frac{\pd F_0(t)}{\pd g_1} \biggl|_{g_k =0, k \neq 4}, &
f_2 & = 2\frac{\pd F_0(t)}{\pd g_2} \biggl|_{g_k =0, k \neq 4}.
\end{align*}
From this we get the resolvent:
\be \label{eqn:omega-g4}
\begin{split}
 \omega(z)  =  \frac{1}{2} \biggl(z-g_4z^3
 - \sqrt{ (z-g_4z^3)^2- 4(t  - g_4tz^2 - g_4f_1z - g_4f_2)} \biggr),
\end{split}
\ee
and so
\be
y =   - \frac{
\sqrt{ (z-g_4z^3)^2- 4(t  - g_4tz^2 - g_4f_1z-g_4f_2)}}{2}.
\ee
The special deformation of the spectral curve
in this case is
\be \label{eqn:Spec-g4}
4y^2 = (z-g_4z^3)^2- 4(t  - g_4tz^2 - g_4f_1z-g_4f_2).
\ee
We need to find explicit formulas for $f_1$ and $f_2$.
Note
\ben
f_1 = \sum_{n \geq 0} \frac{g_4^n}{n!}
\corr{p_1 \big(\frac{p_4}{4}\big)^n}_0^c(t) =0,
\een
where in the second equality we have used the fat
selection rule \cite[(21)]{Zhou-Fat-Thin}.
By the fat dilaton equation,
\ben
f_2 = \sum_{n \geq 0} \frac{g_4^n}{n!}
\corr{p_2 \big(\frac{p_4}{4}\big)^n}_0^c(t)
=  t^2+ \sum_{n \geq 1} 4n \frac{g_4^n}{n!}
\corr{\big(\frac{p_4}{4}\big)^n}_0^c(t).
\een
By using the fat graphs,
up to powers of $t$,
the correlators $\frac{1}{n!}
\corr{p_2 \big(\frac{p_4}{4}\big)^n}_0^c(t)$
are the number of rooted $4$-regular planar maps,
i.e.,
they are A000168 in \cite{Sloane}:
\be
\frac{1}{n!}
\corr{p_2 \big(\frac{p_4}{4}\big)^n}_0^c(t)
=  2\cdot 3^n\cdot \frac{(2n)!}{n!(n+2)!} t^{n+2}.
\ee
The first few terms of $f_2$ are:
\ben
f_2 & = & t^2  +2 t^3 g_4
+ 9t^4g_4^2+ 54t^5g_4^3+ 378t^6g_4^4+ 2916 t^7g_4^5
+ 24057t^8g_4^6+ \cdots.
\een
The generating series
\ben
&& A(z) = \sum_{n \geq 0} 2\cdot 3^n\cdot
\frac{(2n)!}{n!(n+2)!} z^n
\een
can be summed up:
$$A(z) = \frac{(1-12z)^{3/2}-1+18z}{54z^2},$$
hence $A$ satisfies the equation:
$$ 1 - 16z +( 18z-1)A(z) -27z^2A(z)^2 = 0.$$
The discriminant of the right-hand side of \eqref{eqn:Spec-g4}
is up to a constant
\be
(f_2g_4-t)g^{10}(27f_2^2g_4^2+16g_4t^3-18f_2g_4t-t^2+f_2)^2,
\ee
so it vanishes.
Conversely,
one can use this vanishing to compute $f_2$:
\be
f_2 =\frac{1}{54g_4^2} \biggl((1-12g_4t)^{3/2}-1+18g_4t\biggr).
\ee
For a different approach,
see \S \ref{sec:Fat-g4-again}.

The first few terms of the resolvent are
\ben
\omega & = & tx+\frac{f_2}{z^3}+\frac{f_2-t^2}{g_4z^5}
+\frac{f_2-t^2-2tg_4f_2}{g_4^2z^7} \\
& + & \frac{f_2+2g_4t^3-t^2-4tg_4f_2-g_4^2f_2^2 }{g_4^3z^9} \\
& + & \frac{f_2+4g_4t^3-t^2-6tg_4f_2-3g_4^2f_2^2+6g_4^2t^2f_2}{g_4^4z^{11}}+
\cdots,
\een
it follows that $f_{2n-1} = 0$ for $n \geq 1$,
and
\ben
f_4 & = & \frac{f_2-t^2}{g_4}
= \frac{1}{54g_4^3} \biggl((1-12g_4t)^{3/2}-1+18g_4t-54g_4^2t^2\biggr)  \\
& = &  2 t^3 + 9t^4g_4+ 54t^5g_4^2+ 378t^6g_4^3+ 2916 t^7g_4^4
+ 24057t^8g_4^5+ \cdots, \\
f_6 & = & \frac{f_2-t^2-2tg_4f_2}{g_4^2} \\
& = & \frac{1}{54g_4^3} \biggl((1-2g_4t)(1-12g_4t)^{3/2}-1+20g_4t-90g_4^2t^2\biggr) \\
& = & 5t^4 +36g_4t^5+270g_4^2t^6+2160g_4^3t^7 +18225g_4^4t^8 +160380g_4^5t^9  + \cdots, \\
f_8 & = & \frac{f_2+2g_4t^3-t^2-4tg_4f_2-g_4^2f_2^2 }{g_4^3} \\
& = & \frac{7}{54g_4^5} \biggl((2-9g_4t)(1-12g_4t)^{3/2}-2+45g_4t
-270g_4^2t^2+270g_4^3t^3\biggr) \\
& = & 14t^5+140g_4t^6+1260g_4^2t^7+11340g_4^3t^8 +103950g_4^4t^9 + \cdots,  \\
f_{10} & = & \frac{f_2+4g_4t^3-t^2-6tg_4f_2-3g_4^2f_2^2+6g_4^2t^2f_2}{g_4^4} \\
&= & 42t^6+540g_4t^7+5670g+4^2t^8+56700g_4^3t^9+561330g_4^4t^{10}+\cdots.
\een
The coefficients count fat graphs whose vertices are all $4$-valent except
for one vertex.
These numbers have not yet appeared on \cite{Sloane}.

\subsection{Fat special deformation along the $g_6$-line}
\label{sec:Fat-g6}
Let $g_n=0$  for $n \neq 6$,
i.e.,
\begin{align*}
S(x) & = - \half  x^2 +\frac{g_6}{6}x^6, &
P(z) & = t  - g_6z^5
\biggl(\frac{t}{z} + \frac{f_1}{z^2} + \frac{f_2}{z^3}
+ \frac{f_3}{z^4} + \frac{f_4}{z^5} \biggr),
\end{align*}
where
\begin{align*}
f_j & = j\frac{\pd F_0(t)}{\pd g_j} \biggl|_{g_k =0, k \neq 6}.
\end{align*}
From this we get the resolvent:
\be \label{eqn:omega-g6}
\begin{split}
 \omega(z)  =  \frac{1}{2} \biggl(z-g_6z^5
 - \sqrt{ (z-g_6z^5)^2- 4(t  - g_6tz^4
- g_6\sum_{j=1}^4 f_jz^{4-j} )} \biggr),
\end{split}
\ee
and so the special deformation of the spectral curve
in this case is
\be \label{eqn:Spec-g6}
4y^2 = (z-g_6z^5)^2- 4(t  - g_6tz^4 - g_6f_1z^3-g_6f_2z^2
- g_6f_3z-g_6f_4).
\ee
Note by the fat
selection rule \cite[(21)]{Zhou-Fat-Thin},
\ben
f_1 & = &\sum_{n \geq 0} \frac{g_6^n}{n!}
\corr{p_1 \big(\frac{p_6}{6}\big)^n}_0^c(t) =0, \\
f_3 & = &\sum_{n \geq 0} \frac{g_6^n}{n!}
\corr{\frac{p_3}{3} \big(\frac{p_6}{6}\big)^n}_0^c(t) =0,
\een
and by the fat dilaton equation,
\be \label{eqn:f2-g6}
f_2 = \sum_{n \geq 0} \frac{g_6^n}{n!}
\corr{p_2 \big(\frac{p_6}{6}\big)^n}_0^c(t)
=  t^2+ \sum_{n \geq 1} 6n \frac{g_6^n}{n!}
\corr{\big(\frac{p_6}{6}\big)^n}_0^c(t).
\ee
The first few terms of $f_2$ can be computed by e.g. \cite{Zhou-Mat-Mod}:
\ben
f_2 = t^2 + 5t^4g_6 + 100 t^6g_6^2 +\cdots
\een
Similarly, the first few terms of
\be \label{eqn:f4-g6}
f_4 = \sum_{n \geq 0} \frac{g_6^n}{n!}
\corr{p_4 \big(\frac{p_6}{6}\big)^n}_0^c(t)
\ee
can be computed:
\ben
f_4 = 2t^2 + 24t^4g_6  +\cdots.
\een
One can also consider the discriminant in this case,
but unfortunately its vanishing gives an equation that involves
both $f_2$ and $f_4$,
so one needs extra conditions.
In next Section we will find closed formulas for $f_2$ and $f_4$ in this case
by first establishing a result that is stronger than the vanishing of the discriminant.

\section{What is Really Special about Fat Special Deformations?}

\label{sec:Special}

The examples in last Section lead to a remarkable property of the plane algebraic
curves defined by fat special deformation.
Because of the lack of a terminology in the literature
we will say the fat special deformation is {\em formally one-cut}.
This is inspired by the method of one-cut solutions of Hermitian one-matrix models
in the literature.
Again this is often carried in the setting of large $N$-limit in the literature
(see e.g. \cite[\S 2.2]{Marino}),
but here we are working in the case with finite $N$.

\subsection{Fat special deformation is formally one-cut}

By this we mean  $S'(z)^2 - 4P(z)$ has the following form:
\be
S'(z)^2 - 4P(z) = Q(z)^2(z-a_-)(z-a_+),
\ee
where $Q(z)$ is a formal deformation of $1$,
$a_\pm$ are formal deformations of $\pm 2\sqrt{t}$ respectively:
\be
a_\pm = \pm 2\sqrt{t} + 2b_\pm(g_1, g_2, \dots).
\ee
I.e.,
$b_\pm$ are formal power series in $g_1, g_2, \dots$,
with coefficients in  $\bC[t^{1/2}]$,
such that
\be
b_\pm(g_1, g_2, \dots)|_{g_k=0, k \geq 1} = 0.
\ee
In other words,
we may write $\omega$ as
\be
\omega(z) = \frac{1}{2}
(-S'(z) - Q(z)\sqrt{(z - a_-(t))(z - a_+(t))}),
\ee

\begin{thm} \label{thm:Main}
The fat special deformation is formally one cut,
i.e., it has the form:
\be
y^2 = Q(z)^2(z-a_-)(z-a_+).
\ee
\end{thm}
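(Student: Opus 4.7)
The plan is to carry over the one-cut construction from the matrix-model literature (e.g.\ \cite[\S 2.2]{Marino}, \cite[\S 3.2]{DiF}) into the formal power-series framework, and then invoke the uniqueness Theorem 4.2 to identify the candidate with $y$. Concretely, assume only $g_1,\ldots,g_n$ are non-zero, so $S'(z)$ has degree $n-1$ and $P(z)$ has degree at most $n-2$. Posit the ansatz
\[
\tilde y(z) := -\frac{1}{\sqrt 2}\,M(z)\sqrt{(z-a_-)(z-a_+)},
\]
with $M(z)$ an unknown polynomial of degree $n-2$ (so $n-1$ undetermined coefficients) and $a_\pm$ two further scalar unknowns.

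First I would impose the matching conditions at $z=\infty$. Expanding $\sqrt{(z-a_-)(z-a_+)}$ in descending powers of $z$ and multiplying by $M(z)$ produces a Laurent series whose polynomial part and first negative-power coefficients can be read off explicitly. Requiring the polynomial part of $\tilde y$ to equal $\tfrac{1}{\sqrt 2}S'(z)$ and the coefficient of $z^{-1}$ to equal $\sqrt 2\,t$ yields exactly $n+1$ scalar equations in $n+1$ unknowns (the $n-1$ coefficients of $M$ together with the symmetric functions $a_++a_-$ and $a_+a_-$); the remaining coefficients of $\tilde y$ in $z^{-k}$, $k\ge 2$, will later \emph{define} the series $w_k$ of Theorem 4.2. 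At the pure point $g=0$ this system is solved by $M=1,\ a_\pm=\pm 2\sqrt t$, recovering the semicircle curve \eqref{eqn:Fat-Spec}.

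Next I would verify that the Jacobian of the matching map at the pure point is invertible, so that the formal implicit function theorem produces unique formal power-series solutions $M(z;g)$ and $a_\pm(g)$ extending the semicircle data. Inspection shows the system is essentially triangular when ordered by descending power of $z$: the $z^{n-1}$-equation fixes the leading coefficient of $M$; the $z^{n-2}$-equation then fixes the next coefficient once $a_++a_-$ is known; and so on, the last two equations (at $z^0$ and $z^{-1}$) closing the system by determining the remaining pieces together with $a_+a_-$. The invertibility ultimately reduces to the non-vanishing of $a_+-a_-$ at the base point.

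Finally I would identify the candidate with $y$. With $M$ and $a_\pm$ so determined, $\tilde y^2=\tfrac12 M(z)^2(z-a_-)(z-a_+)$ is a polynomial in $z$, hence $(\tilde y^2)_-=0$, and $\tilde y$ has the Laurent form prescribed by Theorem 4.2; the uniqueness clause of that theorem forces $\tilde y=y$, and setting $Q(z):=M(z)/\sqrt 2$ (or absorbing the $\tfrac12$ into $Q$) yields the claimed factorization $y^2=Q(z)^2(z-a_-)(z-a_+)$. The hard part will be the Jacobian non-degeneracy check in the third step; once that is in place, the rest is mechanical. A cosmetic issue worth flagging is that $a_\pm$ individually live in $\bC[t^{\pm 1/2}][[g_1,\ldots,g_n]]$, but only the symmetric combinations $a_++a_-$ and $a_+a_-$ enter $y^2$, which therefore returns to $\bC[t][z][[g_1,\ldots,g_n]]$.
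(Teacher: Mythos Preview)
Your proposal is correct and follows the same classical one-cut ansatz as the paper. The main difference is in execution: the paper first eliminates your polynomial $M$ by setting $Q(z):=H_+(z)$ where $H(z)=-S'(z)/\sqrt{(z-a_-)(z-a_+)}$, thereby reducing everything to just two equations $H_{-1}=0$ and $H_{-2}=2t$ for the two unknowns $a_\pm$; it then solves these by an explicit degree-by-degree recursion in the couplings, writing $a_\pm=\pm 2\sqrt t+2b_\pm$ and computing the leading and subleading terms of the coefficients $c_n$ via several generating-function identities. Your packaging via the formal implicit function theorem is more economical and makes the identification step (through Theorem~\ref{thm:Uniqueness-Fat}) explicit, whereas the paper leaves that step implicit; on the other hand, the paper's explicit recursion sidesteps the Jacobian check you flag as the hard part, since solvability at each degree is manifest from the shape of the recursion. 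Your truncation to finitely many $g_k$ is harmless, but note that the paper works directly with all couplings as formal variables.
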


To prove this Theorem, we will
present an algorithm that enables us to find $a_\pm$ and $Q$.
We adapt from a well-known method in the literature often used in the large $N$-limit.
Introduce
\be
H(z) = -\frac{S'(z)}{\sqrt{(z - a_-)(z - a_+)}},
\ee
considered as a series expansion for large $z$.
It is an element in $\bC[z, z^{-1}]]$£º
\be
H(z) = \sum_{i \in \bZ} H_{i} z^i.
\ee
Write $H(z) = H_{+}(z) + H_{-}(z)$,
where
\begin{align*}
H_{+}(z) & = \sum_{i \geq 0} H_{i} z^i, &
H_{-}(z) & = \sum_{i <0} H_{i}z^{-i}.
\end{align*}
Then one has $Q(z) = H_{+}(z)$ and
\be
\omega(z) = \frac{1}{2} H_{-}(z)\sqrt{(z - a_-)(z - a_+)}.
\ee
In particular, when we take $g_n = 0$ for $n \geq N$ for
some $N$,
i.e., $S(x)$ is just a polynomials,
then  $H_{i} = 0$ for $i \gg 0$.
$H_{+}(z)$ is the polynomial part of $H(z)$ in $z$.
Now
\be
\sqrt{(z-a_-)(z-a_+)} = z - \frac{a_-+a_+}{2} - \frac{(a_+-a_-)^2}{8z} + \cdots.
\ee
So one has
\be
\omega \sim \frac{H_{-1}}{2} +\frac{2H_{-2} - H_{-1}\cdot(a_++a_-)}{4z}+\cdots,
\ee
and by \eqref{def:Resolvent} we must have
\begin{align}
H_{-1}  &= 0, & H_{-2} &= 2t.
\end{align}
These coefficients can be easily computed.
One first gets:
\ben
H(z) & = & -\frac{S'(z)}{\sqrt{(z-a_+)(z-a_-)}} \\
& = & -\frac{S'(z)}{z} (1- \frac{a_+}{z})^{-1/2} \cdot (1- \frac{a_-}{z})^{-1/2} \\
& = & -\frac{S'(z)}{z} \sum_{i=0}^\infty (-1)^i \binom{-1/2}{i} \frac{a_+^i}{z^i}
\cdot \sum_{j=0}^\infty (-1)^j \binom{-1/2}{j} \frac{a_-^j}{z^j} \\
& = & (z - \sum^d_{m=1} g_mz^{m-1})\cdot
\frac{1}{z}\sum_{n=0}^\infty \sum_{i+j=n}
\frac{(2i-1)!!(2j-1)!!}{2^ni!j!} \frac{a_+^ia_-^j}{z^n},
\een
from this we get
\bea \label{eqn:H-1}
&&H_{-1} = - \sum_{n=0}^\infty \tilde{g}_{n+1} c_n = 0,\\
&&H_{-2} = - \sum_{n=1}^\infty \tilde{g}_n c_n = 2t, \label{eqn:H-2}
\eea
where $\tilde{g}_n = g_n - \delta_{n,2}$,
and $c_n$ is defined by:
\be
c_n:=\sum_{i+j=n}
\frac{(2i-1)!!(2j-1)!!}{2^ni!j!} a_+^ia_-^j
= \frac{1}{2^{2n}}\sum_{i+j=n} \binom{2i}{i} \binom{2j}{j} a_+^ia_-^j.
\ee
Change these into expressions in $b_+$ and $b_-$:
\ben
c_n& = & \frac{1}{2^{2n}} \sum_{i+j=n}
\binom{2i}{i} \binom{2j}{j} (2\sqrt{t}+2b_+)^i(-2\sqrt{t}-2b_-)^j \\
& = & \frac{1}{2^n}\sum_{i+j=n} (-1)^j
\binom{2i}{i} \binom{2j}{j}  \sum_{k=0}^i\binom{i}{k}t^{(i-k)/2}b_+^k
\sum_{l=0}^j\binom{j}{l}t^{(j-l)/2}b_-^l \\
& = & \frac{1}{2^n}\sum_{k,l} b_+^k b_-^l \sum_{i \geq k, j \geq l, i+j=n}
(-1)^j
\binom{2i}{i} \binom{2j}{j} \binom{i}{k} \binom{j}{l} t^{(n-k-l)/2}.
\een
The leading term of $c_n$ is
\ben
&& \frac{1}{2^n}\sum_{i+j=n}
(-1)^j \binom{2i}{i} \binom{2j}{j} \binom{i}{0} \binom{j}{0}  t^{n/2}
= \begin{cases}
\binom{2m}{m}t^m, & n = 2m, \\
0, & n = 2m-1.
\end{cases}
\een
This can be proved as follows:
\ben
&& \sum_{n=0}^\infty x^n \frac{1}{2^n}\sum_{i+j=n}
(-1)^j \binom{2i}{i} \binom{2j}{j}
= \sum_{i=0}^\infty \frac{1}{2^i}\binom{2i}{i}x^i \cdot
\sum_{j=0}^\infty (-1)^j \frac{1}{2^j} \binom{2j}{j}x^j \\
& = & (1-2x)^{-1/2} (1+2x)^{-1/2}
= (1-4x^2)^{-1/2}
= \sum_{m=0}^\infty \binom{2m}{m} x^{2m}.
\een
Similarly,
the subleading term is
\ben
&& b_+ \cdot \frac{1}{2^n}   \sum_{i \geq 1, j \geq 0, i+j=n}
(-1)^j
\binom{2i}{i} \binom{2j}{j} \binom{i}{1} \binom{j}{0}   t^{(n-1-0)/2}\\
& + & b_- \cdot \frac{1}{2^n}  \sum_{i \geq 0, j \geq 1, i+j=n}
(-1)^j
\binom{2i}{i} \binom{2j}{j} \binom{i}{0} \binom{j}{1}  t^{(n-0-1)/2} \\
& = & n\binom{n}{[n/2]}(b_++(-1)^n b_-) t^{(n-1)/2}.
\een
We have used the following two identities:
\ben
&& \frac{1}{2^n}   \sum_{i \geq 1, j \geq 0, i+j=n}
(-1)^j
\binom{2i}{i} \binom{2j}{j} \binom{i}{1} \binom{j}{0}
= n \binom{n-1}{[(n-1)/2]}, \\
&& \frac{1}{2^n}  \sum_{i \geq 0, j \geq 1, i+j=n}
(-1)^j
\binom{2i}{i} \binom{2j}{j} \binom{i}{0} \binom{j}{1}  t^{(n-0-1)/2}
= (-1)^n n \binom{n-1}{[(n-1)/2]}.
\een
The can be proved using genrating series as follows.
For the first identity, we have
\ben
&& \sum_{n=0}^\infty x^n \frac{1}{2^n}\sum_{i+j=n}
(-1)^j i \binom{2i}{i} \binom{2j}{j}
= \sum_{i=0}^\infty \frac{i}{2^i}\binom{2i}{i}x^i \cdot
\sum_{j=0}^\infty (-1)^j \frac{1}{2^j} \binom{2j}{j}x^j \\
& = & x \frac{d}{dx}\sum_{i=0}^\infty \frac{1}{2^i}\binom{2i}{i}x^i \cdot
\sum_{j=0}^\infty (-1)^j \frac{1}{2^j} \binom{2j}{j}x^j
= x \frac{d}{dx} (1-2x)^{-1/2} \cdot (1+2x)^{-1/2} \\
& =  & x(1-2x)^{-3/2} \cdot (1+2x)^{-1/2}
= x(1-2x)^{-1} \cdot (1-4x^2)^{-1/2},
\een
and on the other hand,
\ben
&& \sum_{n=1}^\infty n \binom{n-1}{[(n-1)/2]} x^n
= \sum_{m=1}^\infty 2m \binom{2m-1}{m-1} x^{2m}
+ \sum_{m=0}^\infty (2m+1) \binom{2m}{m} x^{2m+1} \\
& = & x\frac{d}{dx} \sum_{m=1}^\infty  \binom{2m-1}{m-1} x^{2m}
+ x\frac{d}{dx} \sum_{m=0}^\infty \binom{2m}{m} x^{2m+1} \\
& = & x\frac{d}{dx}\biggl(\frac{1-\sqrt{1-4x^2}}{2\sqrt{1-4x^2}}
+  \frac{x}{\sqrt{1-4x^2}} \biggr)
= x \frac{d}{dx} \frac{1+2x}{2(1-4x^2)^{1/2}}\\
& = & \frac{x}{(1-2x)(1-4x^2)^{3/2}},
\een
So the first identity is proved.
Similarly,
\ben
&& \sum_{n=0}^\infty x^n \frac{1}{2^n}\sum_{i+j=n}
(-1)^j j \binom{2i}{i} \binom{2j}{j}
= \sum_{i=0}^\infty \frac{1}{2^i}\binom{2i}{i}x^i \cdot
\sum_{j=0}^\infty (-1)^j \frac{j}{2^j} \binom{2j}{j}x^j \\
& = &  \sum_{i=0}^\infty \frac{1}{2^i}\binom{2i}{i}x^i \cdot
x \frac{d}{dx}\sum_{j=0}^\infty (-1)^j \frac{1}{2^j} \binom{2j}{j}x^j
= (1-2x)^{-1/2} \cdot x \frac{d}{dx} (1+2x)^{-1/2} \\
& =  & -(1-2x)^{-1/2} \cdot x(1+2x)^{-3/2}
= - x(1+2x)^{-1} \cdot (1-4x^2)^{-1/2},
\een
and
\ben
&& \sum_{n=1}^\infty (-1)^n n \binom{n-1}{[(n-1)/2]} x^n
= \sum_{m=1}^\infty 2m \binom{2m-1}{m-1} x^{2m}
- \sum_{m=0}^\infty (2m+1) \binom{2m}{m} x^{2m+1} \\
& = & x\frac{d}{dx} \sum_{m=1}^\infty  \binom{2m-1}{m-1} x^{2m}
- x\frac{d}{dx} \sum_{m=0}^\infty \binom{2m}{m} x^{2m+1} \\
& = & x\frac{d}{dx}\biggl(\frac{1-\sqrt{1-4x^2}}{2\sqrt{1-4x^2}}
-  \frac{x}{\sqrt{1-4x^2}} \biggr)
= x \frac{d}{dx} \frac{1-2x}{2(1-4x^2)^{1/2}}\\
& = & -\frac{x}{(1+2x)(1-4x^2)^{3/2}},
\een
so the second identity is prove.

For example,
\ben
c_1 & = &\frac{1}{2} a_+ +\frac{1}{2} a_- = b_+ - b_-, \\
c_2 & = &\frac{3}{8}a_+^2+\frac{1}{4}a_+a_-+\frac{3}{8}a_-^2
= 2t + 2\sqrt{t} (b_++b_-) + \biggl( \frac{3}{2} b_+^2-b_+b_-+\frac{3}{2}b_-^2 \biggr),\\
c_3 & = & \frac{5}{16}a_+^3+\frac{3}{16}a_+^2a_-+\frac{3}{16}a_-a_+^2+\frac{5}{16}a_-^3 \\
& = & 6(b_+-b_-)t+6(b_+^2-b_-^2)\sqrt{t}
+ \biggl( \frac{5}{2}b_+^3- \frac{3}{2} b_+^2b_-+\frac{3}{2} b_+b_-^2-\frac{5}{2}a_-^3
\biggr), \\
c_4 & = & \frac{35}{128}a_+^4+\frac{5}{32} a_+^3a_-
+\frac{9}{64}a_+^2a_-^2+\frac{5}{32}a_+a_-^3+\frac{35}{128}a_-^4 \\
& = & 6t^2+12(b_++b_-)t^{3/2}
+(21b_+^2-6b_+b_-+21b_-^2)t \\
& + & (15b_+^3-3b_+^2b_--3b_+b_-^2+15b_-^3)t^{1/2} \\
& + & \biggl(\frac{35}{8}b_+^4-\frac{5}{2}b_+^3b_-+\frac{9}{4}b_+^2b_-^2
-\frac{5}{2}b_+b_-^3+\frac{35}{8}b_-^4\biggr).
\een
Therefore,
after rewriting \eqref{eqn:H-1} and \eqref{eqn:H-2} in
the following form:
\bea
&&H_{-1} = - \sum_{m=0}^\infty \tilde{g}_{2m+1} c_{2m}
- \sum_{m=0}^\infty \tilde{g}_{2m+2} c_{2m+1} = 0,\\
&&H_{-2} = - \sum_{m=1}^\infty \tilde{g}_{2m} c_{2m}
- \sum_{m=0}^\infty \tilde{g}_{2m+1} c_{2m+1} = 2t,
\eea
we get:
\ben
b_+-b_-
&= & \sum_{m=0}^\infty g_{2m+1}
\biggl(\binom{2m}{m} t^m + 2m\binom{2m}{m}(b_++ b_-) t^{(2m-1)/2}+ \cdots\biggr) \\
& + & \sum_{m=0}^\infty g_{2m+2}
\biggl((2m+1)\binom{2m+1}{m}(b_+-b_-) t^{m} + \cdots \biggr),\\
2\sqrt{t} (b_++b_-) & = & -\biggl( \frac{3}{2} b_+^2-b_+b_-+\frac{3}{2}b_-^2 \biggr) \\
& + & \sum_{m=1}^\infty g_{2m}
\biggl(\binom{2m}{m} t^m + 2m\binom{2m}{m}(b_++ b_-) t^{(2m-1)/2}+ \cdots\biggr) \\
& + & \sum_{m=0}^\infty g_{2m+1}
\biggl((2m+1)\binom{2m+1}{m}(b_+-b_-) t^{m} + \cdots \biggr).
\een
This system of equations can be recursive solved as follows.
Write
\be
b_\pm = \sum_{n=1}^\infty b_\pm^{(n)},
\ee
where each $b_\pm^{(n)}$ is homogeneous of degree $n$ in $g_1, g_2, \dots$.
For example, the degree $1$ part of the system is
\ben
&& b_+^{(1)} - b_-^{(1)} = \sum_{m=0}^\infty g_{2m+1} \binom{2m}{m} t^m, \\
&& 2\sqrt{t} (b_+^{(1)}+b_-^{(1)}) =  \sum_{m=1}^\infty g_{2m} \binom{2m}{m} t^m.
\een
And so
\ben
&& b_+^{(1)} = \half\sum_{m=0}^\infty g_{2m+1} \binom{2m}{m} t^m
+ \frac{1}{4} \sum_{m=1}^\infty g_{2m} \binom{2m}{m} t^{m-1/2}, \\
&& b_-^{(1)} = -\half\sum_{m=0}^\infty g_{2m+1} \binom{2m}{m} t^m
+ \frac{1}{4} \sum_{m=1}^\infty g_{2m} \binom{2m}{m} t^{m-1/2}.
\een
Next we get the system
\ben
b_+^{(2)}-b_-^{(2)}
&= & \sum_{m=0}^\infty g_{2m+1}\cdot 2m\binom{2m}{m}(b_+^{(1)}+ b_-^{(1)}) t^{(2m-1)/2}  \\
& + & \sum_{m=0}^\infty g_{2m+2}
\cdot (2m+1)\binom{2m+1}{m}(b_+^{(1)}-b_-^{(1)}) t^{m},\\
2\sqrt{t} (b_+^{(2)}+b_-^{(2)}) & = &
-\biggl( \frac{3}{2} (b_+^{(1)})^2-b_+^{(1)}b_-^{(1)}+\frac{3}{2}(b_-^{(1)})^2 \biggr) \\
& + & \sum_{m=1}^\infty g_{2m}\cdot 2m\binom{2m}{m}(b_+^{(1)}+ b_-^{(1)}) t^{(2m-1)/2}  \\
& + & \sum_{m=0}^\infty g_{2m+1}
\cdot (2m+1)\binom{2m+1}{m}(b_+^{(1)}-b_-^{(1)}) t^{m}  .
\een
From this we can get $b_\pm^{(2)}$, and so on.
So the proof of Theorem \ref{thm:Main} is complete.

\subsection{The case of even potential function}
For even potential function
$$S(x)=-\half x^2 +\sum_{n \geq 1} g_{2n} x^{2n},$$
we have $a_+ = - a_-$,
\ben
H(z) & = &
= -\frac{S'(z)}{\sqrt{z^2-a^2_+(N)}}
= -\frac{S'(z)}{z} (1- \frac{a^2_+}{z^2})^{-1/2}  \\
& = & (1 - \sum_{m\geq 1} g_{2m}z^{2m-2})\cdot
\sum_{n=0}^\infty
\frac{1}{2^{2n}} \binom{2n}{n} \frac{a_+^{2n}(N)}{z^{2n}},
\een
from this we get
\ben
&&H_{-1} = 0,\\
&&H_{-2} = - \sum_{n=1}^\infty \tilde{g}_{2n}
\frac{1}{2^{2n}}  \binom{2n}{n} a_+^{2n}= 2t.
\een
From these we get:
\be
\sqrt{t}b_+ = - \half b_+^2
+ \half \sum_{n=1}^\infty g_n \binom{2n}{n} (t^n +2nt^{n-1/2}b_+ + \cdots),
\ee
and hence one can solve $b_+$ as a formal power series in $g_k$'s.

In the next few Subsections we will work out some concrete examples to illustrate
the applications of Theorem \ref{thm:Main} and the even potential function case.

\subsection{Fat special deformation along the $g_1$-line again}
\label{sec:Fat-g1-again}
For the action
\ben
S(z) = -\frac{1}{2}z^2 + g_1 z,
\een
one has
\ben
S'(z) = -(z - g_1).
\een
From
\ben
&& \frac{z-g_1}{\sqrt{(z-a_+)(z-a_-)}} \\
& = & (1- \frac{g_1}{z}) \cdot
\biggl(1+\frac{a_++a_- }{2z}+\frac{3a_+^2+2a_+a_-+3a_-^2}{8z^2} +\cdots \biggr)  \\
& = & 1+\frac{a_++a_--2g_1}{2z}+\frac{3a_+^2+2a_+a_-+3a_-^2-4g_1(a_++a_-)}{8z^2}
+\cdots,
\een
one gets two equations
\begin{align*}
a_++a_--2g_1 &= 0, &
3a_+^2+2a_+a_-+3a_-^2-4g_1(a_++a_-) = 16,
\end{align*}
with solutions
\begin{align}
a_++a_- & = 2g_1, & a_+a_- & = g_1^2-4t.
\end{align}
It follows that the spectral curve in resolvent is
\be
\omega = \frac{z-g_1-\sqrt{z^2-2g_1z+ g_1^2-4t}}{2}.
\ee
In this case one has
\ben
Q(z) = 1.
\een
We have recovered \eqref{eqn:Motzkin} in \S \ref{sec:Fat-g1}.

\subsection{Fat special deformation along the $g_2$-line again}
\label{sec:Fat-g2-again}
For the action function
\ben
S(z) = -\frac{1}{2}z^2 + \frac{g_2}{2} z^2
\een
one has
\be
S'(z) = -(z - g_2z).
\ee
From
\ben
&& \frac{z-g_2z}{\sqrt{(z-a_+)(z-a_-)}} \\
& = & (1- g_2) \cdot
\biggl(1+\frac{a_++a_- }{2z}+\frac{3a_+^2+2a_+a_-+3a_-^2}{8z^2}+\cdots \biggr)\\
\een
we get
\begin{align*}
a_++a_- & = 0, & a_+a_- & = \frac{-4t}{1-g_2}.
\end{align*}
The spectral curve deforms to
\ben
\omega = \frac{(1-g_2)z-(1-g_2) \sqrt{z^2-\frac{4t}{1-g_2}}}{2}
\een
with
\be
Q(z) = 1 -  g_2 .
\ee
Alternatively,
one can proceed from
\ben
&& \frac{z-g_2z}{\sqrt{z^2-a_+^2}} = (1-g_2)\cdot (1- \frac{a_+^2}{2z^2} +\cdots)
\een
to get $a_+^2 = \frac{4t}{1-g_2}$.
We have recovered \eqref{eqn:Motzkin-g2} in \S \ref{sec:Fat-g2}.

\subsection{Fat special deformation on the $(g_1,g_2)$-plane again}
\label{sec:Fat-g1g2-again}
For the action function
\be
S(z) = -\frac{1}{2}z^2 + g_1z+ \frac{g_2}{2} z^2,
\ee
one  has
\ben
&& -\frac{S'(z)}{\sqrt{(z-a_+)(z-a_-)}}
= \frac{-g_1+(1-g_2)z}{\sqrt{(z-a_+)(z-a_-)}} \\
& = & ((1- g_2) - \frac{g_1}{z}) \cdot
\biggl(1+\frac{a_++a_- }{2z}+\frac{3a_+^2+2a_+a_-+3a_-^2}{8z^2} +\cdots \biggr) \\
& = & (1-g_2) +\frac{(1-g_2)(a_++a_-)- 2g_1}{2z} \\
& + & \frac{(1-g_2)(3a_+^2+2a_+a_-+3a_-^2)-4g_1(a_++a_-)}{8z^2} +\cdots,
\een
and one gets the following two equations
\begin{align*}
(1-g_2)(a_++a_-)- 2g_1& = 0, &
(1-g_2)(3a_+^2+2a_+a_-+3a_-^2)-4g_1(a_++a_-) & = 0,
\end{align*}
with solutions:
\begin{align*}
a_++a_- & = \frac{2g_1}{1-g_2}, & a_+a_- & = \frac{-4t}{1-g_2}.
\end{align*}
The spectral curve deforms to
\be
\omega = \frac{(1-g_2)z-g_1-(1-g_2) \sqrt{z^2-\frac{2g_1z}{1-g_2}-\frac{4t}{1-g_2}
+ \frac{g_1^2}{(1-g_2)^2} }}{2}.
\ee
This recovers \eqref{eqn:omega-g1g2} in \S \ref{sec:Fat-g1g2}.

\subsection{Fat special deformation along the $g_3$-line again}
\label{sec:Fat-g3-again}
For the action function
\ben
S(z) = -\frac{1}{2}z^2 + \frac{g_3}{3} z^3,
\een
we have
\ben
&& -\frac{S'(z)}{\sqrt{(z-a_+)(z-a_-)}} = \frac{z-g_3z^2}{\sqrt{(z-a_+)(z-a_-)}} \\
& = & (1- g_3z) \cdot
\biggl(1+\frac{a_++a_- }{2z}+\frac{3a_+^2+2a_+a_-+3a_-^2}{8z^2}\\
& + & \frac{5a_+^3+3a_+^2a_-+3a_+a_-^2+5a_-^3}{16z^3}+\cdots \biggr) \\
& = & -g_3z+ (1-\half g_3(a_++a_-))
+ \frac{4(a_++a_-)-g_3(3a_+^2+2a_+a_-+3a_-^2)}{8z}  \\
& + & \frac{2(3a_+^2+2a_+a_-+3a_-^2)-g_3(5a_+^3+3a_+^2a_-+3a_+a_-^2+5a_-^3)}{16z^2}
+ \cdots,
\een
and so the following two equations:
\bea
&& 4(a_++a_-)-g_3(3a_+^2+2a_+a_-+3a_-^2) = 0, \label{eqn:g3case1} \\
&& 2(3a_+^2+2a_+a_-+3a_-^2)-g_3(5a_+^3+3a_+^2a_-+3a_+a_-^2+5a_-^3) = 32t.
\eea
From the first equation we
\be \label{eqn:g3-1}
g_3^2a_+a_- = \frac{3}{4}(g_3(a_++a_-))^2- g_3(a_++a_-),
\ee
and plug this into the second equation we find:
\be \label{eqn:g3-2}
2(g_3(a_++a_-)/2)^3 - 3 (g_3(a_++a_-)/2)^2 +(g_3(a_++a_-)/2) = 2g_3^2t.
\ee
By Lagrange inversion,
\be \label{eqn:a+a}
g_3(a_++a_-)/2
= \sum_{n=0}^\infty \frac{2^{2n}}{n+1} \binom{3n/2}{n} (2g_3^2t)^{n+1}.
\ee
The coefficients $1, 3,16,105, 768,6006,49152, \dots$ are
sequence A085614  of the On-Line Encyclopedia of Integer Sequences \cite{Sloane}.
They are the numbers of elementary arches of size $n$.
For the explicit expressions in the above equality,
see \cite{Gessel}.
The coefficients of $(a_++a_-)^2$
also have combinatorial meanings,
\ben
&& (x+ 3x^2+16x^3+105x^4+ 768x^5+6006x^6+49152x^7 + \dots)^2 \\
& = & x^2+6x^3+41x^4+306x^5+2422x^6+19980x^7+169941x^8+1479786x^9
+ \cdots
\een
the coefficients $1, 6, 41, 306, 2422, 19980, 169941, 1479786, \cdots$
are the integer sequence  A143023.
By \eqref{eqn:g3case1},
\ben
&& g_3^2a_+a_- = \frac{3}{4}(g_3(a_++a_-))^2- g_3(a_++a_-) \\
& = & \frac{3}{4} \biggl(
\sum_{n=0}^\infty \frac{2^{3n+2}}{n+1} \binom{3n/2}{n} (g_3^2t)^{n+1}
\biggr)^2
- \sum_{n=0}^\infty \frac{2^{3n+2}}{n+1} \binom{3n/2}{n} (g_3^2t)^{n+1},
\een
therefore, one gets
\be \label{eqn:aa}
\begin{split}
g_3^2a_+a_-
= & - 4 g_3^2t - \sum_{n=1}^\infty
\biggl[\frac{2^{3n+2}}{n+1} \binom{3n/2}{n} \\
- & \frac{3}{4} \sum_{i+j=n-1} \frac{2^{3n+1}}{(i+1)(j+1)}
\binom{3i/2}{i} \binom{3j/2}{j} \biggr] (g_3^2t)^{n+1}.
\end{split}
\ee
The following are the first few terms of $a_++a_-$ and $a_+a_-$ respectively.
\ben
a_++a_- & = & 4g_3t+24g_3^3t^2+256g_3^5t^3+3360g_3^7t^4 +49152g_3^9t^5 \\
& +& 768768 g_3^{11} t^6 +\cdots, \\
a_+a_- & = &   -4t -12g_3^2t^2 - 112g_3^4t^3
- 1392g_3^5t^4-19776g_3^6t^5 \\
&- & 303744g_3^{10}t^6- \cdots.
\een
From these and $a_+-a_-= ( (a_++a_-)^2-4a_+a_-)^{1/2}$,
one can compute $a_+-a_-$ and get the first few terms of $a_+$ and $a_--$:
\ben
a_+ & = & 2 t^{1/2}+2g_3t+4g_3^2t^{3/2}+12g_3^3t^2+36g_3^4t^{5/2}+128g_3^5t^3\\
 & + & 440g_3^6t^{7/2}+1680g_3^7t^4 + 6188g_3^8t^{9/2}+24576g_3^9t^5+94392g_3^{10}t^{11/}+ \cdots, \\
a_- & = & - 2 t^{1/2}+2g_3t-4g_3^2t^{3/2}+12g_3^3t^2-36g_3^4t^{5/2}+128g_3^5t^3\\
 & - & 440g_3^6t^{7/2}+1680g_3^7t^4-6188g_3^8t^{9/2}+24576g_3^9t^5-94392g_3^{10}t^{11/} + \cdots,
\een

The fat spectral curve is deformed to
\be \label{eqn:omega-g3-2}
\omega = \frac{1}{2} \biggl((z-g_3z^2) - (1-g_3z - \half g_3(a_++a_-))
\sqrt{z^2-(a_++a_-) z +a_+a_- } \biggr),
\ee
where closed formula for $a_++a_-$ and $a_+a_-$ are given by \eqref{eqn:a+a} and
\eqref{eqn:aa} respectively.
More explicitly,
\ben
\omega
= \frac{1}{2} \biggl((z-g_3z^2)
-(1-g_3z-\half g_3(4g_3t+24g_3^3t^2+256g_3^5t^3+3360g_3^7t^4 +\cdots)) \\
\cdot \sqrt{z^2-(4g_3t+24g_3^3t^2+256g_3^5t^3+3360g_3^7t^4 +\cdots)z
-4t -12g_3^2t^2 - 112g_3^4t^3 - \cdots} \biggr).
\een

Alternatively, we can solve for $b=a_++a_-$ and $c=a_+a_-$ in the following way.
Since we know the resolvent has the following two forms:
\ben
\omega&=&\frac{1}{2} \biggl((z-g_3z^2) - (1-g_3z - \half g_3b)
\sqrt{z^2-b z +c} \biggr) \\
& = & \frac{1}{2} \biggl(z -g_3z^2 -
\sqrt{(z-g_3z^2)^2 - 4(t  - g_3tz - g_3f_1)} \biggr),
\een
one can rewrite the first line in the following form:
\ben
\omega &=& \frac{1}{2} \biggl[(z-g_3z^2) -
\bigg((z-g_3z^2)^2+(g_3b+g_3^2c-\frac{3}{4}g_3^2b^2)z^2\\
& & +(-b-2g_3c+g_3b^2+g_3^2bc-\frac{1}{4}g_3^2b^3)z
+ (c-g_3bc+\frac{1}{4}g_3^2b^2c) \bigg)^{1/2} \biggr],
\een
and compare it with the second line to get the following system of equations:
\be \label{system:g3}
\begin{split}
& g_3b+g_3^2c-\frac{3}{4}g_3^2b^2 =0, \\
& -b-2g_3c+g_3b^2+g_3^2bc-\frac{1}{4}g_3^2b^3 = 4g_3t, \\
& c-g_3bc+\frac{1}{4}g_3^2b^2c = 4g_3f_1 - 4t.
\end{split}
\ee
From the first equation we get
\ben
&& c = \frac{3}{4} b^2 - \frac{1}{g_3} b.
\een
This is just \eqref{eqn:g3-1}.
Plug this into the second equation:
\ben
&& 2(g_3b/2)^3-3(g_3b/2)^2 + (g_3b/2) = 2g_3^2t,
\een
this is just \eqref{eqn:g3-2},
and so $b$ and $c$ can be solved as above.
But now we have the third equation in the above system,
from which we get:
\be
f_1 = \frac{1}{4g_3}\biggl( 4t + c(1-\frac{1}{2} g_3b)^2 \biggr).
\ee
Now we use this to recover \eqref{eqn:f1-g3}.
Recall that $b=a_++a_-$ and $c=a_+a_-$,
so by \eqref{eqn:a+a} and \eqref{eqn:aa} we have
\ben
f_1 &= & \frac{t}{g_3}+\frac{7}{16}\frac{b^2}{g_3}
-\frac{1}{4}\frac{b}{g_3^2}-\frac{1}{4}b^3+\frac{3}{64}g_3b^4 \\
& = & \frac{t}{g_3}+\frac{11}{32}\frac{b^2}{g_3}-\frac{1}{4}\frac{b}{g_3^2}
-\frac{7}{64}b^3+\frac{3}{8}bt \\
& = & \frac{1}{8}\frac{t}{g}
-\frac{1}{32}\frac{b}{g_3^2}+\frac{3}{8}bt +\frac{1}{64} \frac{b^2}{g_3} \\
& = & \frac{1}{8}\frac{t}{g} -\frac{1}{32g_3^3}
\sum_{n=0}^\infty \frac{2^{2n+1}}{n+1} \binom{3n/2}{n} (2g_3^2t)^{n+1}
+ \frac{3t}{8g_3} \sum_{n=0}^\infty \frac{2^{2n+1}}{n+1} \binom{3n/2}{n} (2g_3^2t)^{n+1} \\
& + & \frac{1}{64g_3^3} \biggl( \sum_{n=0}^\infty \frac{2^{2n+1}}{n+1}
\binom{3n/2}{n} (2g_3^2t)^{n+1} \biggr)^2 \\
& = &\frac{3}{8} \sum_{n=0}^\infty \frac{2^{3n+2}}{n+1} \binom{3n/2}{n} t^{n+2}g_3^{2n+1}
- \frac{1}{32} \sum_{n=0}^\infty \frac{2^{3n+5}}{n+2} \binom{(3n+3)/2}{n+1} t^{n+2}g_3^{2n+1}\\
& + &  \frac{1}{64} \sum_{n=0}^\infty \sum_{i+j=n}
\frac{2^{3i+2}}{i+1} \binom{3i/2}{i} \frac{2^{3j+2}}{j+1} \binom{3j/2}{j} \cdot t^{n+2}
g_3^{2n+1}.
\een
By comparing with the formula for $f_1$ in \S \ref{sec:Fat-g3},
we get a combinatorial identity:
\ben
&& \frac{3}{8} \sum_{n=0}^\infty \frac{2^{3n+2}}{n+1} \binom{3n/2}{n} x^n
- \frac{1}{32} \sum_{n=0}^\infty \frac{2^{3n+5}}{n+2} \binom{(3n+3)/2}{n+1} x^n \\
& + &  \frac{1}{64} \sum_{n=0}^\infty \sum_{i+j=n}
\frac{2^{3i+2}}{i+1} \binom{3i/2}{i} \frac{2^{3j+2}}{j+1} \binom{3j/2}{j} \cdot x^n
= \sum_{n \geq 0} 2^{2n+1}\cdot \frac{(3n)!!}{(n+2)!n!!} x^n.
\een
It does not seem to be easy to directly prove this.

\subsection{Fat special deformation on the $(g_1,g_2,g_3)$-space again}
\label{sec:Fat-g1g2g3-again}
For the action function
\ben
S(z) = -\frac{1}{2}z^2 + g_1z +\frac{g_2}{2}z^2+\frac{g_3}{3} z^3,
\een
we have
\ben
&& -\frac{S'(z)}{\sqrt{z^2-bz+c}} = \frac{z-g_1-g_2z-g_3z^2}{\sqrt{z^2-bz+c}} \\
& = & (- g_3z+ (1-g_2)-\frac{g_1}{z}) \cdot
\biggl(1+\frac{b}{2z}+ \cdots \biggr) \\
& = & -g_3z+ (1-g_2-\half g_3b)  + \cdots ,
\een
and so we get $Q(z)=1-g_2 -g_3z- \half g_3 b$
and  the fat spectral curve is deformed to
\be \label{eqn:omega-g3-3}
\omega = \frac{1}{2} \biggl((z-g_3z^2) - (1-g_2-g_3z - \half g_3b)
\sqrt{z^2-b z + c } \biggr).
\ee
By  comparing it with \eqref{eqn:omega-g1g2g3}
we get the following system of equations:
\ben
&& g_3b+g_3^2c-\frac{3}{4}g_3^2b^2 =2g_1g_3, \\
&& -(1-g_2)^2b-2(1-g_2)g_3c
+(1-g_2)g_3b^2+g_3^2bc-\frac{1}{4}g_3^2b^3
+2g_1(1-g_2)= 4g_3t, \\
&& (1-g_2)^2c-(1-g_2)g_3bc+\frac{1}{4}g_3^2b^2c
+g_1^2 = 4g_3f_1 - 4t.
\een
This system deforms the system \eqref{system:g3},
one can follow the same procedures to solve for $b$, $c$ and $f_1$.

\subsection{Fat special deformation along the $g_4$-line again}
\label{sec:Fat-g4-again}
For the action function
\ben
S(z) = - \frac{1}{2}z^2 + \frac{g_4}{4} z^3,
\een
one has:
\ben
- \frac{S'(z)}{\sqrt{z^2-a^2}}
& = & \frac{z-g_4z^3}{\sqrt{z^2-a^2}} \\
& = & \biggl((1-\frac{g_4a^2}{2})-g_4z^2 \biggr)
+ \biggl(\frac{a^2}{2}-\frac{3}{8}g_4a^4\biggr) \frac{1}{z^2} + \cdots,
\een
hence one has
\ben
\frac{a^2}{2}-\frac{3}{8}g_4a^4 = 2t.
\een
From this one gets
\ben
a^2 = \frac{2(1-\sqrt{1-12g_4t})}{3g_4}.
\een
and
\ben
Q(z) = \frac{2}{3} + \frac{1}{3} \sqrt{1-12g_4t} -g_4 z^2,
\een
and so the resolvent is given by:
\be
\omega = \frac{1}{2} \biggl(z-g_4z^3
- \biggl(\frac{2}{3} + \frac{1}{3} \sqrt{1-12g_4t} - g_4z^2\biggr)
\sqrt{z^2- \frac{2(1-\sqrt{1-12g_4t})}{3g_4}}
\biggr).
\ee
By comparing this with \eqref{eqn:omega-g4} one gets:
\ben
f_2 & = & \frac{1}{54g_4^2} \biggl((1-12g_4t)^{3/2}
- 1 + 18 g_4t \biggr)
= \sum_{n=0}^\infty 2 \cdot 3^n \frac{(2n)!}{n!(n+2)!}
g_4^n t^{n+2},
\een
the coefficients are the sequence A000168 in \cite{Sloane}.
This recovers the  formula for $f_2$ in \S \ref{sec:Fat-g4}.

\subsection{Fat special deformation along the $g_6$-line again}
\label{sec:Fat-g6-again}
For the action function
\ben
S(z) = - \frac{1}{2}z^2 + \frac{g_6}{6} z^3,
\een
one has:
\ben
- \frac{S'(z)}{\sqrt{z^2-a^2}}
& = & \frac{z-g_6z^5}{\sqrt{z^2-a^2}} \\
& = & \biggl((1-\frac{3}{8}g_6a^4)-\frac{g_6a^2}{2}z^2-g_6z^4 \biggr)
+ \biggl(\frac{a^2}{2}-\frac{5}{16}g_6a^6\biggr) \frac{1}{z^2} + \cdots,
\een
hence one has
\be \label{eqn:g6-norm}
\frac{a^2}{2}-\frac{5}{16}g_6a^6 = 2t
\ee
and
\ben
Q(z) = (1-\frac{3}{8}g_6a^4)-\frac{g_6a^2}{2}z^2-g_6z^4,
\een
and so the resolvent is given by:
\ben
\omega & = & \frac{1}{2} \biggl(z-g_6z^5
- \biggl((1-\frac{3}{8}g_6a^4)-\frac{g_6a^2}{2}z^2-g_6z^4\biggr)
\sqrt{z^2- a^2} \biggr) \\
& = & \frac{1}{2} \biggl(z-g_6z^5
- \biggl((z-g_6z^5)^2 +\big(-\frac{5}{8}a^6g_6^2+g_6a^2\big)z^4 \\
&&+\big(-\frac{15}{64}g_6^2a^8+\frac{1}{4}g_6a^4\big) z^2
+ \big(-\frac{9}{64}a^{10}g_6^2+\frac{3}{4}g_6a^6-a^2 \big)\biggr)^{1/2} \biggr) \\
\een
By comparing this with \eqref{eqn:omega-g6} one gets:
\be
\begin{split}
 \omega(z)  =  \frac{1}{2} \biggl(z-g_6z^5
 - \sqrt{ (z-g_6z^5)^2- 4(t  - g_6tz^4 - g_6 f_2z^2  - g_6f_4)} \biggr),
\end{split}
\ee
one gets a system of three equations:
\ben
&&-\frac{5}{8}a^6g_6^2+g_6a^2 = 4g_6t, \\
&& -\frac{15}{64}g_6^2a^8+\frac{1}{4}g_6a^4 = 4g_6f_2, \\
&& -\frac{9}{64}a^{10}g_6^2+\frac{3}{4}g_6a^6-a^2 = 4g_6f_4 - 4t.
\een
The first equation is just \eqref{eqn:g6-norm},
from which we get by Lagrange inversion:
\be
a^2 = \sum_{n \geq 0} \frac{1}{2n+1} \binom{3n}{n} (\frac{5}{8}g_6)^{n} (4t)^{2n+1},
\ee
the coefficients $\frac{1}{2n+1} \binom{3n}{n}$ are the sequence
A001764 in \cite{Sloane},
they have various combinatorial origins,
e.g., they are the numbers of complete ternary trees with $n$ internal nodes, or $3n$ edges.
From the second equation in the system we get
\ben
f_2 & = & -\frac{15}{256} g_6 a^8 +\frac{1}{16} a^4 \\
& = & \frac{15}{256} g_6 a^2 \cdot \frac{8}{5g_6} (4t-a^2) + \frac{1}{16} a^4 \\
& = &  \frac{3}{8} a^2t-\frac{1}{32}a^4,
\een
therefore we get a closed formula for $f_2$:
\begin{equation*}
\begin{split}
f_2 = & \frac{3}{8}t\sum_{n \geq 0}
\frac{1}{2n+1} \binom{3n}{n} (\frac{5}{8}g_6)^{n} (4t)^{2n+1}\\
- & \frac{1}{32} \biggl[ \sum_{n \geq 0}
\frac{1}{2n+1} \binom{3n}{n} (\frac{5}{8}g_6)^{n} (4t)^{2n+1}\biggr]^2.
\end{split}
\end{equation*}
After using a combinatorial identity
\ben
\frac{3}{2} \sum_{n \geq 0} \frac{1}{2n+1} \binom{3n}{n} x^n
- \frac{1}{2} \biggl[ \sum_{n \geq 0}
\frac{1}{2n+1} \binom{3n}{n} x^n \biggr]^2
= \sum_{n\geq 0} \frac{1}{(n+1)(2n+1)}\binom{3n}{n} x^n
\een
we get:
\be
f_2 = \sum_{n\geq 0} \frac{2^n}{(n+1)(2n+1)}\binom{3n}{n} (5g_6)^nt^{2n+2}
\ee
The coefficients $\frac{2^n}{(n+1)(2n+1)}\binom{3n}{n}=
1$, $1$, $4$, $24$, $176$, $1456$, $13056$, $\cdots$ are
the sequence A000309 on \cite{Sloane}
with one of the combinatorial meaning being the number of
rooted planar bridgeless cubic maps with $2n$ nodes.
By comparing with \eqref{eqn:f2-g6},
\be
\frac{1}{n!}
\corr{p_2 \big(\frac{p_6}{6}\big)^n}_0^c(t)
=  \frac{6n}{n!}
\corr{\big(\frac{p_6}{6}\big)^n}_0^c(t)
= \frac{2^n}{(n+1)(2n+1)}\binom{3n}{n} \cdot 5^nt^{2n+2}.
\ee

Similarly,
by the third equation in the system,
\ben
&& f_4 = \frac{t}{g_6}  -\frac{9}{256}a^{10}g_6+\frac{3}{16}a^6-\frac{a^2}{4g_6} \\
& = & \frac{t}{g_6}  +\frac{9}{256}a^4g_6 \cdot \frac{8}{5g_6} (4t-a^2)-\frac{3}{16}
\cdot \frac{8}{5g_6} (4t-a^2) -\frac{a^2}{4g_6} \\
& = & -\frac{9}{160}a^6  -\frac{9}{40}a^4t
+\frac{1}{20}\frac{a^2}{g_6} -\frac{1}{5}\frac{t}{g_6}  \\
& = &  \frac{9}{160} \cdot \frac{8}{5g_6} (4t-a^2)  -\frac{9}{40}a^4t
+\frac{1}{20}\frac{a^2}{g_6} -\frac{1}{5}\frac{t}{g_6}  \\
& = & \frac{9}{40}a^4t - \frac{1}{25}\frac{a^2}{g_6} + \frac{4}{25} \frac{t}{g_6}.
\een
So we get a closed formula for $f_4$:
\ben
f_4 & = & \frac{9}{40}t \biggl(\sum_{n \geq 0} \frac{1}{2n+1} \binom{3n}{n}
(\frac{5}{8}g_6)^{n} (4t)^{2n+1} \biggr)^2 \\
& - & \frac{1}{25g_6}\sum_{n \geq 0} \frac{1}{2n+1}
\binom{3n}{n} (\frac{5}{8}g_6)^{n} (4t)^{2n+1} + \frac{4}{25} \frac{t}{g_6}.
\een
The first few terms of $f_4$ are:
\ben
f_4 & = & 2t^3+24g_6t^5 +600g_6^2t^7 + 20000g_6^3t^9 \\
& + & 780000g_6^4t^{11} + 33600000g_6^5t^{13}+ \cdots.
\een

Recall the definition of $f_4$ in this case is given by \eqref{eqn:f4-g6},
so we get
\be
\begin{split}
& \sum_{n \geq 0} \frac{g_6^n}{n!}
\corr{p_4 \big(\frac{p_6}{6}\big)^n}_0^c(t) \\
= & \frac{9}{40}t \biggl(\sum_{n \geq 0} \frac{1}{2n+1} \binom{3n}{n}
(\frac{5}{8}g_6)^{n} (4t)^{2n+1} \biggr)^2 \\
 - & \frac{1}{25g_6}\sum_{n \geq 0} \frac{1}{2n+1}
\binom{3n}{n} (\frac{5}{8}g_6)^{n} (4t)^{2n+1} + \frac{4}{25} \frac{t}{g_6}.
\end{split}
\ee
So we have solved the problem of finding $f_2$ and $f_4$ in this case from \S \ref{sec:Fat-g6}.

\subsection{Fat special deformation along the $g_5$-line}
\label{sec:Fat-g5}
For the action function
\ben
S(z) = -\frac{1}{2}z^2 + \frac{g_5}{5} z^5,
\een
we have
\ben
&& -\frac{S'(z)}{\sqrt{z^2-bz+c}}
= \frac{z-g_5z^4}{\sqrt{z^2-bz+c}} \\
& = & (1- g_5z^3) \cdot
\biggl(1+\frac{b}{2z}+\frac{3b^2-4c}{8z^2}
+ \frac{5b^3-12bc}{16z^3}
+ \frac{35b^4-120b^2c+48c^2}{128z^4}+\cdots \biggr) \\
& = & 1-g_5z^3 -\half g_5 b z^2- \frac{g_5}{8}(3b^2-4c)z
- g_5 \frac{5b^3-12bc}{16} +
 \cdots,
\een
and so the following two equations:
\be
Q(z)= 1-g_5z^3 -\half g_5 b z^2- \frac{g_5}{8}(3b^2-4c)z
- g_5 \frac{5b^3-12bc}{16}
\ee
From the equation
\ben
Q(z)^2(z^2-bz+c) = S'(z)^2 - P(z)
\een
for
\begin{align*}
S(z) & = - \half  z^2 +\frac{g_5}{5}z^5, &
P(z) & = t  - g_5z^4
\biggl(\frac{t}{z} + \frac{f_1}{z^2} + \frac{f_2}{z^3}
+ \frac{f_3}{z^4}  \biggr),
\end{align*}
we get:
\ben
&& (1-g_5z^3 -\half g_5 b z^2- \frac{g_5}{8}(3b^2-4c)z
- g_5 \frac{5b^3-12bc}{16} )^2(z^2-bz+c) \\
& = & (z-g_5z^4)^2
- 4(t-g_5tz^3-g_5f_1z^2-g_5f_2z-g_5f_3).
\een
By comparing the coefficients of $z^k$,
we get a system of six equations:
\ben
&& g_5b-\frac{3}{4}g_5^2c^2-\frac{35}{64}g_5^2b^4+\frac{15}{8}g_5^2b^2c = 0, \\
&& -g_5c+\frac{1}{4}g_5b^2-\frac{7}{32}g_5^2b^5-\frac{3}{2}g_5^2bc^2
+\frac{5}{4}g_5^2b^3c = 4g_5t, \\
&& \frac{1}{8}g_5b^3+\frac{1}{4}g_5^2c^3-\frac{35}{256}g_5^2b^6
-\frac{1}{2}g_5bc-\frac{21}{16}g_5^2b^2c^2+\frac{55}{64}g_5^2b^4c
= 4g_5f_1, \\
&& -b+\frac{5}{8}g_5b^4-\frac{25}{256}g_5^2b^7+g_5c^2
-\frac{9}{4}g_5b^2c+\frac{45}{64}g_5^2b^5c \\
&& \qquad \qquad -\frac{23}{16}g_5^2b^3c^2
+\frac{3}{4}g_5^2bc^3 = 4g_5f_2, \\
&& -\frac{5}{8} g_5b^3c+c+\frac{25}{256}g_5^2b^6c-\frac{15}{32}g_5^2b^4c^2
+\frac{9}{16}g_5^2b^2c^3+\frac{3}{2}g_5bc^2
= 4g_5f_3-4t.
\een
After rewriting the first two equations in the following form:
\ben
&& b=\frac{3}{4}g_5c^2+\frac{35}{64}g_5b^4-\frac{15}{8}g_5b^2c, \\
&& c= -4t+\frac{1}{4}b^2-\frac{7}{32}g_5b^5-\frac{3}{2}g_5bc^2
+\frac{5}{4}g_5b^3c,
\een
one can solve for $b$ and $c$ recursively from the initial values:
\begin{align*}
b & = O(t^2), & c & = -4t +O(t^2),
\end{align*}
we get:
\ben
&& b = 12g_5t^2+2592g_5^3t^5+1143072g_5^5t^8 +638254080g_5^7t^{11} , \\
&& c = -4t-252g_5^2t^4-91584g_5^4t^7-47262528g_5^6t^{10} - \cdots  , \\
\een
plug these into the next equations in the system one can compute $f_1$, $f_2$, $f_4$:
\ben
&&f_1 = 2g_5t^3 +216g_5^3t^6 +63504g_5^5t^9+ 26593920g_5^7t^{12} +\cdots, \\
&&f_2 = t^2 +36g_5^2t^5 +8640g_5^4t^8 + 3312576g_5^6t^{11}  + \cdots, \\
&&f_3 = 9g_5t^4 +1512g_5^3t^7+509328g_5^5t^{10}+ \cdots.
\een

\subsection{Fat special deformation in renormalized coupling constants}

As in the thin case,
we can work with the action function written in the renormalized coupling constants:
\ben
S(z) & = & \sum_{n \geq 1} (t_{n-1} -\delta_{n,2}) \frac{x^n}{n!} \\
& = & \sum_{k=0}^\infty  \frac{(-1)^k}{(k+1)!} (I_k+\delta_{k,1}) I_0^{k+1}
+ \sum_{n \geq 2} (I_{n-1} - \delta_{n,2}) \frac{(z-I_0)^n}{n!} \\
& = & \sum_{k=0}^\infty  \frac{(-1)^k}{(k+1)!} (I_k+\delta_{k,1}) I_0^{k+1}
+ \sum_{n \geq 2} (I_{n-1} - \delta_{n,2}) \frac{w^n}{n!},
\een
where $w=z-I_0$.
From now on we work in the $w$-coordinate.
By Theorem \ref{thm:Main},
\be \label{eqn:One-Cut-w}
(S')^2-4P = R(w)^2(w^2-4pw-4q),
\ee
where $w^2-4pw-4q=z^2-bz+c$.
To find $R$,
we use the expansion
\ben
\frac{1}{\sqrt{w^2-4pw-4q}}
& = & \frac{1}{w}\sum_{m=0}^\infty \binom{2m}{m} (\frac{p}{w}+\frac{q}{w^2})^m \\
& = & \frac{1}{w} \sum_{m=0}^\infty \binom{2m}{m} \sum_{k=0}^m \binom{m}{k}
\frac{p^{m-k} q^k}{w^{m+k}} \\
& = & \sum_{n=0}^\infty \sum_{k=0}^{[n/2]} \frac{(2n-2k)!}{k!(n-k)!(n-2k)!}
\frac{p^{n-2k}q^k}{w^{n+1}},
\een
and we have
\ben
&& - \frac{S'}{\sqrt{w^2-4pw-4q}} \\
& = & - \sum_{m \geq 1} (I_m -\delta_{m,1}) \frac{w^{m-1}}{(m-1)!}
\cdot \sum_{n=0}^\infty \sum_{k=0}^{[n/2]} \frac{(2n-2k)!}{k!(n-k)!(n-2k)!}
\frac{p^{n-2k}q^k}{w^{n+1}} \\
& = & - \sum_{m \geq 1}\sum_{n=0}^\infty \sum_{k=0}^{[n/2]} \frac{(2n-2k)!}{k!(n-k)!(n-2k)!}
p^{n-2k}q^k \frac{I_m-\delta_{m,1}}{(m-1)!} w^{m-n-2},
\een
and so
\be
R(w) = \sum_{n=0}^\infty \sum_{k=0}^{[n/2]} \frac{(2n-2k)!}{k!(n-k)!(n-2k)!}
p^{n-2k}q^k \sum_{m \geq n+2} \frac{I_m-\delta_{m,1}}{(m-1)!} w^{m-n-2}.
\ee
On the other hand, by \eqref{eqn:P},
\ben
P(z) & = &  -\sum_{n \geq 2}(g_n - \delta_{n,2}) z^{n-1} \sum_{k=1}^{n-1} \frac{f_k}{z^k} \\
& = &  -\sum_{n \geq 2}(g_n - \delta_{n,2}) \sum_{k=1}^{n-1} f_k (w+I_0)^{n-1-k} \\
& = &  -\sum_{n \geq 2}(g_n - \delta_{n,2}) \sum_{l=0}^{n-2} f_{n-1-l}
\sum_{j=0}^l \binom{l}{j}I_0^{l-j}w^j \\
& = & - \sum_{j\geq 0} w^j \sum_{l\geq j} \sum_{n \geq l+2}
(g_n - \delta_{n,2}) f_{n-1-l} \binom{l}{j},
\een
where we set $f_0=t$ as a convention.
Here $g_n$ can be changed into a formal power series in $\{I_m\}$
(\cite[Proposition 2.4]{Zhou-1D}):
\be
g_{n} = \frac{t_{n-1}}{(n-1)!} = \frac{1}{(n-1)!}
\sum_{k=0}^\infty \frac{(-1)^k I_0^k}{k!}I_{n-1+k}
\ee

Now we plug all these expression into \eqref{eqn:One-Cut-w} to get:
\ben
&& \biggl(\sum_{m \geq 1} (I_m -\delta_{m,1}) \frac{w^{m-1}}{(m-1)!} \biggr)^2 \\
& - &4\sum_{j\geq 0} w^j \sum_{l\geq j} \sum_{n \geq l+2}
\biggl(\frac{1}{(n-1)!}
\sum_{k=0}^\infty \frac{(-1)^k I_0^k}{k!}I_{n-1+k}
- \delta_{n,2}\biggr) f_{n-1-l} \binom{l}{j} \\
& = & \biggl( \sum_{n=0}^\infty \sum_{k=0}^{[n/2]} \frac{(2n-2k)!}{k!(n-k)!(n-2k)!}
p^{n-2k}q^k \sum_{m \geq n+2} \frac{I_m-\delta_{m,1}}{(m-1)!} w^{m-n-2}\biggr)^2\\
&&\cdot (w^2-4pw-4q).
\een

By expanding both sides as formal power series in $w$,
one gets a sequence of equations to solve for $p$, $q$ and to compute the $f_n$'s
in the same fashion as we have done in the $g_n$-coordinates.

To illustrate the idea,
let all $I_n$ vanish except for $I_0$ and $I_2$,
i.e.,
\ben
S = - \frac{1}{2} w^2 + I_2 \frac{w^3}{3!},
\een
and so we have
\ben
&& - \frac{S'}{\sqrt{w^2-4pw-4q}}
= \frac{w- \frac{I_2}{2} w^2}{w\sqrt{1-\frac{4p}{w} - \frac{4q}{w^2}}}\\
& = & (1-\frac{I_2}{2} w) \cdot (1 + \frac{2p}{w} +\cdots) \\
& = & - \frac{I_2}{2}w + (1- I_2p ) + \cdots.
\een
From this we get
\ben
R(w) = 1- I_2p-\frac{I_2}{2}w.
\een
On the other hand, we have
\begin{align*}
g_3 & = \frac{I_2}{2}, &g_2 & = - I_0I_2, & g_1& =  \frac{1}{2}I_0^2I_2,
\end{align*}
and all other $g_n=0$, therefore,
\ben
P & = & (1-g_2) t- g_3 tz - g_3 f_1 \\
& = & (1+I_0I_2)t-  \frac{I_2}{2} t(w+I_0) - \frac{I_2}{2} f_1 \\
& = & (1+\frac{1}{2}I_0I_2) t- \frac{I_2}{2}t w - \frac{I_2}{2}f_1.
\een
Plug all the above into \eqref{eqn:One-Cut-w}:
\ben
&& (w-\frac{I_2}{2}w^2)^2-\biggl(1- I_2p-\frac{I_2}{2}w\biggr)^2(w^2-4pw-4q) \\
& = & 4 \biggl((1+\frac{1}{2}I_0I_2) t- \frac{I_2}{2}t w - \frac{I_2}{2}f_1
\biggr).
\een
By comparing the coefficients we get a system of equations:
\ben
&& I_2^2p^2+I_2^2q-2I_2 p = 0, \\
&& 4I_2^2p^3+4I_2^2pq-8I_2p^2-4I_2q+4p = -2I_2t, \\
&& 4I_2^2p^2q-8I_2pq+4q = 4 \biggl((1+\frac{1}{2}I_0I_2) t  - \frac{I_2}{2}f_1 \biggr).
\een
The situation is exactly like in \S \ref{sec:Fat-g3-again}.

\section{Concluding Remarks}
\label{sec:Conclusion}

Duality is an idea that has been widely used
in mathematics and string theory.
In this work we have just witnessed more examples
of duality in the framework of random matrix models.
The original goal for understanding the duality between
matrix models with $N \to \infty$ with matrix models with $N =1$
is almost fulfilled by considering matrix models for finite $N$
for all $N$.
It is natural to expect that for matrix models of finite $N$,
they should have some features of the theory as $N \to \infty$,
and also some features of the theory as $N=1$.
It is fortunate that such ideas can be actually worked out.
The remaining next step will be to reexamine
the procedure of taking $N \to \infty$ and in particular
the double scaling limit from our new perspectives.

In the process of studying the $N=1/N \to \infty$-duality
for matrix models,
we have found a surprising duality between the enumeration problems
for ordinary (thin) graphs and fat graphs.
More surprisingly,
this fat/thin duality comes by with a unification
by working with Hermitian matrix models for all finite $N$.

Another remarkable duality is the discrete mathematics
of enumerations of fat or thin graphs leading to various generalizations
of the Catalan numbers is dual to the algebraic geometry of some special plane curves defined
using formal power series with integral coefficients.

It is fair to say that all these dualities are emergent.
They are possible because even though we working with only 
matrices of finite sizes,
we still have an infinite degree of freedom by working with all symmetric functions.
We would like to mention that the motivation for the author
to work on emergent geometry is the wish to answer the following question:
Is mirror symmetry an emergent phenomenon?
We believe the answer is yes.
In some sense in this paper we are working in a universal setting
by working on the space of
all possible coupling constants.
It should be possible to get special examples
that are dual to the local Gromov-Witten theory of toric Calabi-Yau
three-folds.
When this is done,
one will obtain a way to embed local mirror symmetry into the emergent
geometry of matrix models.
In this way, 
local mirror symmetry will be unified with many other examples 
in the even more framework of emergent geometry of KP hierarchy explained
in \cite{Zhou-KP-Emergent} and \cite{Zhou-KP-Emergent-II}.
We expect that mirror symmetry for compact Calabi-Yau 3-folds also  
fits in a similar picture of emergent geometry,
but more complicated integrable hierarchies are needed for this purpose. 

\vspace{.2in}
{\bf Acknowledgements}.
The author is partly supported by the NSFC grant 11661131005.
Some of the results on thin emergent geometry were obtained when the author was preparing for a talk
at Russian-Chinese Conference on Integrable
Systems and Geometry, held at Euler International Mathematical Institute,
St. Petersburg. The author thanks the organizers
and the participants for the hospitality
enjoyed at this conference.
As is clear from the text we have consulted frequently
the Online Encyclopedia of Integer Sequences.


\begin{thebibliography}{99}

\bibitem{BCOV}
M. Bershadsky, S. Cecotti, H. Ooguri, C. Vafa, 
{\em Kodaira-Spencer theory of gravity and exact results for quantum string amplitudes},
 Comm. Math. Phys. 165 (1994), no. 2, 311--427.

\bibitem{BIZ}
D. Bessis, C. Itzykson, J.B. Zuber,
{\em Quantum field theory techniques in graphical enumeration},
Advances in Applied Mathematics, Vol. 1, No.2 (1980), 109--157.

\bibitem{BKSWZ}
E. Br\'{e}zin, V. Kazakov, D. Serban, P. Wiegmann, A. Zabrodin (Ed.),
{\em Applications of random matrices in physics}.
Proceedings of the NATO Advanced Study Institute held in Les Houches, June 6--25, 2004.
 NATO Science Series II: Mathematics, Physics and Chemistry, 221. Springer, Dordrecht, 2006.

\bibitem{DiF}
P. Di Francesco, 
{\em 2D quantum gravity, matrix models and graph combinatorics}. 
Applications of random matrices in physics, 33--88, 
NATO Sci. Ser. II Math. Phys. Chem., 221, Springer, Dordrecht, 2006.

\bibitem{DGZ}
P. Di Francesco, P. Ginsparg, J. Zinn-Justin,
{\em 2D gravity and random matrices},
Phys. Rep.  254  (1995),  no. 1--2, 133 pp.

\bibitem{Dij-Vaf1}
R. Dijkgraaf, C. Vafa, 
{\em Matrix models, topological strings, and supersymmetric gauge theories},
Nuclear Phys. B 644 (2002), no. 1-2, 3--20.
 

  
\bibitem{Dij-Vaf2}
R. Dijkgraaf, C. Vafa,
{\em On geometry and matrix models},
Nuclear Phys. B 644 (2002), no. 1-2, 21--39. 

\bibitem{DVV}
R. Dijkgraaf, H. Verlinde, E. Verlinde,
{\em Loop equations and Virasoro constraints in nonperturbative two-dimensional quantum gravity}.
Nuclear Phys. B 348 (1991), no. 3, 435-456.

\bibitem{Dij-Wit}
R. Dijkgraaf, E. Witten, 
{\em Topological gauge theories and group cohomology},
Comm. Math. Phys. 129 (1990), no. 2, 393--429.
 
\bibitem{EO}
B. Eynard, N. Orantin
{\em Invariants of algebraic curves and topological expansion}.
Commun. Number Theory Phys. 1 (2007), no. 2, 347-452.

\bibitem{Gessel}
I. M. Gessel, 
{\em A short proof of the Deutsch-Sagan congruence for connected non crossing graphs}, 
arXiv:1403.7656. 

\bibitem{Gro-Tay}
D.J. Gross, W. Taylor,
{\em Two-dimensional QCD is a string theory},
Nuclear Phys. B  400  (1993),  no. 1-3, 181--208.

\bibitem{Har-Zag}
J. Harer, D. Zagier,
{\em The Euler characteristic of the moduli space of curves},
Invent. Math. 85 (1986), no. 3, 457--485.

\bibitem{Kon} M. Kontsevich, {\em
Intersection theory on the moduli space of curves and the matrix
Airy function}. Comm. Math. Phys. {\bf 147} (1992), no. 1, 1--23.

\bibitem{Krasko-Omel}
E. Krasko, A. Omelchenko, 
{\em Enumeration of r-regular maps on the torus. 
Part I: Enumeration of rooted and sensed maps}, 
arXiv:1709.03225.

\bibitem{Marino}
M.~Mari\~no,
{\em Matrix models and topological strings}. 
Applications of random matrices in physics, 319--378, 
NATO Sci. Ser. II Math. Phys. Chem., 221, Springer, Dordrecht, 2006.

\bibitem{Montgomery}
H.L. Montgomery, 
{\em The pair correlation of zeros of the zeta function}. 
Analytic number theory (Proc. Sympos. Pure Math., Vol. XXIV, 
St. Louis Univ., St. Louis, Mo., 1972), pp. 181--193.
Amer. Math. Soc., Providence, R.I., 1973.

\bibitem{Mulase}
M. Mulase, 
{\em Algebraic theory of the KP equations}. 
Perspectives in mathematical physics, 151--217, 
Conf. Proc. Lecture Notes Math. Phys., III, Int. Press, Cambridge, MA, 1994. 

\bibitem{Mullin-etal}
R. C. Mullin, E. Nemeth P. J. Schellenberg, 
{\em The enumeration of
almost cubic maps}, pp. 281--295 in
Proceedings of the Louisiana Conference on Combinatorics,
Graph Theory and Computer Science. Vol. 1, edited R. C. Mullin et al., 1970.


\bibitem{Penner}
R.C. Penner, 
{\em Perturbative series and the moduli space of Riemann surfaces},
J. Differential Geom. 27 (1988), no. 1, 35--53.


\bibitem{Shaw-Tu-Yen}
J. C. Shaw, M. H. Tu,  H. C. Yen,
{\em Matrix models at finite N and the KP hierarchy},
Chinese Journ. Phys. Vol. 30, No. 4,
497--507, 1992.

\bibitem{Sloane}
N. J. A. Sloane, The On-Line Encyclopedia of Integer Sequences, http://oeis.org, 2013.





\bibitem{tH}
G. 't Hooft,
{\em A planar diagram theory for strong interactions},
Nucl. Phys. B 72, 461--473, 1974.


\bibitem{Tutte}
W.T. Tutte,
{\em A census of planar maps}£¬
Canad. J. Math. 15 (1963), 249--271.

\bibitem{Wang-Zhou}
Z.~Wang, J.~Zhou,
{\em A unified approach to holomorphic anomaly equations and quantum spectral curves},
 arXiv:1808.05343.
 
\bibitem{Witten1} E. Witten,
{\it Two-dimensional gravity and intersection theory on moduli
space}, Surveys in Differential Geometry, vol.1, (1991) 243--310.

\bibitem{Witten2} E. Witten,
{\it Algebraic geometry associated with matrix models of two-dimensional gravity}.
Topological methods in modern mathematics (Stony Brook, NY 1991), 235-269, Publish or Perish,
Houston, TX, 1993.

\bibitem{Zhou-Mirror}
J.~Zhou,
{\em Some integrality properties in local mirror symmetry},
arXiv:1005.3243.

\bibitem{Zhou-WK}
J. Zhou, {\em Topological recursions of Eynard-Orantin type 
for intersection numbers on moduli spaces of curves},
Lett. Math. Phys. 103 (2013), no. 11, 1191--1206.

\bibitem{Zhou-Quan-Def}
J. Zhou,
{\em Quantum deformation theory of the Airy curve and mirror symmetry of a point},
arXiv:1405.5296.
 

\bibitem{Zhou-1D}
J.~Zhou {\em On topological 1D gravity. I}, arXiv:1412.1604, 2014.

\bibitem{Zhou-Emergent}
J.~Zhou,
{\em Emergent geometry and mirror symmetry of a point},
arXiv:1507.01679, 2015.

\bibitem{Zhou-KP-Emergent}
J.~Zhou,
{\em Emergent geometry of KP hierarchy},
arXiv:1511.08257.

\bibitem{Zhou-KP-Emergent-II}
J.~Zhou,
{\em Emergent geometry of KP hierarchy. II.},
arXiv:1512.03196. 

\bibitem{Zhou-Mat-Mod}
J.~Zhou,
{\em Hermitian one-matrix model and KP hierarchy},
arXiv:1809.07951, 2018.

\bibitem{Zhou-Fat-Thin}
J.~Zhou, 
{\em Genus expansions of Hermitian one-matrix models: Fat graphs vs. thin graphs},
 arXiv:1809.10870.


\end{thebibliography}
\end{document}